\tikzset{
znode/.style={
  circle,
  inner sep=0pt,
  text width=6mm,
  align=center,
  draw=black,
  fill=white,
  scale=0.8
  }
}
\tikzset{
xnode/.style={
  circle,
  inner sep=0pt,
  text width=6mm,
  align=center,
  draw=black,
  fill=gray!40,
  scale=0.8
  }
}
\tikzset{
hnode/.style={
  rectangle,
  draw=black,
  fill=white
  }
}
\newcommand*{\ceil}[1]{{\left\lceil{#1}\right\rceil}}
\newcommand*{\floor}[1]{{\left\lfloor{#1}\right\rfloor}}
\newcommand*{\mat}[1]{{#1}}
\newcommand*{\B}[1]{{#1}}
\newcommand*{\C}{{\mathbb{C}}}
\newcommand*{\R}{{\mathbb{R}}}
\newcommand*{\Z}{{\mathbb{Z}}}
\newcommand*{\F}{{\mathbb{F}}}
\newcommand*{\J}{{\mathbb{J}}}
\newcommand*{\W}{{\mathbb{W}}}
\newcommand*{\N}{{\mathbb{N}}}
\newcommand*{\otherwise}{{\text{otherwise}}}
\newcommand*{\and}{{\text{and}}}
\newcommand*{\where}{{\text{where}}}
\newcommand*{\actson}[2]{\ensuremath{{#1}^{\langle #2 \rangle}}}
\def\set{\@ifstar\@setpred\@setenum}
\newcommand*{\@setpred}[2]{\left\{{#1}\;\middle|\;{#2}\right\}}
\newcommand*{\@setenum}[1]{\left\{{#1}\right\}}
\newcommand*{\binaryset}{\set{0,1}}
\DeclareMathOperator{\spn}{span}
\DeclareMathOperator{\nnz}{nnz}
\newcommand*{\rxp}{\hat{\mat{X}}}
\newcommand*{\rxm}{\check{\mat{X}}}
\newcommand*{\rzp}{\hat{\mat{Z}}}
\newcommand*{\rzm}{\check{\mat{Z}}}
\newcommand{\symblock}[3]{
\ifnum#2>#3
    {{#1}_{{#3}{#2}}^T}
\else
    {{#1}_{{#2}{#3}}}
\fi
}
\title{Simulating Clifford Circuits with Gaussian Elimination}
\author{
Yuchen Pang \and Edgar Solomonik
\thanks{Siebel School of Computing and Data Science, University of Illinois Urbana-Champaign, Urbana, IL 61801, USA (\email{yuchenp2@illinois.edu}, \email{solomon2@illinois.edu})}
}
\begin{document}
\maketitle

\begin{abstract}
Quantum circuits are considered more powerful than classical circuits and require exponential resources to simulate classically. Clifford circuits are a special class of quantum circuits that can be simulated in polynomial time but still show important quantum effects such as entanglement. In this work, we present an algorithm that simulates Clifford circuits by performing Gaussian elimination on a modified adjacency matrix derived from the circuit structure. Our work builds on an ZX-calculus tensor network representation of Clifford circuits that reduces to quantum graph states. We give a concise formula of amplitudes of graph states based on the LDL decomposition of matrices over GF(2), and use it to get efficient algorithms for strong and weak simulation of Clifford circuits using tree-decomposition-based fast LDL algorithm.
The complexity of our algorithm matches the state of art for weak graph state simulation and improves the state of art for strong graph state simulation by taking advantage of Strassen-like fast matrix multiplication. Our algorithm is also efficient when computing many amplitudes or samples of a Clifford circuit.
Further, our amplitudes formula provides a new characterization of locally Clifford equivalent graph states as well as an efficient protocol to learn graph states with low-rank adjacency matrices.

% By utilizing efficient sparse Gaussian elimination algorithms, we expect to achieve good performance on Clifford circuits with sparse structures that are common in practice. Moreover, by representing general quantum circuits using Clifford gates and T gates, this algorithm could potentially accelerate general quantum circuit simulation.
\end{abstract}

\begin{keywords}
Quantum computation, Clifford circuit, ZX-calculus, Tree decomposition, Gauss-Jordan elimination, Locally Clifford equivalent graph states, Graph state learning
\end{keywords}

\begin{MSCcodes}
68Q12, 81P68, 15A09, 05C50
\end{MSCcodes}

\tableofcontents

\section{Introduction}
Circuits consisting of Clifford gates (e.g. Hadamard gate, CZ gate, and the phase gate $\mat{R}_Z(\pi/2)$) are of wide interest in quantum computing.
This gate set would be universal for quantum computing provided inclusion of T gates (i.e. the phase gate $\mat{R}_Z(\pi/4)$).
Despite this restriction, circuits consisting of only Clifford gates (Clifford circuits) suffice to express important quantum circuits, such as stabilizer codes for error correction, and graph states, which have shown to provide an asymptotic depth advantage over any classical circuit~\cite{bravyi2018quantum} (the first quantum advantage result).
In addition, Clifford circuits are known to admit polynomial-time simulation algorithms.
This celebrated result is known as the Gottesman-Knill theorem~\cite{gottesman1998heisenberg,aaronson2004improved}.
A mathematical intuition for this result is that any $n$-qubit Clifford state can be associated with an $n$-dimensional quadratic form~\cite{dehaene2003clifford}.

The problem of Clifford circuit simulation can be defined as follows.
\begin{definition}[Clifford circuit simulation]\label{def:circ-sim}
Let $\mat{U}$ be an $n$-qubit Clifford circuit. The \textit{strong simulation} of $\mat{U}$ takes any $\mat{x}\in\set{0,1}^n$ as input, and outputs $\mel{\mat{x}}{\mat{U}}{0^n}$. The \textit{weak simulation} of $\mat{U}$ samples $\mat{x}\in\set{0,1}^n$ according to the distribution $P(\mat{x})=\abs{\mel{\mat{x}}{\mat{U}}{0^n}}^2$.
\end{definition}
Here we adopt the bra-ket notation from physics, where $\bra{\mat{x}}$ and $\ket{0^n}$ can be viewed as row and column vectors of size $2^n$ respectively and $\mat{U}$ can be viewed as a $2^n$-by-$2^n$ matrix. Then $\mel{\mat{x}}{\mat{U}}{0^n}$ represents the inner product between $\bra{\mat{x}}$ and $\mat{U}\ket{0^n}$. However, as mentioned above, such matrices and vectors in Clifford circuits admits polynomial-sized representation due to the Gottesman-Knill theorem.

The Clifford circuit simulation problem is closely related to basic linear algebra problems such as matrix-matrix multiplication and linear system solving.
State of the art simulation algorithms for Clifford circuits~\cite{qassim2021improved,gosset2024fast,gidney2021stim} are based on an algebraic formalism known as stabilizer tableaux~\cite{bravyi2016improved}.
A tableau encodes an $n$-qubit Clifford unitary using a pair of matrices and vectors of size $2n$. It has been shown that all Clifford circuit operations such as gate applications and measurements could all be implemented as standard matrix operations on tableaux, where the most costly part is matrix-matrix products~\cite{gosset2024fast}.
From the perspective of complexity, it is known that simulating Clifford circuits is complete for the complexity class $\oplus L$~\cite{aaronson2004improved}, which is the complexity class of solving linear systems over $\F_2$~\cite{damm1990problems}.
An algorithm to solve symmetric linear systems over $\F_2$ with zero diagonals via graph state simulation is proposed by Gosset et al.~\cite[Theorem 26]{gosset2024fast}.
Our work can be viewed as the other direction of this reduction. In particular, we provide a complete reduction from Clifford circuit simulation problem to the LDL decomposition of a symmetric matrix over $\F_2$ and give time complexities of circuit simulation based on the fast LDL algorithm~\cite{solomonik2025fast}. Our algorithm performs better than previous algorithms when calculating many amplitudes/samples by utilizing the Strassen-like fast matrix multiplication.

To represent the Clifford circuit as a matrix so that an LDL decomposition of it gives the simulation result, 
we leverage a modern formalism for quantum circuits known as ZX-calculus.
In particular, we use a canonical form for circuits derived in this formalism known as a graph-like ZX-diagram~\cite{duncan2020graph}, and then an $n$-qubit circuit consisting of $m$ gates could be converted to a \textit{phased graph state} with $N=O(m+n)$ vertices. This reduction is described in \cref{sec:circ-as-pgs}.

The core of this paper is the analysis of phased graph states and the corresponding simulation algorithm. Phased graph state are defined as generalized graph states characterized by matrices in the set
\[
\J_n=\set*{\mat{A}\in\Z_4^{n\times n}}{\mat{A}=\mat{A}^T,\,O(\mat{A})\in\F_2^{n\times n}},
\]
where $O(\mat{A})$ is off-diagonals of $\mat{A}$ and informally we treat $\Z_4=\set{0,1,2,3}$ and $\F_2=\set{0,1}$. For $\mat{A}\in\J_n$, the phased graph state $\ket{\mat{A}}$ is given by
\[
\ket{\mat{A}}=\bigotimes_{j=1}^n\rzm^{a_{jj}}\ket{G},
\]
where $\rzm=\mat{R}_z(-\pi/2)$ is a phase gate, $a_{jj}$ is the $j$'th diagonal element of $\mat{A}$,
and $\ket{G}$ is the standard graph state whose adjacency matrix is $O(\mat{A})$.

In \cref{sec:pgs-ge}, we define a generalized Gauss-Jordan elimination process on $\mat{A}\in\J_n$ and show that this process corresponds to a series of vertex and edge complementations~\cite{danielsen2005self,van2004graphical,duncan2013pivoting} of $\ket{G}$. Then we provide a simple and explicit formula for the amplitudes of Hadamard-transformed phased graph states as following.
\begin{theorem}[Informal form of \cref{thm:amp-formula}]\label{thm:amp-formula-intro}
Let $\mat{A}\in\J_n$. Then we have
\begin{equation}
\bra{\mat{x}}\mat{H}^{\otimes n}\ket{\mat{A}}
=\begin{cases}
2^{-k/2}
i^{-(\mat{x}+\mat{v})^T\mat{B}(\mat{x}+\mat{v})} & \mat{x}\oplus\mat{w} \in\spn(\omega_1(\mat{A})) \\
0 & \text{otherwise}
\end{cases}
\end{equation}
where $\omega_1(\mat{A})=\mat{A}\bmod2$, $k$ is rank of $\omega_1(\mat{A})$ over $\F_2$, and $\mat{B}\in\J_n$, $\mat{w}\in\F_2^n$, $\mat{v}\in\F_2^n$ could be calculated along with an LDL factorization of $\omega_1(\mat{A})$.
In particular, if the leading block of $A$, $A_{11}$, is full rank and we have $\rank(A)=\rank(A_{11})$, then $B$ satisfies 
\begin{align}
\omega_1(B)&=\begin{bmatrix} A_{11}^{\#} & B_{21}^T  \\  B_{21} & 0 \end{bmatrix}, \quad  B_{21}=\omega_1(A_{21}A_{11}^{\#}),
\end{align}
where $A_{11}^{\#}$ is the inverse of $A_{11}$ over $\mathbb{F}_2$.
\end{theorem}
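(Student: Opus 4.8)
The plan is to reduce the amplitude to a quadratic Gauss sum over $\F_2^n$ valued in $\Z_4$, and then evaluate that sum by diagonalizing the quadratic form along an LDL factorization of $\omega_1(\mat{A})$. First I would obtain an explicit computational-basis expansion of the phased graph state. From the definition $\ket{\mat{A}}=\bigotimes_j\rzm^{a_{jj}}\ket{G}$, expanding $\ket{G}$ contributes the phase $(-1)^{\sum_{i<j}O(\mat{A})_{ij}z_iz_j}$ from the $\mat{CZ}$ edges, the phase gates $\rzm^{a_{jj}}$ contribute $i^{-a_{jj}z_j}$, and since $z_j^2=z_j$ on $\set{0,1}$ and $(-1)=i^{-2}$ these combine exactly into the $\Z_4$ form $\mat{z}^T\mat{A}\mat{z}=\sum_j a_{jj}z_j+2\sum_{i<j}O(\mat{A})_{ij}z_iz_j$, giving
\[
\ket{\mat{A}}=2^{-n/2}\sum_{\mat{z}\in\set{0,1}^n} i^{-\mat{z}^T\mat{A}\mat{z}}\,\ket{\mat{z}}.
\]
Applying $\mat{H}^{\otimes n}$ and pairing with $\bra{\mat{x}}$ turns the discrete Fourier sign into a linear term and reduces the whole problem to
\[
\bra{\mat{x}}\mat{H}^{\otimes n}\ket{\mat{A}}=2^{-n}\sum_{\mat{z}\in\set{0,1}^n} i^{-\left(\mat{z}^T\mat{A}\mat{z}+2\mat{x}^T\mat{z}\right)}.
\]

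The core step is to evaluate this Gauss sum by induction on $n$, peeling off one LDL pivot at a time; I would organize the induction to mirror the generalized Gauss-Jordan elimination of \cref{sec:pgs-ge}, using that each elimination step is a local Clifford operation so the amplitude transforms controllably. Three pivot types arise. If the leading diagonal $a_{11}$ is odd it is a unit in $\Z_4$, and completing the square in $z_1$ decouples it, contributing a single-variable sum $1+i^{-(a_{11}+2(\cdots))}$ of modulus $\sqrt2$ together with a phase and a shift of the remaining linear term; this is where the $A_{11}^{\#}$ block and the shift vector $\mat{v}$ enter. If the leading diagonal is even but its row is nonzero mod $2$, I would use a hyperbolic $2\times2$ pivot, whose sum $\sum_{z_1,z_2}(-1)^{z_1z_2+x_1z_1+x_2z_2}=2(-1)^{x_1x_2}$ again has modulus $2$ over its two variables and emits a cross phase. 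Finally, each direction in $\ker\omega_1(\mat{A})$ leaves a variable appearing only linearly, and summing over it gives $2$ or $0$ according to a parity constraint on $\mat{x}$; collecting these $n-k$ constraints yields exactly $\mat{x}\oplus\mat{w}\in\spn(\omega_1(\mat{A}))$, with $\mat{w}$ recording the even-diagonal shifts. Counting moduli, the $k$ rank directions give $2^{k/2}$ and the $n-k$ kernel directions give $2^{n-k}$ when the constraint holds, so dividing by $2^n$ leaves $2^{-k/2}$, while the accumulated phase assembles into $i^{-(\mat{x}+\mat{v})^T\mat{B}(\mat{x}+\mat{v})}$ for a single $\mat{B}\in\J_n$ read off from the factorization. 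For the explicit special case, $A_{11}$ invertible together with $\rnk(\mat{A})=\rnk(A_{11})$ forces the Schur complement $A_{22}-A_{21}A_{11}^{\#}A_{12}$ to vanish mod $2$, so the trailing block is cleared; direct completion of the square (the quadratic-form duality identity) then gives leading block $A_{11}^{\#}$, off-diagonal $B_{21}=\omega_1(A_{21}A_{11}^{\#})$, and zero trailing block.

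The main obstacle I anticipate is the careful bookkeeping of phases and shifts across the three pivot types: tracking how the completed-square substitutions in the odd and hyperbolic cases propagate into $\mat{v}$, how even-diagonal entries feed into $\mat{w}$, and proving that the accumulated quadratic phase is exactly the $\Z_4$ form $(\mat{x}+\mat{v})^T\mat{B}(\mat{x}+\mat{v})$ for one $\mat{B}\in\J_n$ rather than an uncontrolled cocycle. Establishing the shift-invariance identity $Q(\mat{z}\oplus\mat{u})-Q(\mat{z})\equiv\text{const}\pmod 4$ for $\mat{u}\in\ker\omega_1(\mat{A})$ cleanly, despite XOR failing to be $\F_2$-linear on the integer-lifted form, is the most delicate point and underlies both the support condition and the well-definedness of the final phase.
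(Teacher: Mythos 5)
Your proposal is correct in outline, but it takes a genuinely different route from the paper. You evaluate the amplitude as a $\Z_4$-valued quadratic Gauss sum $2^{-n}\sum_{\mat{z}}i^{-(\mat{z}^T\mat{A}\mat{z}+2\mat{x}^T\mat{z})}$ and sum out variables pivot by pivot, whereas the paper never manipulates the scalar sum: it proves a state-level identity (\cref{lem:gj-pgs}) showing that each elimination step of the generalized Gauss--Jordan process (\cref{alg:gj}, \cref{lem:gj_cor}) is implemented on the phased graph state by vertex or edge complementation (\cref{thm:vertex-complementation}, \cref{thm:edge-complementation}) together with a Hadamard and Pauli correction, so that $(\mat{H}^{\otimes k}\otimes\mat{I}^{\otimes(n-k)})\ket{\mat{A}}=\alpha\mat{X}^{\bar v}\mat{Z}^{\bar u}\ket{\mat{B}}$, and then handles the trailing $n-k$ Hadamards via \cref{lem:null}, using the saddle-point structure of $\mat{B}$. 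Your three pivot types are exactly the scalar shadows of the paper's machinery: the odd-diagonal completion of the square corresponds to vertex complementation, the hyperbolic $2\times2$ pivot to edge complementation, and your kernel-direction parity constraints to \cref{lem:null}; the moduli count $2^{-n}\cdot 2^{k/2}\cdot 2^{n-k}=2^{-k/2}$ and the support condition come out identically. What your route buys is self-containedness: no complementation theorems, no Euler decomposition of $\mat{H}$, only elementary algebra over $\Z_4$, and the full-rank special case falls out of one completion of the square. What the paper's route buys is the reusable intermediate identity of \cref{lem:gj-pgs}, which it later exploits for the LC-equivalence characterization (\cref{thm:lc-eq}) and which plugs directly into the implicit LDL machinery for the complexity results. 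Two points you flag as delicate are indeed where the paper spends its effort, and your plan would need to discharge them: the mod-2 carries (via $(u+v)\bmod 2=u+v-2uv$) that turn Schur-complement updates into the $\Omega_{1,2}$-corrected form and generate the vectors $\mat{v}$ and $\mat{w}$ --- this is precisely the second-bit bookkeeping of \cref{alg:gj} and \eqref{eq:gjl4} --- and the equivalence of the affine constraint $\mat{x}_2=\omega_1(\mat{B}_{21}(\mat{x}_1+\mat{v})+\mat{z})$ with $\mat{x}\oplus\mat{w}\in\spn(\omega_1(\mat{A}))$, which the paper gets from the domain--range exchange property of the principal pivot transform but which in your setting follows directly from $\rank(\omega_1(\mat{A}))=k$, since the first $k$ columns then span the whole column space and $\omega_1(\mat{A}_{21}\mat{A}_{11}^{\#})$ maps the leading coordinates to the trailing ones. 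Neither gap is a flaw in the approach; both are resolvable with the bookkeeping you anticipate.
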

Prior works have observed that these amplitudes can be computed by Gaussian elimination~\cite{aaronson2004improved,guan2019stabilizer}, and the amplitude formula for these states has also been derived based on the Tutte polynomial~\cite{mann2021simulating}.
However, we are not aware of a prior algebraic expression explicitly based on Gaussian elimination.
Our formula not only allows us to derive simulation algorithms using standard linear algebra routines, but also has other theoretical implications as discussed in \cref{sec:apps-other}.

In \cref{sec:pgs-sim}, we 
give an algorithm to solve the phased graph state simulation problem based on \cref{thm:amp-formula-intro} and the tree-decomposition-based fast LDL algorithm~\cite{solomonik2025fast}. The phased graph state simulation problem is defined as following.
\begin{definition}[Phased graph state simulation]\label{def:pgs-sim-intro}
Let $\mat{A}\in\J_n$. The \textit{strong simulation} of $\ket{\mat{A}}$ takes any $\mat{x}\in\F_2^n$ as input and calculates
\begin{equation}
\mel{\mat{x}}{\mat{H}^{\otimes n}}{\mat{A}}.\label{eq:pgs-strong-sim-intro}
\end{equation}
The \textit{weak simulation} of $\ket{\mat{A}}$ takes as input $S\subseteq[n]$ and $\mat{y}\in\F_2^n$ as input, and samples $\mat{x}\in\mathcal{X}_{S,\mat{y}}=\set*{\mat{x}\in\F_2^n}{\forall i\in S,\,x_i=y_i}$ according to the distribution
\begin{equation}
\forall \mat{x} \in \mathcal{X}_{S,y},\;p(\mat{x})\propto\abs{\mel{\mat{x}}{\mat{H}^{\otimes n}}{\mat{A}}}^2.\label{eq:pgs-weak-sim-intro}
\end{equation}
\end{definition}
By expressing the dominant computations in the simulation task as matrix multiplications, our algorithm exploits the sparsity structure of the graph and the efficiency of fast matrix multiplication, achieving the following time complexity.
\begin{theorem}[Summary of \cref{thm:weak-sim-pgs}, \cref{thm:strong-sim-pgs} and \cref{thm:strong-sim-pgs-fixed-bits}]
Consider any $y\in\F_2^n$, $S\subseteq[n]$, $A\in\J_n$ and a given tree decomposition of the graph corresponding to $A$ with width $\tau$. Let $\ell=n-\abs{S}$.
Then strong simulation of $A$ with $k\in\N$ inputs from $\mathcal{X}_{S,y}$ or weak simulation of $A$ for $k$ samples from $\mathcal{X}_{S,y}$ takes $O(n\tau^{\omega-1}+\min(kn\tau^{\omega-2},\ell n\tau^{\omega-2}+k\ell^{\omega-1}))$
time\footnote{$\omega$ denotes the matrix multiplication exponent.}.
\end{theorem}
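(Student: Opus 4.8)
The plan is to reduce both simulation tasks to one LDL factorization of $\omega_1(\mat{A})=\mat{A}\bmod 2$ followed by a batch of structured linear-algebra queries, and then to bound each stage against the tree-decomposition width $\tau$. The three summands correspond to distinct stages: $n\tau^{\omega-1}$ is a one-time preprocessing cost, while $kn\tau^{\omega-2}$ and $\ell n\tau^{\omega-2}+k\ell^{\omega-1}$ are the costs of two competing query strategies, a direct batched-query strategy and a restrict-then-query strategy, so the final bound is the minimum of the two.

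First I would run the preprocessing. By \cref{thm:amp-formula-intro}, every amplitude $\mel{\mat{x}}{\mat{H}^{\otimes n}}{\mat{A}}$ is determined by the tuple $(\mat{B},\mat{v},\mat{w},k)$ together with a basis of $\spn(\omega_1(\mat{A}))$, all of which are by-products of a single LDL factorization of $\omega_1(\mat{A})\in\F_2^{n\times n}$. Applying the tree-decomposition-based fast LDL algorithm of~\cite{solomonik2025fast} to a width-$\tau$ decomposition produces these quantities, together with the triangular factor respecting the elimination order of the supernodes, in $O(n\tau^{\omega-1})$ time; this term is independent of $k$, $S$, and $\mat{y}$ and accounts for the first summand.

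Next I would handle the strong-simulation queries. With the factorization in hand, answering a query at input $\mat{x}$ reduces to (i) testing the support condition $\mat{x}\oplus\mat{w}\in\spn(\omega_1(\mat{A}))$, a structured triangular solve against the computed factor, and (ii) evaluating the $\Z_4$-valued quadratic form $(\mat{x}+\mat{v})^T\mat{B}(\mat{x}+\mat{v})$ to obtain the phase, a structured matrix--vector product. Assembling the $k$ inputs into an $n\times k$ matrix turns both (i) and (ii) into block operations indexed by the supernodes of the decomposition: each of the $O(n/\tau)$ supernodes contributes a rectangular product of a $\tau\times\tau$ factor block by a $\tau\times k$ input block, which sums to $O(kn\tau^{\omega-2})$ and yields the second summand (for $k<\tau$ the padded cost is absorbed into the first summand). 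The competing strategy first substitutes the $\abs{S}$ coordinates fixed by $\mathcal{X}_{S,\mat{y}}$, collapsing the support test and the quadratic form to a dense problem in the $\ell$ free coordinates at cost $O(\ell n\tau^{\omega-2})$ (with $\ell$ playing the role of the batch width in the structured reduction); thereafter each batched group of the $k$ queries costs $O(\ell^{\omega-1})$, for a total of $O(\ell n\tau^{\omega-2}+k\ell^{\omega-1})$. Taking whichever strategy is cheaper produces the $\min$ in the statement.

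Finally, for weak simulation I would exploit that, by \cref{thm:amp-formula-intro}, $\abs{\mel{\mat{x}}{\mat{H}^{\otimes n}}{\mat{A}}}$ equals the constant $2^{-k/2}$ wherever it is nonzero. Hence the target distribution $p(\mat{x})$ is uniform over the affine set $\set*{\mat{x}}{\mat{x}\oplus\mat{w}\in\spn(\omega_1(\mat{A}))}\cap\mathcal{X}_{S,\mat{y}}$, which is itself affine; drawing a sample amounts to choosing random coordinates in a parametrization of this set, and the parametrizing basis is read off from the LDL data and the constraints $(S,\mat{y})$. Producing $k$ samples is then a batched multiply against this basis, which falls under the same two-strategy cost analysis as the strong case, so the bound coincides. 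The main obstacle I anticipate is the query-stage accounting: showing that both the membership test and the quadratic-form evaluation can be organized as supernode-aligned block operations so that fast rectangular matrix multiplication applies, and verifying that restricting to $\mathcal{X}_{S,\mat{y}}$ genuinely lowers the effective dimension to $\ell$. Throughout, the mixed arithmetic of $\J_n$---off-diagonals over $\F_2$ but the diagonal over $\Z_4$---must be tracked so that the phase $i^{-(\mat{x}+\mat{v})^T\mat{B}(\mat{x}+\mat{v})}$ is evaluated exactly rather than only modulo $2$.
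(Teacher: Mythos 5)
Your proposal follows essentially the same route as the paper's: a single $O(n\tau^{\omega-1})$ tree-decomposition-based LDL of $\omega_1(A)$ as preprocessing, batched membership tests and quadratic-form evaluations against the implicit factor for the $O(kn\tau^{\omega-2})$ strategy, an explicit restriction to the $\ell$ free coordinates for the $O(\ell n\tau^{\omega-2}+k\ell^{\omega-1})$ strategy, and weak simulation as uniform sampling of an affine set via a parametrizing basis. This is precisely the structure of \cref{thm:strong-sim-pgs}, \cref{thm:strong-sim-pgs-fixed-bits}, and \cref{thm:weak-sim-pgs}, resting on \cref{lem:gj-complexity} and \cref{lem:affsamp}.

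The one genuine gap is your claim that $(\mat{B},\mat{v},\mat{w})$ are all ``by-products of a single LDL factorization.'' With the sparse implicit LDL of \cite{solomonik2025fast}, $\mat{v}$ does come for free by tracking the second bit of the diagonal during Schur-complement updates, but $\mat{w}$ --- that is, $\omega_2(d(\mat{B}_{22}))$ on the rank-deficient part --- does not: linearly dependent rows are peeled off early in the factorization, so the corresponding rows of $L$ are never formed explicitly and the trailing diagonal they would generate is not recoverable from the factor alone. The paper closes this with a dedicated recursion (\cref{lem:tinv}) that computes, bag by bag along the tree decomposition, the diagonal blocks of the Gauss--Jordan generalized inverse $(P^TAP)^g$ in $O(n\tau^{\omega-1})$ total time, tracking second bits of the diagonal through those block products; the same device yields $\omega_2(d(L_1^{-T}DL_1^{-1}))$, which is what permits evaluating the $\Z_4$-valued exponent as $x_1^T\omega_1(L_1^{-T}DL_1^{-1})x_1+2x_1^Td(\omega_2(L_1^{-T}DL_1^{-1}))$, i.e., an $\F_2$ batched computation plus a precomputed diagonal correction (\cref{thm:strong-sim-pgs}). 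You correctly flagged the mixed $\Z_4$/$\F_2$ arithmetic as an anticipated obstacle, but resolving it is exactly the missing ingredient: without it the phase is obtained only modulo $2$, and the per-query quadratic form cannot be reduced to $\F_2$ matrix multiplications within the claimed bounds. A second, minor omission: the affine set sampled in weak simulation can be empty, and the emptiness test --- computing $x^\vert$ by a triangular solve and checking $y\oplus x^\vert\in\spn(B)$ per \cref{lem:affsamp} --- must be carried out within the preprocessing budget, which the implicit factor supports but which your plan does not mention.
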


In \cref{sec:apps-sim}, we apply our algorithm for phased graph state simulation to other simulation problems such as graph state simulation, Clifford circuit simulation, and Clifford+T circuit simulation. These include the following results.
In \cref{sec:gs-sim}, we solve the graph state simulation problem~\cite{gosset2024fast} in the following complexity.
\begin{theorem}[Restatement of \cref{thm:gs-sim-complexity}]\label{thm:gs-sim-complexity-intro}
Let $G$ be a graph with $n$ vertices and a tree decomposition of $G$ with width $\tau$ is given.
Then the strong/weak simulation of $\ket{G}$ for $k\in\N$ amplitudes/samples could be solved in $O(n\tau^{\omega-1}+kn\tau^{\omega-2})$ time.
\end{theorem}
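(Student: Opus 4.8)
The plan is to obtain this as a specialization of the phased graph state simulation bound summarized above: I will realize graph state simulation as the instance of \cref{def:pgs-sim-intro} in which the matrix has vanishing diagonal and no bits are fixed. First I would set $\mat{A}$ to be the adjacency matrix of $G$ regarded as an element of $\J_n$. Since $G$ is an ordinary graph, $\mat{A}$ is symmetric with entries in $\F_2$ and zero diagonal, so $\mat{A}\in\J_n$ and $O(\mat{A})=\mat{A}$. Hence $\ket{\mat{A}}=\ket{G}$, because every diagonal phase factor $\rzm^{a_{jj}}$ reduces to the identity, and the graph associated with $\mat{A}$ is exactly $G$; in particular the supplied width-$\tau$ tree decomposition of $G$ serves directly as the tree decomposition required by the general theorem.

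Next I would verify that the two pairs of tasks coincide. Strong simulation of $\ket{G}$ asks for $\mel{\mat{x}}{\mat{H}^{\otimes n}}{G}$, which is precisely \cref{eq:pgs-strong-sim-intro} for this $\mat{A}$; this is the informative quantity, since the bare computational-basis amplitudes $\braket{\mat{x}}{G}$ all have magnitude $2^{-n/2}$ and hence carry no structure. Weak simulation of $\ket{G}$ samples $\mat{x}\in\F_2^n$ from $\abs{\mel{\mat{x}}{\mat{H}^{\otimes n}}{G}}^2$, which is the instance of \cref{eq:pgs-weak-sim-intro} with $S=\emptyset$, so that $\mathcal{X}_{S,\mat{y}}=\F_2^n$ and $\mat{y}$ is immaterial. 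Both graph state tasks are therefore recovered by taking $S=\emptyset$, whence $\ell=n-\abs{S}=n$.

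With the reduction in hand, the bound follows by applying the summary theorem (\cref{thm:weak-sim-pgs}, \cref{thm:strong-sim-pgs}, \cref{thm:strong-sim-pgs-fixed-bits}) to this $\mat{A}$, the given tree decomposition of width $\tau$, and $S=\emptyset$. This gives $O(n\tau^{\omega-1}+\min(kn\tau^{\omega-2},\,\ell n\tau^{\omega-2}+k\ell^{\omega-1}))$ for $k$ amplitudes or samples; substituting $\ell=n$ and keeping the first term of the minimum yields exactly $O(n\tau^{\omega-1}+kn\tau^{\omega-2})$. Conceptually, the $O(n\tau^{\omega-1})$ term is the one-time cost of the tree-decomposition-based LDL factorization of $\omega_1(\mat{A})$ underlying \cref{thm:amp-formula-intro}, after which each of the $k$ amplitudes or samples is produced in $O(n\tau^{\omega-2})$ time by reusing the factored form.

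The step I expect to need the most care is the precise matching of definitions rather than the complexity bookkeeping. I must confirm that the graph state simulation problem as posed in~\cite{gosset2024fast} agrees, up to the normalization of the sampling distribution and the convention of applying $\mat{H}^{\otimes n}$, with the zero-diagonal, $S=\emptyset$ instance of \cref{def:pgs-sim-intro}, and that the sampling distribution over the support $\set*{\mat{x}\in\F_2^n}{\mat{x}\oplus\mat{w}\in\spn(\omega_1(\mat{A}))}$ of \cref{thm:amp-formula-intro} remains well defined when no bits are fixed. Once these identifications are checked, nothing beyond specialization of the general theorem is required.
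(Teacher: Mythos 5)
You have correctly identified the easy direction, but your reduction covers only a special case of the theorem, and the step you flagged as ``needing the most care'' is precisely where it fails. The graph state simulation problem of \cref{def:gs-sim} (following~\cite{gosset2024fast,kerzner2021clifford}) attaches to \emph{each} vertex $i$ a unitary $U_i\in\{H,\,H\check Z,\,I\}$ and asks for amplitudes of $U_\text{base}\ket{G}$ with $U_\text{base}=\bigotimes_i U_i$; it does not agree with the zero-diagonal, all-Hadamard instance $\mel{x}{H^{\otimes n}}{G}$ that you reduce to. Your construction handles only the case $U_i=H$ for all $i$. Vertices with $U_i=H\check Z$ (Y-basis measurements) are a small fix within your framework: they contribute nonzero diagonal entries to $\mat{A}$, i.e.\ a genuinely phased graph state, so you should not insist on vanishing diagonal. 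The substantive gap is the vertices with $U_i=I$ (Z-basis measurements): the phased graph state simulation problem of \cref{def:pgs-sim} applies $H^{\otimes n}$ to \emph{every} qubit, so un-Hadamarded vertices do not fit the template at all, and your observation that $\braket{x}{G}$ has uniform magnitude does not dispose of them, since a single instance can mix all three bases.

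The paper closes this gap with \cref{lem:z-measure}: each vertex $v$ measured in the Z basis with outcome $b$ can be trimmed from the graph in linear time, at the cost of applying $Z^b$ to the neighbors of $v$; after trimming all $k$ such vertices one obtains $\mel{x}{U_\text{base}}{G}=2^{-k/2}\mel{x'}{H^{\otimes(n-k)}\check Z^{d}}{G'}$ for some $d\in\Z_4^{n-k}$, which \emph{is} a Hadamard-transformed phased graph state instance (the $\check Z^d$ absorbs both the Y-basis unitaries and the accumulated $Z$ corrections into the diagonal of $\mat{A}$). Since deleting vertices does not increase the width of the given tree decomposition, \cref{thm:strong-sim-pgs} and \cref{thm:weak-sim-pgs} then yield the claimed $O(n\tau^{\omega-1}+kn\tau^{\omega-2})$ bound exactly as in your final step. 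Two minor further points: weak simulation in \cref{def:gs-sim} permits an arbitrary conditioning set $S\subseteq V$, not just $S=\emptyset$, though your bound survives this because the first branch of the minimum in \cref{thm:weak-sim-pgs} already gives $O(kn\tau^{\omega-2})$ for any $S$; and your complexity bookkeeping, including the role of the one-time LDL factorization, is otherwise sound. To repair the proof, replace your identification ``graph state simulation $=$ zero-diagonal, $S=\emptyset$ phased instance'' with the trimming reduction above.
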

When $k=1$, this matches the state of art for weak simulation~\cite{gosset2024fast} and improves the previous $O(n\tau^{2})$ complexity for strong simulation (with phase)~\cite{kerzner2021clifford}.
Our algorithm can also efficiently compute many amplitudes/samples of a fixed graph state, which is not explicitly considered in previous literature.
In \cref{sec:clifford-sim}, we apply our algorithm to Clifford circuit simulation and get the following complexity based on treewidth of the circuit.
\begin{theorem}[Restatement of \cref{thm:clifford-complexity}]\label{thm:clifford-complexity-intro}
Let $\mat{U}$ be a Clifford circuit with $n$ qubits and $m$ gates, each of which acts on at most 2 qubits, for $m=\Omega(n)$, and a tree decomposition of the circuit with width $\tau$ is given. Then strong/weak simulation of $\mat{U}$ for $k\in\N$ amplitudes/samples could be solved in 
$O(m\tau^{\omega-1}+\min(km\tau^{\omega-2}, mn\tau^{\omega-2}+kn^{\omega-1}))$ time.
\end{theorem}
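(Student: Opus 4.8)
The plan is to reduce Clifford circuit simulation to the phased graph state simulation problem solved in \cref{sec:pgs-sim} and then invoke its complexity bound. By the conversion of \cref{sec:circ-as-pgs}, the $n$-qubit circuit $\mat{U}$ with $m$ gates becomes a phased graph state $\ket{\mat{A}}$ on $N=O(m+n)$ vertices, which is $N=O(m)$ since we are given $m=\Omega(n)$. The amplitude $\mel{\mat{x}}{\mat{U}}{0^n}$ is then identified, up to the fixed normalization produced by the conversion, with a Hadamard-transformed amplitude $\mel{\tilde{\mat{x}}}{\mat{H}^{\otimes N}}{\mat{A}}$, where $\tilde{\mat{x}}\in\F_2^N$ equals $\mat{x}$ on the $n$ vertices carrying the output qubits and takes the prescribed fixed values on the remaining $N-n$ vertices (encoding the $\ket{0^n}$ input and the internal spiders).

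Taking $S$ to be the $N-n$ fixed vertices gives $\ell=N-\abs{S}=n$, so that strong simulation of $\mat{U}$ on $k$ inputs is precisely strong simulation of $\ket{\mat{A}}$ on $k$ inputs from $\mathcal{X}_{S,y}$, and weak simulation of $\mat{U}$ for $k$ samples is precisely weak sampling of $\ket{\mat{A}}$ from $\mathcal{X}_{S,y}$; the $\ket{0^n}$ input and internal structure enter only through the fixed coordinates $y$. It then remains to transport the given width-$\tau$ tree decomposition of the circuit to one of the interaction graph of $\mat{A}$. Since every gate acts on at most two qubits, the conversion introduces only $O(1)$ vertices per gate, each adjacent only to vertices localized near that gate; augmenting each bag with the constantly many vertices of the gates whose qubits it contains should yield a tree decomposition of width $O(\tau)$.

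Applying \cref{thm:weak-sim-pgs}, \cref{thm:strong-sim-pgs}, and \cref{thm:strong-sim-pgs-fixed-bits} with $N=O(m)$ vertices, width $O(\tau)$, and $\ell=n$, and substituting into $O(N\tau^{\omega-1}+\min(kN\tau^{\omega-2},\ell N\tau^{\omega-2}+k\ell^{\omega-1}))$, then yields $O(m\tau^{\omega-1}+\min(km\tau^{\omega-2},mn\tau^{\omega-2}+kn^{\omega-1}))$, as claimed; the constant-factor treewidth blow-up is absorbed into the $O(\cdot)$, and $m=\Omega(n)$ turns the leading $N\tau^{\omega-1}$ term into $m\tau^{\omega-1}$. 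I expect the main obstacle to be this last transport step: verifying that the graph-like rewriting of \cref{sec:circ-as-pgs}, built from vertex and edge complementations, preserves treewidth up to a constant factor under the bounded-arity hypothesis, and that the fixed/free partition of vertices is respected so that the output qubits remain the free coordinates throughout.
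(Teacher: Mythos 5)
Your proposal is correct and follows essentially the same route as the paper: \cref{lem:circ-as-pgs} reduces the circuit to a phased graph state on $N=O(m+n)$ vertices with exactly $\ell=n$ free coordinates (the output qubits), and the complexities follow from \cref{thm:strong-sim-pgs}, \cref{thm:strong-sim-pgs-fixed-bits}, and \cref{thm:weak-sim-pgs}. The treewidth-transport step you flag as the main obstacle is already settled in the paper by the third bullet of \cref{lem:circ-as-pgs} (via \cref{lem:zx-to-graph}): the graph-like rewriting and output-spider attachment never increase the width at all, so the resulting decomposition has width at most $\tau$ exactly, with no constant-factor blow-up to absorb.
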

Since every $n$-qubit circuit has a simple tree decomposition of width $O(n)$, our algorithm can simulate any Clifford circuit in $O(mn^{\omega-1}+kn^{\omega-1})$ time. We compare our algorithm with previous Clifford circuit simulation algorithms in \cref{tab:complexity-cmp} and show that our algorithm performs better when $m\gg n^{\omega-1}$ and $k\gg m$.
In \cref{sec:clifford-t-sim}, we consider the strong simulation of general Clifford+T circuits, we show that the LDL factorization can be used in conjunction with stabilizer decompositions of T gates, obtaining an exact algorithm  that improves the previous  work~\cite{kerzner2021clifford} in strong simulation of planar Clifford+T circuits.
Our algorithm slightly improves upon the scaling of state-of-the-art approximation algorithms with respect to T gate count~\cite{bravyi2016improved}.

In \cref{sec:apps-other}, we present other applications of the amplitudes formula in \cref{thm:amp-formula-intro}. In \cref{sec:lc-eq}, we give a characterization of locally Clifford (LC) equivalent graph states based on the generalized Gauss-Jordan process as presented in the following theorem.
\begin{theorem}[Restatement of \cref{thm:lc-eq}]\label{thm:lc-eq-intro}
Graph states with adjacency matrices $\bar A$ and $\bar B$ are locally equivalent, iff there exist permutation matrices $P$, $Q$, diagonal matrices $D$, $E$, and some $k\geq 0$ such that $A=P^T(\bar A+D)P$, $B=Q^T(\bar B+E)Q$, and for $A_{11}\in\F_2^{k\times k}$,
$B=
\begin{bmatrix}A_{11}^\# & \omega_1(A_{21}A_{11}^\#)^T \\
\omega_1(A_{21}A_{11}^\#) & \omega_1(A_{22} - A_{21}A_{11}^\#A_{12})
\end{bmatrix}$.
\end{theorem}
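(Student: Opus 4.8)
The plan is to read the block matrix $B$ in the statement as the principal pivot transform $\ppt_S(\mat{A})$ of $\mat{A}$ on the pivot set $S=\{1,\dots,k\}$: the off-diagonal blocks are $\omega_1(A_{21}A_{11}^\#)$ and its transpose, and the trailing block is the Schur complement $\omega_1(A_{22}-A_{21}A_{11}^\#A_{12})$. I would then identify each of the three operations on the right-hand side with a move preserving local equivalence of the graph state, writing $\sim$ for this relation. Conjugation by a permutation $P$ relabels vertices, while adding a diagonal $D\in\Z_4^{n\times n}$ applies the local phase gates $\rzm^{d_{jj}}$ of the definition $\ket{\mat{A}}=\bigotimes_j\rzm^{a_{jj}}\ket{G}$; both carry $\ket{\bar A}$ to a locally equivalent phased graph state, so $\ket{\mat{A}}\sim\ket{\bar A}$ and $\ket{\mat{B}}\sim\ket{\bar B}$. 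By the analysis of the generalized Gauss--Jordan process in \cref{sec:pgs-ge}, the transform $\mat{A}\mapsto\ppt_S(\mat{A})=\mat{B}$ is realized by a sequence of vertex and edge complementations on $S$, each a local operation, giving $\ket{\mat{A}}\sim\ket{\mat{B}}$. Chaining these equivalences yields the reverse implication $\ket{\bar A}\sim\ket{\bar B}$.

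For the forward implication I would invoke the classical fact that two graph states are locally equivalent iff their graphs are connected by a finite sequence of local complementations \cite{van2004graphical,danielsen2005self,duncan2013pivoting}. Each local complementation (and each edge pivot, being a composition of three of them) is an elementary single-index instance of the Gauss--Jordan step of \cref{sec:pgs-ge}, i.e. a principal pivot transform on one vertex, whose newly created self-loops are recorded as $\Z_4$ diagonal entries. The task is then to collapse the whole sequence of elementary pivots into a single transform $\ppt_S$ on one pivot set $S$, absorbing the accumulated self-loops and phases into the diagonals $D,E$ and the induced pivot ordering and vertex labels into the permutations $P,Q$.

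The crux, and the step I expect to be hardest, is this normalization: proving that a composition of principal pivot transforms is again a single principal pivot transform on the symmetric difference of the pivot sets, $\ppt_{S_2}(\ppt_{S_1}(\mat{A}))=\ppt_{S_1\triangle S_2}(\mat{A})$, so that the entire complementation sequence reduces to one pivot on the net set $S$. Over $\J_n$ this requires the quotient identity for Schur complements together with careful $\Z_4$ bookkeeping, since the off-diagonal structure lives in $\F_2$ via $\omega_1$ while the diagonal tracks phases modulo $4$; it also requires showing that the combined leading block $A_{11}$ can be made invertible over $\F_2$ by a suitable permutation $P$ whenever every intermediate pivot is legal. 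Once this associativity of elimination is established, the sequence guaranteed by \cite{van2004graphical} reduces to the claimed single-block form, and matching $P,Q,D,E$ to the induced ordering and residual self-loops produces exactly the stated identity for $\mat{B}$, completing both directions.
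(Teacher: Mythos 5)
Your backward direction is sound and matches the paper's implicit argument: diagonal additions are local $\rzm$ gates, permutations are relabelings, and \cref{lem:gj-pgs} realizes the block pivot $\Gamma_k$ as vertex/edge complementations dressed with local $X$, $Z$, and $H$ gates, so $\ket{A}\sim\ket{B}$ follows. The forward direction, however, has a genuine gap exactly where you flag it, and the gap is not a technicality. A single local complementation $\tau_v$ is \emph{not} a principal pivot transform of the adjacency matrix (the $1\times 1$ block $A_{vv}=0$ is singular); it is a PPT of $A+D(e_v)$, i.e., each elementary step requires inserting a loop first. Consequently a sequence of local complementations is a sequence of PPTs \emph{interleaved with diagonal additions}, and these do not commute: a PPT mixes diagonal entries into off-diagonal ones, so the loop you add before step $j$ propagates into the off-diagonal blocks of every later pivot. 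The clean composition law $\ppt_{S_2}(\ppt_{S_1}(A))=\ppt_{S_1\triangle S_2}(A)$ (which is true over $\F_2$ when both sides are defined, by the delta-matroid twisting argument) therefore does not apply to the actual sequence you get from the Van den Nest characterization. Proving that the interleaved sequence collapses to ``one block pivot plus diagonals $D,E$ before and after'' is essentially equivalent to the theorem itself, so your plan assumes a normalization lemma as hard as the statement being proved.

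The paper avoids this entirely and does not invoke the local-complementation characterization of LC-equivalence at all. Instead it argues directly from an arbitrary local Clifford unitary: each one-qubit Clifford is put in the Euler-type canonical form $\check{Z}^{c'}X^{a'}H^{b'}\hat{Z}^{d'}$ (\cref{lem:one-qubit-clifford}), so the full local unitary acts as $\check{Z}^{\hat u}X^qH^p$ on $\ket{B+D(\bar v)}$, reducing the problem to a \emph{single} layer of Hadamards on a set $S$ of vertices. Letting $S'\subseteq S$ index a maximal linearly independent set of rows of $\omega_1(B+D(\bar v))$, it applies \cref{lem:gj-pgs} to the $S'$ part and then \cref{lem:null} to the leftover $s-s'$ Hadamards: if $s>s'$, the resulting state has amplitudes of non-constant magnitude and so cannot equal a (phased) graph state, forcing $S=S'$, i.e., the Hadamard set indexes an invertible principal block. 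The single-pivot form $A+D(u)=\omega_1(\Gamma_k(B+D(v)))$ then falls out in one step, with the $X$ and $Z$ factors absorbed into the diagonal via \cref{eq:applyX}. If you want to salvage your route, you would need to supply the missing composition-with-loops lemma (in effect, Bouchet's isotropic-system machinery); the paper's rank-obstruction argument via \cref{lem:null} is the shortcut that makes such machinery unnecessary.
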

Based on \cref{thm:lc-eq-intro}, we give an alternative proof for a linear upper bound for each orbit of LC equivalent graph states~\cite{adcock2020mapping,claudet2025local}.
In \cref{sec:gs-learn}, we provide an efficient protocol to learn graph states with low-rank adjacency matrices.

\subsection{Related works}
Our amplitude formula in \cref{thm:amp-formula-intro} is closely related the quadratic-form-expansion (QFE) method for Clifford circuit simulation~\cite{de2022fast}, which is covered in more detail in \cref{sec:other-sim-methods}. Here, we highlight several key differences between our algorithm and the QFE method.
Firstly, the QFE method represents a stabilizer state with a quadratic form in 
\cref{eq:qfe-amplitudes} and simulates the circuit by applying gates and measurements in order. In comparison, our method views the whole circuit as a graph and exploits its overall structure. Secondly, the QFE method designs dedicated algorithms that transforms the quadratic form to implement gate applications and measurements on arbitrary stabilizer states. In contrast, our method focuses on a subset of stabilizer states that admit richer structures, namely phased graph states. This allows us to relate vertex and edge complementations to standard Gaussian elimination routines and utilize fast algorithms based on Strassen's fast matrix multiplication. This leads to the difference in time complexities as shown in \cref{tab:complexity-cmp}. 

Also, the phased graph state simulation problem we solve can be viewed as a variant of the graph state simulation problem studied in \cite{gosset2024fast} and \cite{kerzner2021clifford}. Our algorithm matches the complexity of theirs for weak simulation and improves the time complexity for strong simulation as discussed in \cref{sec:gs-sim}. 
In addition, Gosset et al.~\cite{gosset2024fast,kerzner2021clifford} also consider the Clifford circuit simulation problem by converting the circuit to a graph state in a similar way as ours. In particular, they consider the connectivity graph of qubits and bound the treewidth of depth-$d$ circuits using the treewidth of the qubit-connectivity graph. In this work, we focus on the graph structure of circuits and give a specific circuit-to-graph reduction that allows the reduction from circuit simulation to LDL decomposition. This leads to algorithms for different scenarios with different complexities, which are further discussed in \cref{sec:gs-sim} and \cref{sec:clifford-sim}.

\section{Preliminaries and Notations}\label{sec:preliminaries}

\subsection{Matrix-related notations}
We use uppercase letters such as $\mat{A},\mat{B},\mat{U}$ to represent matrices, operators, and quantum circuits.
Lowercase letters like $\mat{x},\mat{y}$ represent vectors.
Lowercase letters with subscripts are used to denote elements in the corresponding matrix or vector. For example, $v_i$ denotes the $i$'th element in $\mat{v}$, and $a_{ij}$ denotes the element in the $i$'th row and $j$'th column in $\mat{A}$.

% To specify the range of indices, we define $[m..n]=\set*{i\in\Z}{m\le i\le n}$ when it is needed to specify the range of an index. Also, we use abbreviation $[1..n]=[n]$.

To specify the range of indices, we define $[n]=\set*{i\in\Z}{1\le i\le n}$.
% Also, we use the abbreviation $[1..n]=[n]$.
Given $S\subseteq[n]$ and $\mat{y}\in\F_2^n$, we define $\mathcal{X}_{S,\mat{y}}=\set*{\mat{x}\in\F_2^n}{\forall i\in S,\,x_i=y_i}$ to represent vectors that agree with $\mat{y}$ over the indices in $S$. In addition, we use $\F^S$ to denote the set of vectors over $\F$ indexed by elements in $S$.

The identity matrix is represented by $\mat{I}$ when its size is clear in the context. Otherwise, we use $\mat{I}_n$ to denote the identity matrix of size $n$. The elementary vector $\mat{e}_i$ indicates the $i$'th vector in standard basis, i.e. the $i$'th row/column vector of $\mat{I}$. The length of $\mat{e}_i$ should always be clear in the context.

The vector given by the diagonal of $\mat A$ is represented by $d(\mat A)$, while $D(\mat{A})$ refers to the matrix that only contains the diagonal part of $\mat{A}$. Additionally, we use $O(\mat{A})$ to denote the matrix that contains only the off-diagonal part of $\mat{A}$. For a vector in $v\in \mathbb{Z}^{n}$, let $D(v)$ be the diagonal matrix with diagonal equal to $v$.

The transpose of a matrix $\mat{A}$ is denoted as $\mat{A}^T$ and the conjugate transpose of a complex matrix $\mat{A}$ is denoted as $\mat{A}^\dagger$.

To describe a local operator $\mat{U}: V\rightarrow V$ acting on the $j$'th component of a tensor product space $V^{\otimes n}$, we define $\actson{\mat{U}}{j}=\mat{I}\otimes\dots\otimes\mat{U}\otimes\dots\otimes\mat{I}$, where $\mat{U}$ is the $j$'th operator and all other $n-1$ operators are identities. Similarly, we use $\actson{\mat{W}}{i,j}$ to denote an operator $\mat{\mat{W}}: V\otimes V\rightarrow V\otimes V$ that acts on the $i$'th and $j$'th component of $V^{\otimes n}$.
Also, for $\mat{v}\in\Z^n$, we define $\mat{U}^\mat{v}=\bigotimes_{j=1}^n\mat{U}^{v_j}$. 

The matrix multiplication exponent is denoted by $\omega$. That is, we assume the cost of multiplying two matrices of size $n\times n$ is $O(n^\omega)$, where $2<\omega<2.372$~\cite{strassen1969gaussian,williams2024new}.

\subsection{Matrices over $\Z_4$ and $\F_2$}

Let $\Z_4$ be the ring of integers modulo 4 with elements $\set{0,1,2,3}$. 
Let $\omega_i(x)=\floor{x/2^{i-1}}\bmod 2$ for any integer scalar, vector, or matrix $x$ over $\Z_4$, which calculates the $i$'th bit of $x$ in binary representation.

Let $\F_2=\set{0,1}\subset\Z_4$ and define $a\oplus b=\omega_1(a+b)$. Then $(\F_2,\oplus,\cdot)$ is the finite field of size 2.
For $\mat{A}\in\F_2^{n\times n}$, we use $\mat{A}^\#$ to denote the inverse of $\mat{A}$ over $\F_2$ (by $A^{-1}$ we will have in mind the inverse of $A\in\Z^{n\times n}$ over $\mathbb{Z}$ if it exists).

% All other arithmetic operations (including matrix addition and multiplication) are assumed to be performed over $(\Z_4,+,\cdot)$. For example, $\mat{A}+\mat{B}$ or $\mat{A}\mat{B}$ might yield matrices over $\Z_4$ even if $\mat{A}$ and $\mat{B}$ are both over $\F_2$. To perform these operations in the field  $(\F_2,\oplus,\cdot)$, we would write $\omega_1(\mat{A}+\mat{B})=\mat{A}\oplus\mat{B}$ or $\omega_1(\mat{A}\mat{B})$ respectively.

For matrix $\mat{A}\in \F_2^{n\times n}$, we use $\rank(\mat{A})$ and  $\spn(\mat{A})$ to refer to the rank and column span of $\mat{A}$ over $\F_2$.

The LDL decomposition over $\F_2$ is defined as following to characterize the Gaussian elimination process on symmetric matrices over $\F_2$.
\begin{definition}[LDL decomposition over $\F_2$]\label{def:gf2-ldl}
Let $\mat{A}\in\F_2^{n\times n}$ be symmetric. Then the triple $(\mat{P},\mat{L},\mat{D})$ is an  LDL decomposition of $\mat{A}$ over $\F_2$ if
\begin{enumerate}
\item $\mat{A}=\omega_1(\mat{P}\mat{L}\mat{D}\mat{L}^T\mat{P}^T)$;
\item $\mat{P}\in\F_2^{n\times n}$ is a permutation matrix;
\item $\mat{L}\in\F_2^{n\times n}$ is unit-diagonal and lower-triangular;
\item $\mat{D}\in\F_2^{n\times n}$ is block-diagonal with blocks 0, 1, or $\mqty[\pmat{x}]$;
\item For each $i$ such that $\begin{bmatrix}
d_{ii} & d_{i(i+1)} \\
d_{(i+1)i} & d_{(i+1)(i+1)}
\end{bmatrix}=\begin{bmatrix}
0 & 1\\
1 & 0
\end{bmatrix}$, $l_{(i+1)i}=0$.
\end{enumerate}
\end{definition}
An LDL decomposition of a symmetric matrix always exists and can be constructed via standard Gaussian elimination.
The use of such a decomposition in applications dates back to work on integer factoring in the 1970s~\cite{lamacchia1991solving,morrison1975method}.

When $A$ is not full rank, we sometimes use the reduced LDL decomposition of it.
\begin{definition}[Reduced LDL decomposition over $\F_2$]\label{def:gf2-reduced-ldl}
Let $\mat{A}\in\F_2^{n\times n}$ be symmetric with rank $r$. Then $(\mat{P},\mat{L},\mat{D})$ is a
reduced LDL decomposition of $\mat{A}$ over $\F_2$, where $\mat{P}$, $\mat{L}$, $\mat{D}$ are defined the same as in \cref{def:gf2-ldl}, except that
\begin{itemize}
\item $\mat{L}\in\F_2^{n\times r}$ is unit-diagonal and lower-trapezoidal;
\item $\mat{D}\in\F_2^{r\times r}$ is is block-diagonal with blocks 1, or $\mqty[\pmat{x}]$.
\end{itemize}
\end{definition}

\subsection{Tree decomposition and treewidth}
The complexity results in this work rely on Gaussian elimination algorithms based on tree decompositions~\cite{solomonik2025fast}, which is introduced below.
\begin{definition}[Tree decomposition]
A tree decomposition of a graph $G=(V,E)$ is a tree $T=(V_T,E_T)$ such that each node $B\in V_T$ is a bag, $B_t\subseteq V$, and the following conditions are satisfied,
\begin{itemize}
\item For each $v\in V$, there exists $B\in V_T$ such that $v\in B$,
\item For each $(u,v)\in E$, there exists $B\in V_T$ such that $\set{u,v}\subseteq B$;
\item For each $v\in V$, $\set*{B\in V_T}{v\in B}$ forms a connected subtree of $T$.
\end{itemize}
Also, the width of $T$ is defined as $(\max_{B\in V_T}\abs{B}-1)$.
\end{definition}
The treewidth of a graph could then be defined as follows.
\begin{definition}[Treewidth]
The treewidth of a graph $G$, or $tw(G)$, is the smallest number $w$ such that any tree decomposition of $G$ has width at least $w$.
\end{definition}

\subsection{Quantum circuits}
In this section, we introduce the necessary definitions and concepts about quantum circuits that are used in this paper. For a more detailed introduction to quantum computation and related topics, we refer our readers to \cite{nielsen2010quantum}.

Qubits are quantum analog of bits in classical computation. Each qubit could be defined as a normalized vector in $\C^{2}$ up to a complex phase factor, while an $n$--qubit qubit quantum state is defined as a normalized vector in $n$-fold tensor product of $\C^{2}$ up to a phase factor. By convention, we will use $\ket{\psi}$ to represent such vectors, where $\psi$ is a label depending on the context, and use $\bra{\psi}$ to represent the conjugate transpose of $\ket{\psi}$. For example, some commonly used quantum states are denoted as
\begin{align*}
\ket{0}=\mqty[1\\0],\quad\ket{1}=\mqty[0\\1],\quad
\ket{+}=\frac{1}{\sqrt2}\mqty[1\\1],\quad\ket{-}=\frac{1}{\sqrt2}\mqty[1\\-1].
\end{align*}
The states $\ket{0}$ and $\ket{1}$ are called the \textit{computational basis} for a qubit. For multiple qubits, the computational basis states are obtained via tensor products, which we express with the notation $\ket{\mat{b}}=\ket{b_1}\otimes\dots\otimes\ket{b_2}$ for $\mat{b}\in\binaryset^n$.
% We note that two quantum states that are different by a phase factor are equivalent to each other, i.e. $\ket{\psi}=e^{i\alpha}\ket{\psi}$ for $\alpha\in\R$.

A quantum circuit (without measurements) describes quantum computation with a series of quantum gates, where each gate is a unitary acting on a subset of qubits.
Therefore, a quantum gate on $n$ qubits can be represented as a $2^n$-by-$2^n$ matrix. For example, some quantum gates used in this paper are 
\begin{gather*}
    \mat{X}=\mqty[\paulixmatrix],\quad
    \mat{Y}=\mqty[\pauliymatrix],\quad
    \mat{Z}=\mqty[\paulizmatrix],\quad
    \mat{H}=\frac{1}{\sqrt2}\mqty[1&1\\1&-1],\\
    \mat{R}_z(\theta)=\mqty[1&0\\0&e^{i\theta}],\quad
    \mat{R}_x(\theta)=\mat{H}\mat{R}_z(\theta)\mat{H},\quad
    \mat{U}_{CZ}=\left[\begin{smallmatrix}
    1 & 0 & 0 & 0\\
    0 & 1 & 0 & 0\\
    0 & 0 & 1 & 0\\
    0 & 0 & 0 & -1
    \end{smallmatrix}\right],
\end{gather*}
where $\mat{X}$, $\mat{Y}$ and $\mat{Z}$ are known as Pauli matrices and $\mat{H}$ is known as the Hadamard gate.
Note that similar to quantum states, quantum gates differ by a phase factor are equivalent, i.e. $\mat{U}$ and $e^{i\theta}\mat{U}$ for $\theta\in\R$ are equivalent.

Also, we define the following abbreviation,
\begin{gather*}
\rzp=\mat{R}_z(\pi/2),\;
\rzm=\mat{R}_z(-\pi/2),\;
\rxp=\mat{R}_x(\pi/2),\;
\rxm=\mat{R}_x(-\pi/2).
\end{gather*}
Note that $\rzp=\rzm^{-1}$ and $\rzp^2=\rzm^2=\mat{Z}$, and similar relations hold for $\rxp$, $\rxm$ and $\mat{X}$.

Measuring an $n$-qubit quantum state $\ket{\psi}$ in computational basis corresponds to sample a binary string $\mat{b}\in\binaryset^n$ from the distribution $P(\mat{b})=\abs{\braket{\mat{b}}{\psi}}^2$. 
Typically, quantum simulation algorithms work on emulating measurements (a.k.a. weak simulation), or computing one or more amplitudes (or just magnitudes of amplitudes) of a state $\ket\psi$ obtained by executing a quantum circuit (a.k.a. strong simulation). The latter could be used as a subroutine to solve the former problem~\cite{bravyi2022simulate}.

\subsection{Graph states}\label{subsec:graph-state}
Let $G=(V,E)$ be a simple undirected graph with vertices $V(G)=V$ and edges $E(G)=E$.
We use $ij$ to denote the edge $\{i,j\}\in E$. For $v\in V$, we use $N(v)$ to represent its neighbors, i.e. $N(v)=\{u\mid uv\in E\}$. 
Additionally, we use $G\setminus S$ for some $S\subset V$ to represent the subgraph induced by $V\setminus S$.
The adjacency matrix of $G$ is defined as a symmetric matrix $\mat{A}\in\F_2^{n\times n}$ with zero diagonals and $a_{ij}=1$ iff $ij\in E$.
% Such a graph can be uniquely described by a matrix in $\G_n=\set*{\mat{A}\in\F_2^{n\times n}}{\mat{A}=\mat{A}^T,\,\diag(\mat{A})=0}$
% , which is known as the adjacency matrix. Specifically, an edge exists in a graph represented by $\mat{A}$ iff $a_{ij}=a_{ji}=1$.
% In this paper, we will use the matrix in $\G_n$ to refer to the corresponding graph as well as its adjacency matrix.
Moreover, to simplify the notations of matrices and tensor product spaces related to $G$, we assume $V$ are ordered and indexed by $[n]=\{1,\dots,n\}$.
Then graph states can be defined as follows.

\begin{definition}[Graph state]\label{def:graph-state}
Let $G=(V,E)$ be a simple undirected graph. Then the corresponding graph state is defined by
$
\ket{G}=\prod_{ij\in E}\actson{\mat{U}_{CZ}}{i,j}\ket{+}^{\otimes \abs{V}}.
$
\end{definition}

% \begin{definition}[Graph state]\label{def:graph-state}
% Let $\mat{A}\in\G_n$ be a simple undirected graph. Then the corresponding graph state is defined by
% $
% \ket{\mat{A}}=\prod_{a_{ij}=1,\,1\le i<j\le n}\actson{\mat{U}_{CZ}}{i,j}\ket{+}^{\otimes n}.
% $
% \end{definition}

% In this paper, we consider a generalization of graph states with arbitrary $\rzm$ gates applied on each vertex of the graph. Such states can be characterized by matrices in $\J_n$ and are defined as follows.
% \begin{definition}[Phased graph state]\label{def:phased-graph-state}
% Let $\mat{A}\in\J_n$. Then the phased graph state $\ket{\mat{A}}$ is defined by
% $
% \ket{\mat{A}}=\rzm^{\diag(\mat{A})}\ket{\Offdiag(\mat{A})}
% $.
% \end{definition}

Next, we introduce two operations on graphs, namely \textit{vertex complementation} and \textit{edge complementation}, that can be implemented on graph states with only local gates. They are also known as local complementation and edge-local complementation respectively in literature~\cite{bouchet1988graphic}. 
Let $\triangle$ denote symmetric difference, $K(A)=\set*{uv}{u,v\in A}$, and $K(A,B)=\set*{uv}{u\in A,v\in B}$ for some vertices sets $A$ and $B$.

\begin{theorem}[Vertex complementation~\cite{danielsen2005self,van2004graphical}]\label{thm:vertex-complementation}
Let $G=(V,E)$ be a simple undirected graph and $i\in V$ and $\tau_i(G)=(V,E\triangle K(N(i)))$ denote the local complementation of $G$ at vertex $v$. Then we have
\begin{equation*}
\ket{G}=\actson{\rxm}{i}\prod_{j\in N(i)}\actson{\rzp}{j}\ket{\tau_i(G)}.
\end{equation*}
% \begin{equation*}
% \ket{\mat{A}}=
% \actson{\mat{H}}{i}
% \actson{\rxp}{i}
% \actson{\rzp}{i}
% \prod_{j\in[n],\,a_{ij}=1}\actson{\rzp}{j}\ket{\tau_i(\mat{A})}.
% \end{equation*}
\end{theorem}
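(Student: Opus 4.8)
The plan is to work within the stabilizer formalism and to treat all states and gates up to a global phase, which is consistent with the conventions fixed in the preliminaries. The first step is to record the standard stabilizer description of a graph state: starting from the fact that $\ket{+}^{\otimes n}$ is the unique (up to phase) common $+1$-eigenstate of the $\actson{\mat{X}}{v}$, one conjugates these generators by the layer $\prod_{ij\in E}\actson{\mat{U}_{CZ}}{i,j}$. Since $\mat{U}_{CZ}$ sends $\actson{\mat{X}}{i}\mapsto\actson{\mat{X}}{i}\actson{\mat{Z}}{j}$ and fixes $\actson{\mat{Z}}{i}$, this shows that $\ket{G}$ is the unique (up to phase) common $+1$-eigenstate of the commuting operators $g_v^G=\actson{\mat{X}}{v}\prod_{w\in N(v)}\actson{\mat{Z}}{w}$, $v\in V$. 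Denote the group they generate by $\mathcal{S}(G)$.

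Setting $U=\actson{\rxm}{i}\prod_{j\in N(i)}\actson{\rzp}{j}$ and $H=\tau_i(G)$, the identity $\ket{G}=U\ket{H}$ (up to phase) follows once I show $U\,\mathcal{S}(H)\,U^\dagger=\mathcal{S}(G)$ with matching signs: then $U\ket{H}$ is a $+1$-eigenstate of every element of $\mathcal{S}(G)$ and hence equals $\ket{G}$ up to a global phase. For this I first compute the single-qubit conjugation rules: $\actson{\rxm}{i}$ fixes $\mat{X}$, sends $\mat{Z}\mapsto\mat{Y}$ and $\mat{Y}\mapsto-\mat{Z}$, while each $\actson{\rzp}{j}$ sends $\mat{X}\mapsto\mat{Y}$, fixes $\mat{Z}$, and sends $\mat{Y}\mapsto-\mat{X}$; on every other qubit $U$ acts trivially.

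The core is then a case analysis of $U g_v^H U^\dagger$ according to the three types of vertices, using the neighborhood identities $N_H(i)=N(i)$, $N_H(w)=N(w)$ for $w\notin N(i)\cup\{i\}$, and $N_H(j)=N(j)\,\triangle\,(N(i)\setminus\{j\})$ for $j\in N(i)$. For $v=i$ both $\actson{\mat{X}}{i}$ and every $\actson{\mat{Z}}{w}$ with $w\in N(i)$ are fixed, giving $g_i^G$; for $w\notin N(i)\cup\{i\}$ the factor $\actson{\mat{X}}{w}$ is untouched and, since $i\notin N(w)$, no $\actson{\mat{Z}}{i}$ appears, giving $g_w^G$. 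The interesting case is $j\in N(i)$: here $\actson{\mat{X}}{j}\mapsto\actson{\mat{Y}}{j}$ and the factor $\actson{\mat{Z}}{i}$ (present because $i\in N(j)$) maps to $\actson{\mat{Y}}{i}$, while all remaining $\actson{\mat{Z}}{w}$ are fixed, so $U g_j^H U^\dagger=\actson{\mat{Y}}{i}\actson{\mat{Y}}{j}\prod_{w\in N_H(j)\setminus\{i\}}\actson{\mat{Z}}{w}$. Comparing with $g_i^G g_j^G=\actson{\mat{Y}}{i}\actson{\mat{Y}}{j}\prod_{w\in M}\actson{\mat{Z}}{w}$, where $M=(N(i)\triangle N(j))\setminus\{i,j\}$ and which follows from $\mat{X}\mat{Z}=-i\mat{Y}$, $\mat{Z}\mat{X}=i\mat{Y}$ (the two $\pm i$ phases cancel), the claim reduces to the set identity $(N(i)\triangle N(j))\setminus\{i,j\}=(N(i)\setminus\{j\})\triangle(N(j)\setminus\{i\})$, which holds because $(A\triangle B)\setminus S=(A\setminus S)\triangle(B\setminus S)$ for any $S$. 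Hence $U g_j^H U^\dagger=g_i^G g_j^G\in\mathcal{S}(G)$.

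I expect the main obstacle to be exactly this last case: keeping the Pauli phases straight, since the cancellation of the $-i$ from $\mat{X}\mat{Z}$ against the $+i$ from $\mat{Z}\mat{X}$ is what forces the overall sign to be $+1$, and verifying the symmetric-difference bookkeeping that matches $N_H(j)\setminus\{i\}$ against $M$. Once the three cases are settled, the images $\{g_i^G\}\cup\{g_i^G g_j^G:j\in N(i)\}\cup\{g_w^G:w\notin N(i)\cup\{i\}\}$ generate $\mathcal{S}(G)$ with all signs $+1$, so $U\,\mathcal{S}(H)\,U^\dagger=\mathcal{S}(G)$ and the stated identity follows up to phase. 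A fully self-contained alternative would expand both sides directly in the computational basis, but the stabilizer argument avoids the resulting $2^n$-term sums.
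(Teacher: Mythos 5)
The paper itself contains no proof of this theorem: it is imported directly from the cited literature~\cite{danielsen2005self,van2004graphical}, so there is no internal argument to compare against. Your stabilizer-conjugation proof is correct and is essentially the standard one from those references: the conjugation table for $\rxm$ and $\rzp$ is right, the three-case analysis is complete, the sign bookkeeping in the key case $U g_j^H U^\dagger = g_i^G g_j^G$ (the $(-i)\cdot(+i)$ cancellation from $\mat{X}\mat{Z}$ and $\mat{Z}\mat{X}$) is correct, the symmetric-difference identity matching $N_H(j)\setminus\{i\}$ with $(N(i)\triangle N(j))\setminus\{i,j\}$ holds for simple graphs, and recovering $g_j^G$ from $g_i^G$ and $g_i^G g_j^G$ correctly yields $U\,\mathcal{S}(H)\,U^\dagger=\mathcal{S}(G)$ with all signs $+1$. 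One caveat: your argument only gives $\ket{G}=\alpha\,\actson{\rxm}{i}\prod_{j\in N(i)}\actson{\rzp}{j}\ket{\tau_i(G)}$ for some unimodular $\alpha$, whereas the statement is written as a scalar equality; this suffices for the paper's conventions (states are defined up to phase, and the downstream use in \cref{lem:gj-pgs} carries an undetermined $\alpha\in\C$ anyway), but since proportionality is already established you can pin $\alpha=1$ with a one-amplitude check: $\bra{0^n}\actson{\rxm}{i}\prod_{j\in N(i)}\actson{\rzp}{j}\ket{\tau_i(G)}=\frac{1-i}{2}\bigl(\braket{0^n}{\tau_i(G)}+i\braket{\mat{e}_i}{\tau_i(G)}\bigr)=\frac{(1-i)(1+i)}{2}\,2^{-n/2}=2^{-n/2}=\braket{0^n}{G}$, using that $\rzp^\dagger$ fixes $\ket{0}$ and $\rxm^\dagger\ket{0}=\frac{1+i}{2}(\ket{0}-i\ket{1})$.
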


% \begin{theorem}[\cite{danielsen2005self,van2004graphical}]\label{thm:vertex-complementation}
% Let $\mat{A}\in\G_n$ be a simple undirected graph and $i\in[n]$. Then we have
% \begin{equation*}
% \ket{\mat{A}}=\actson{\rxm}{i}\prod_{j\in[n],\,a_{ij}=1}\actson{\rzp}{j}\ket{\tau_i(\mat{A})}.
% \end{equation*}
% % \begin{equation*}
% % \ket{\mat{A}}=
% % \actson{\mat{H}}{i}
% % \actson{\rxp}{i}
% % \actson{\rzp}{i}
% % \prod_{j\in[n],\,a_{ij}=1}\actson{\rzp}{j}\ket{\tau_i(\mat{A})}.
% % \end{equation*}
% \end{theorem}

% Since $\epsilon_{uv}(G)=\tau_u\circ\tau_v\circ\tau_u(G)$, similar results hold for edge-local complementations on graph states.

\begin{theorem}[Edge complementation, Proposition 3.1 in~\cite{duncan2013pivoting}]\label{thm:edge-complementation}
Let $G=(V,E)$ be a simple undirected graph and $ij\in E$ and $\epsilon_{ij}(G)=\tau_i\circ\tau_j\circ\tau_i(G)$ denote the edge complementation of $G$ along $ij$. Then we have
\begin{equation*}
\ket{G}=\actson{\mat{H}}{i}\actson{\mat{H}}{j}\prod_{k\in N(i)\cap N(j)}\actson{\mat{Z}}{k}\ket{\epsilon_{ij}(G)}.
\end{equation*}
\end{theorem}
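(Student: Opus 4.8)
The plan is to derive the edge complementation identity from the vertex complementation \cref{thm:vertex-complementation}, using the stated definition $\epsilon_{ij}(G)=\tau_i\circ\tau_j\circ\tau_i(G)$. I would apply the vertex complementation theorem three times in succession and track how the local correction gates compose, ultimately collapsing the product into the claimed form $\actson{\mat{H}}{i}\actson{\mat{H}}{j}\prod_{k\in N(i)\cap N(j)}\actson{\mat{Z}}{k}$. The key algebraic input is that single-qubit phase gates on a fixed qubit multiply by adding their exponents (modulo the relations $\rzp=\rzm^{-1}$, $\rzp^2=\mat{Z}$, and the analogues for $\rxp,\rxm,\mat{X}$, recorded just before \cref{subsec:graph-state}), and that gates acting on different qubits commute. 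I would also need the standard Clifford identity relating $\mat{H}$, $\rxm$, and $\rzp$ — namely that conjugating an $x$-type phase by a Hadamard turns it into a $z$-type phase (up to an overall phase, which is irrelevant since states are defined up to a phase factor).

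First I would write $\ket{G}=\actson{\rxm}{i}\prod_{j'\in N(i)}\actson{\rzp}{j'}\ket{\tau_i(G)}$, the first application at vertex $i$. Next I would apply \cref{thm:vertex-complementation} to $\ket{\tau_i(G)}$ at vertex $j$, being careful that the neighborhood $N(j)$ must now be taken in the intermediate graph $\tau_i(G)$, not in $G$; this is where the bookkeeping gets delicate, since local complementation at $i$ changes the adjacencies among $N(i)$. I would then apply the theorem a third time at vertex $i$ in the graph $\tau_j\tau_i(G)$, whose neighborhood $N(i)$ must again be recomputed. At each stage the correction operators are products of $\rxm$ on the pivot vertex and $\rzp$ on its current neighbors.

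The main obstacle — and the heart of the proof — will be the combinatorial accounting of which vertices receive which phase gates after the three complementations, together with the algebraic simplification of the accumulated single-qubit gates. A vertex $k$ that is a common neighbor, $k\in N(i)\cap N(j)$, will receive correction gates from multiple stages; I expect the net effect on such $k$ to simplify (using $\rzp^2=\mat{Z}$) to a single $\actson{\mat{Z}}{k}$, while vertices adjacent to only one of $i,j$ have their contributions cancel. On the pivot qubits $i$ and $j$ themselves, the accumulated $x$- and $z$-type phases should combine, via the Hadamard conjugation relation $\mat{H}\rxm=\rzp\mat{H}$-type identities (and $\rzp\rzm=\mat{I}$), into exactly $\actson{\mat{H}}{i}\actson{\mat{H}}{j}$ up to a global phase. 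Since this identity is cited as Proposition 3.1 of~\cite{duncan2013pivoting}, I would either reproduce their explicit gate-by-gate tracking or, alternatively, verify the claim on the adjacency-matrix level: confirm that $\tau_i\circ\tau_j\circ\tau_i$ acts on the adjacency matrix as the local pivot along edge $ij$ (complementing the edge set of the bipartite-like structure among $N(i)\setminus N(j)$, $N(j)\setminus N(i)$, $N(i)\cap N(j)$), and then match the residual local Cliffords.
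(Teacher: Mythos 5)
The paper itself does not prove this statement: it is imported verbatim as Proposition~3.1 of~\cite{duncan2013pivoting}, so there is no internal proof to compare against. Your overall route---expanding $\epsilon_{ij}=\tau_i\circ\tau_j\circ\tau_i$ via three applications of \cref{thm:vertex-complementation} and simplifying the accumulated local gates---is exactly the standard derivation used in that source, and it does go through. However, two of the specific simplifications you predict are wrong as stated, and the step that actually closes the argument is missing from your sketch.

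Concretely, write $A=N(i)\setminus(N(j)\cup\{j\})$, $B=N(j)\setminus(N(i)\cup\{i\})$, $C=N(i)\cap N(j)$. Tracking neighborhoods through the three pivots gives $N_G(i)=\{j\}\cup A\cup C$, $N_{\tau_i(G)}(j)=\{i\}\cup A\cup B$, and $N_{\tau_j\tau_i(G)}(i)=\{j\}\cup B\cup C$. Hence \emph{every} vertex of $A\cup B\cup C$ receives $\rzp$ from exactly two of the three stages and so picks up a $\mat{Z}$ (via $\rzp^2=\mat{Z}$); the contributions on the exclusive neighbors $A\cup B$ do \emph{not} cancel at this level, contrary to your expectation. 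Likewise the pivot qubits accumulate $\rxm\rzp\rxm$ on $i$ and $\rzp\rxm\rzp$ on $j$, each proportional to $\mat{H}\mat{Y}\propto \mat{H}\mat{X}\mat{Z}$---Hadamard times a \emph{nontrivial Pauli}, not $\mat{H}$ up to a global phase, so the identity $\mat{H}\propto\rxm\rzm\rxm$ alone cannot finish the pivots either. The missing idea is Pauli absorption via the stabilizers of the target graph state: $\actson{\mat{X}}{v}\ket{G'}=\prod_{u\in N_{G'}(v)}\actson{\mat{Z}}{u}\ket{G'}$. Since in $\epsilon_{ij}(G)$ one has $N(i)=\{j\}\cup B\cup C$ and $N(j)=\{i\}\cup A\cup C$ (the neighbor swap noted after the theorem), pushing $\actson{\mat{X}}{i}\actson{\mat{X}}{j}$ through yields $\actson{\mat{Z}}{i}\actson{\mat{Z}}{j}\prod_{k\in A\cup B}\actson{\mat{Z}}{k}$, which cancels the leftover $\actson{\mat{Z}}{i}\actson{\mat{Z}}{j}$ and the $\mat{Z}$'s on $A\cup B$, leaving exactly $\prod_{k\in C}\actson{\mat{Z}}{k}$ and hence the claimed identity up to global phase. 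With this one standard identity inserted, your plan is complete; without it, the cancellations you assert would fail. (Your fallback of verifying $\tau_i\tau_j\tau_i$ at the adjacency-matrix level establishes only the graph part, not the residual local Cliffords, so it does not bypass this step.)
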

We note that $\epsilon_{ij}(G)$ could be alternatively defined as $(V,F)$, where $F$ is obtained by: (1) replacing $E$ with $E\triangle(K(A,B)\cup K(A,C)\cup K(B,C))$, where $A=N(i)\setminus N(j)$, $B=N(j)\setminus N(i)$ and $C=N(i)\cap N(j)$; (2) and exchanging the neighbors of $i$ and $j$.

% \begin{theorem}[Proposition 3.1 in \cite{duncan2013pivoting}]\label{thm:edge-complementation}
% Let $\mat{A}\in\G_n$ be a simple undirected graph and $i,j\in[n]$ such that $a_{ij}=1$. Then we have
% \begin{equation*}
% \ket{\mat{A}}=\actson{\mat{H}}{i}\actson{\mat{H}}{j}\prod_{k\in[n],\,a_{ik}a_{jk}=1}\actson{\mat{Z}}{k}\ket{\epsilon_{ij}(\mat{A})}.
% \end{equation*}
% \end{theorem}

Also, it is widely known that the amplitudes of a graph state in computational basis has a simple formula.
\begin{lemma}[\cite{claudet2025local}]\label{lem:graph-state-amplitudes}
Let $\ket{G}$ be an $n$-qubit graph state with adjacency matrix $\mat{A}$ and $\mat{x}\in\F_2^n$. Then
\[ \braket{\mat{x}}{G} = 2^{-n/2}(-1)^{\frac{1}{2} \B x^T\B A\B x}.\]
\end{lemma}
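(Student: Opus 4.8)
The statement to prove is \cref{lem:graph-state-amplitudes}: for an $n$-qubit graph state $\ket{G}$ with adjacency matrix $\mat{A}$ and any $\mat{x}\in\F_2^n$, we have $\braket{\mat{x}}{G} = 2^{-n/2}(-1)^{\frac{1}{2}\mat{x}^T\mat{A}\mat{x}}$.

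The plan is to compute the amplitude directly from \cref{def:graph-state}, which gives $\ket{G}=\prod_{ij\in E}\actson{\mat{U}_{CZ}}{i,j}\ket{+}^{\otimes n}$. First I would take the inner product with $\bra{\mat{x}}=\bra{x_1}\otimes\dots\otimes\bra{x_n}$. The key observation is that $\ket{+}^{\otimes n}=2^{-n/2}\sum_{\mat{b}\in\F_2^n}\ket{\mat{b}}$, so the factor $2^{-n/2}$ emerges immediately, and the $\mat{U}_{CZ}$ gates are diagonal in the computational basis. Since $\mat{U}_{CZ}$ acting on basis state $\ket{b_i}\ket{b_j}$ returns $(-1)^{b_ib_j}\ket{b_i}\ket{b_j}$, each edge $ij$ contributes a scalar phase $(-1)^{b_ib_j}$ and does not change the basis label. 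Therefore $\prod_{ij\in E}\actson{\mat{U}_{CZ}}{i,j}\ket{\mat{b}}=(-1)^{\sum_{ij\in E}b_ib_j}\ket{\mat{b}}$. Taking the overlap $\braket{\mat{x}}{\mat{b}}=\prod_i\delta_{x_ib_i}$ collapses the sum over $\mat{b}$ to the single term $\mat{b}=\mat{x}$, yielding $\braket{\mat{x}}{G}=2^{-n/2}(-1)^{\sum_{ij\in E}x_ix_j}$.

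The remaining step is to identify the exponent $\sum_{ij\in E}x_ix_j$ with $\frac{1}{2}\mat{x}^T\mat{A}\mat{x}$ as it appears in the claim. Here I would expand $\mat{x}^T\mat{A}\mat{x}=\sum_{i,j}a_{ij}x_ix_j$. Because $\mat{A}$ is the adjacency matrix it is symmetric with zero diagonal, so $\sum_{i,j}a_{ij}x_ix_j = 2\sum_{i<j}a_{ij}x_ix_j = 2\sum_{ij\in E}x_ix_j$, where the factor of two accounts for each unordered edge being counted in both $(i,j)$ and $(j,i)$. Dividing by two gives $\frac{1}{2}\mat{x}^T\mat{A}\mat{x}=\sum_{ij\in E}x_ix_j$ as an integer, matching the exponent above.

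The only subtlety—and the main thing to be careful about—is that the exponent $\frac{1}{2}\mat{x}^T\mat{A}\mat{x}$ must be interpreted as an \emph{integer} computed over $\Z$ (using $x_i\in\set{0,1}\subset\Z$ and $a_{ij}\in\set{0,1}\subset\Z$), not reduced modulo $2$ before halving, since $(-1)^{(\cdot)}$ only depends on the exponent modulo $2$ but the halving forces us to track the value modulo $4$ of $\mat{x}^T\mat{A}\mat{x}$. The zero-diagonal symmetric structure guarantees $\mat{x}^T\mat{A}\mat{x}$ is always even, so the expression $\frac{1}{2}\mat{x}^T\mat{A}\mat{x}$ is a well-defined integer and the sign is unambiguous. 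Once this integrality and the edge-counting factor of two are handled correctly, the identification is immediate and the proof is complete.
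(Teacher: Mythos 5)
Your proof is correct. The paper itself gives no proof of \cref{lem:graph-state-amplitudes}, citing it to prior work instead, so there is no internal argument to compare against; your direct computation---expanding $\ket{+}^{\otimes n}$ as a uniform superposition, using that each $\mat{U}_{CZ}$ is diagonal with eigenvalue $(-1)^{b_ib_j}$ on $\ket{\mat{b}}$, collapsing the sum against $\bra{\mat{x}}$, and identifying $\sum_{ij\in E}x_ix_j=\tfrac{1}{2}\mat{x}^T\mat{A}\mat{x}$ over $\Z$ via symmetry and the zero diagonal---is exactly the standard argument for this result. Your care about the integrality of the exponent (tracking $\mat{x}^T\mat{A}\mat{x}$ modulo $4$ rather than modulo $2$ before halving) is precisely the subtlety the paper flags in its remark immediately after the lemma, so the proposal is complete as written.
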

Note that $\mat{x}^T\mat{A}\mat{x}$ is always even since $\mat{A}$ is an adjacency matrix.

% Another property of graph states we use is the following projection rule that allows us to eliminate nodes from a graph.
% \begin{lemma}[\cite{raussendorf2003measurement}]\label{lem:z-measure}
% Let $G=(V,E)$, $v\in V$ and $b\in\binaryset$. Then
% \begin{equation*}
%     \actson{\bra{b}}{v}\ket{G}=\frac{1}{\sqrt2}\prod_{u\in N(v)}\left(\actson{\mat{Z}}{u}\right)^b\ket{G\setminus\set{v}}
% \end{equation*}
% where $\actson{\bra{b}}{v}$ is $\bra{b}$ acting on the $v$'th component of the tensor product space.
% \end{lemma}
% \begin{proof}
%     See \cref{app:proof-lemma-z-measure}.
% \end{proof}

\subsection{Clifford circuits and stabilizer states}
Clifford circuits are a subset of quantum circuits that can be represented by the gate set $\{\mat{H}, \rzp, \mat{U}_{CZ}\}$. It is known that such circuits can be simulated in polynomial time on classical computers by the Gottesman-Knill theorem ~\cite{aaronson2004improved,dehaene2003clifford,gosset2024fast}.

The states generated by Clifford circuits are called \textit{stabilizer states}, since such state can be identified with
a group of $2^n$ Pauli operators that stabilize it\footnote{It means $\mat{A}\ket{\psi}=\ket{\psi}$ that an operator $\mat{A}$ stabilizes $\ket{\psi}$.}~\cite{gottesman1996class,calderbank1997quantum,gottesman1998heisenberg}. These Pauli operators are known as stabilizers and each stabilizer is a signed $n$-fold tensor product of $\set{\mat{I},\mat{X},\mat{Y},\mat{Z}}$.  The group operation is defined to be multiplication because if $\mat{A}$ and $\mat{B}$ are both stabilizers of a state, then $\mat{A}\mat{B}$ is also its stabilizer. Since this group of $2^n$ stabilizers can be represented by a subset of size $n$ that generates it, each stabilizer state can be also be identified by this generating set. In other words, each stabilizer state can be represented by $n$ signed Pauli operators $\set{\mat{S}_1,\dots,\mat{S}_n}$ that are independent and commuting. Here, the independency means that $\mat{S}_i\mat{S}_j\neq\pm\mat{S}_k$ for any different $i,j,k\in[n]$.

Furthermore, let $\mat{Q}$ be a Clifford circuit applied on a state $\ket{\psi}$ stabilized by $\set{\mat{S}_1,\dots,\mat{S}_n}$. Then we have
\[
\mat{Q}\ket{\psi}=
(\mat{Q}\mat{S}_j\mat{Q}^{-1})\mat{Q}\ket{\psi},\quad\forall j=1,\dots,n.
\]
Since $\mat{Q}\mat{S}_j\mat{Q}^{-1}$ is still a signed Pauli operator, $\mat{Q}$ can be viewed as a mapping from one group of $2^n$ stabilizers to another group of $2^n$ stabilizers. Such a mapping could be fully described by specifying the images of $2n$ independent Pauli operators.
% $n$ independent and commuting signed Pauli operators.
% this generating set. 

% The widely-used \textit{tableau method} utilizes this fact to represent stabilizer states. and encode $n$ stabilizers $\mat{S}_1,\dots\mat{S}_n$ be these stabilize

The widely-used \textit{tableau method}~\cite{aaronson2004improved} utilizes this fact to represent Clifford circuits efficiently.
The tableau method describes $\mat{Q}$ by its images of $\mat{X}^{\mat{e}_j}$ and $\mat{Z}^{\mat{e}_j}$ for $j=1,\dots,n$. These images are encoded in the following matrix,
\[
\begin{bmatrix}
\mat{\bar{X}} & \mat{\bar{Z}} & \mat{s}\\
\mat{\tilde{X}} & \mat{\tilde{Z}} & \mat{r}
\end{bmatrix},
\]
where $\mat{\bar{X}},\mat{\bar{Z}},\mat{\tilde{X}},\mat{\tilde{Z}}\in\binaryset^{n\times n},\mat{r},\mat{s}\in\binaryset^n$, and
\begin{align*}
\mat{Q}\mat{X}^{\mat{e}_j}\mat{Q}^{-1}
&=(-1)^{s_j}\bigotimes_{k=1}^n \mat{P}^{(\bar{z}_{jk},\bar{x}_{jk})},\\
\mat{Q}\mat{Z}^{\mat{e}_j}\mat{Q}^{-1}
&=(-1)^{r_j}\bigotimes_{k=1}^n \mat{P}^{(\tilde{z}_{jk},\tilde{x}_{jk})},
\end{align*}
for $\mat{P}^{(0,0)}=\mat{I},
\mat{P}^{(1,0)}=\mat{Z},
\mat{P}^{(0,1)}=\mat{X},
\mat{P}^{(1,1)}=\mat{Y}$.

A stabilizer state $\ket\psi$ can also be represented with a tableau by representing some $\mat{Q}$ such that $\ket\psi=\mat{Q}\ket{0^n}$ (note that the last $n$ rows of the tableau of $\mat{Q}$ exactly generates the stabilizers of $\ket\psi$). It has been shown that in this representation, circuit composition, amplitude evaluation and measurements can all be reduced to matrix multiplications and cost $O(n^\omega)$ time~\cite{aaronson2004improved,gosset2024fast}. Moreover, it is worth mentioning that applying basic Clifford gates such as $\{\mat{H}, \mat{S}, \mat{U}_{CZ}\}$ to a stabilizer state in tableau representation only costs $O(n)$ time~\cite{aaronson2004improved}. We also list these gate application algorithms in \cref{alg:tableau-operations}.

% In particular, let the stabilizers be $\mat{S}_1,\mat{S}_2,\dots,\mat{S}_n$.
% A stabilizer state is represented by $(\mat{\tilde{Z}},\mat{\tilde{X}},\mat{r})$, where
% $\mat{\tilde{Z}},\mat{\tilde{X}}\in\binaryset^{n\times n}$ and $\mat{r}\in\binaryset^n$ such that for the $j$'th operator $S_j$, 
% \[
% S_j=(-1)^{r_j}\bigotimes_{k=1}^n \mat{P}_{\tilde{z}_{jk},\tilde{x}_{jk}}
% % \mat{Z}^{j}\mat{X}^{\tilde{x}_j
% \quad\where\quad
% \mat{P}_{0,0}=\mat{I},\,
% \mat{P}_{1,0}=\mat{Z},\,
% \mat{P}_{0,1}=\mat{X},\,
% \mat{P}_{1,1}=\mat{Y}.
% \]

% Given a state $\ket{\psi}$ generated by a Clifford circuit, if $\mat{A}$ and $\mat{B}$ are both its stabilizers, then $\mat{A}\mat{B}$ is also a stabilizer. Therefore, all stabilizers  
% Applying Clifford gates $\{\mat{H}, \mat{S}, \mat{U}_{CZ}\}$ to a tableau representation only costs $O(n)$ time~\cite{aaronson2004improved} (see \cref{alg:tableau-operations}). Measuring a qubit in the representation above costs $O(n^2)$

\subsubsection{Other Clifford circuit simulation methods}
\label{sec:other-sim-methods}
Besides tableaux, graph states and quadratic form expansions are also used to simulate Clifford circuits in classical polynomial time. We discuss two such methods below.

The graph-state-based method~\cite{anders2006fast} represent a stabilizer state as a graph state with local Clifford operators applied on each vertex. Such representation always exists~\cite{van2004graphical}\footnote{We also provide an explicit algorithm that transforms a stabilizers to such form in \cref{app:stab-to-graph}.}. Then gate application and measurements can both be implemented by changing structure of the graph and updating the local operators due to \cref{thm:vertex-complementation}. Let $n$ be number of qubits and $m$ be number of gates. Then the time cost of finding the representation of the final state is $O(ndm)$, where $d\le n$ is the maximal degree of the graph during simulation, and each one-qubit measurement costs $O(d)$ time.

The quadratic-form-expansion methods~\cite{de2022fast} utilize the fact that an $n$-qubit stabilizer state can always be written in the following form~\cite{van2010classical},
\begin{equation}\label{eq:qfe-amplitudes}
    \ket{\psi}=2^{-r/2}\sum_{\mat{x}\in\binaryset^r}i^{\mat{x}^T\mat{Q}\mat{x}}\ket{(\mat{A}\mat{x}+\mat{b})\bmod2},
\end{equation}
where $r\le n$, $\mat{A},\in\binaryset^{n\times r}$, $\mat{b}\in\binaryset^{n}$, and symmetric $\mat{Q}\in\binaryset^{n\times n}$. Applying Clifford gates to this representation can be described by matrix operations~\cite{de2022fast}. Then simulating an $n$-qubit circuit with $m$ gates would cost $O(mn^2)$ time, after which sampling or amplitude computing only takes $O(n^2)$ time.

\subsection{Tensor networks}
Tensor networks are graphical representation of tensor contractions. The tensor networks used in this paper can be defined as $(G, O)$. $G$ is a multigraph such that $V(G)$ is a set of contracted tensors and $E(G)$ is the multiset of contracted indices. $O$ is a multiset reprenseting uncontracted indices, which is also known as open edges. For example, the following contraction,
\begin{equation}\label{eq:tn-example}
    \mat{T}_{ijk}=\sum_{\alpha\beta\gamma}\mat{A}_{ij\alpha\gamma}\mat{B}_{\alpha\beta}\mat{C}_{\beta\gamma k}
\end{equation}
is represented by the graph $G=(\set{\mat{A},\mat{B},\mat{C}},\set{\alpha,\beta,\gamma})$ and $O=\set{i,j,k}$, where
\begin{align*}
    i=j=\set{\mat{A}},\quad
    k=\mat{C},\quad
    \alpha=\set{\mat{A},\mat{B}},\quad
    \beta=\set{\mat{B},\mat{C}},\quad
    \gamma=\set{\mat{A},\mat{C}}.
\end{align*}
The diagrams of such graphs are known as tensor diagrams. For example, \cref{fig:tn-example} shows the tensor diagram of \cref{eq:tn-example}. When $O=\emptyset$, we call the tensor network as \textit{closed}.
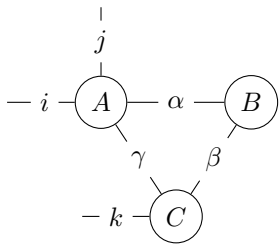
\begin{figure}[h]
\centering
\begin{tikzpicture}
\node[circle, draw, fill=white] (A) at (-1,0) {$A$};
\node[circle, draw, fill=white] (B) at (1,0) {$B$};
\node[circle, draw, fill=white] (C) at (0,-1.5) {$C$};
\node (alpha) at (0, 0) {$\alpha$};
\node (beta) at (0.5,-0.75) {$\beta$};
\node (gamma) at (-0.5,-0.75) {$\gamma$};
\node (i) at (-1.75,0) {$i$};
\node (j) at (-1, 0.8) {$j$};
\node (k) at (-0.8,-1.5) {$k$};
\draw (A) -- (i) (i) -- (-2.25,0);
\draw (A) -- (j) (j) -- (-1,1.25);
\draw (C) -- (k) (k) -- (-1.25,-1.5);
\draw (A) -- (alpha) (alpha) -- (B);
\draw (B) -- (beta) (beta) -- (C);
\draw (C) -- (gamma) (gamma) -- (A);
\end{tikzpicture}
\caption{A tensor diagram representing \cref{eq:tn-example}}
\label{fig:tn-example}
\end{figure}

% OLD VERSION: hypergraph definition
% A tensor network is a hypergraph of contracted tensors, where the hyperedges represent the indices to be contracted. Also, some of the hyperedges are marked as open, meaning that they are uncontracted (correspond go dimensions of the output tensor). For example, the following contraction,
% \begin{equation}\label{eq:tn-example}
%     \mat{T}_{ijk}=\sum_{ab}\mat{A}_{ija}\mat{B}_{abj}\mat{C}_{abk}
% \end{equation}
% is represented by hypergraph $G=(\set{\mat{A},\mat{B},\mat{C}},\set{i,j,a,b,k})$, where
% \begin{align*}
%     i=\set{\mat{A}},\quad
%     j=\set{\mat{A},\mat{B}},\quad
%     a=\set{\mat{A},\mat{B},\mat{C}},\quad
%     b=\set{\mat{B},\mat{C}},\quad
%     k=\set{\mat{C}},
% \end{align*}
% and the open edges are $\set{i,j,k}$.
% The diagrams of such hypergraphs are known as tensor diagrams. For example, \cref{fig:tn-example} shows the tensor diagram of \cref{eq:tn-example}.
% \begin{figure}
% \centering
% \begin{tikzpicture}
% \node[circle, draw, fill=white] (A) at (-1,0) {$A$};
% \node[circle, draw, fill=white] (B) at (1,0) {$B$};
% \node[circle, draw, fill=white] (C) at (0,-2) {$C$};
% \node (a) at (0, -0.9) {$a$};
% \node (b) at (1,-1.2) {$b$};
% \node (i) at (-2,0) {$i$};
% \node (j) at (0, 0.5) {$j$};
% \node (k) at (-1,-2) {$k$};
% \draw (A) -- (i) (i) -- (-2.5,0);
% \draw (A) -- (j) (B) -- (j) (j) -- (0,1.25);
% \draw (C) -- (k) (k) -- (-1.5,-2);
% \draw (A) -- (a) (B) -- (a) (C) -- (a);
% \draw (B) -- (b) (C) -- (b);
% \end{tikzpicture}
% \caption{A tensor diagram representing \cref{eq:tn-example}}
% \label{fig:tn-example}
% \end{figure}

Tensor networks are a powerful tool in studying multibody quantum systems where high-dimenional tensors are approximated by compact tensor network representation. Also, they play an important role in quantum circuit simulation because quantum circuits can be viewed as tensor networks~\cite{markov2008simulating}. In this paper, we will focus on the latter application.

\subsection{ZX-diagrams}\label{subsec:zx}
ZX-diagrams are special tensor networks that only contain two types of tensors known as \textit{Z-spiders} and \textit{X-spiders}, respectively~\cite{coecke2008interacting},
\begin{align*}
    \mat{Z}(\alpha)_{i_1i_2\dots i_n}&=\begin{cases}
    1 & i_1i_2\dots i_n=00\dots0 \\
    \exp(i\alpha) & i_1i_2\dots i_n=11\dots1 \\
    0 & \otherwise,
    \end{cases}\\
    \mat{X}(\alpha)&=\prod_{j=1}^n \actson{\mat{H}}{j}\mat{Z}(\alpha),
\end{align*}
where indices $i_j\in\binaryset$ for $j=1\dots n$ and $\actson{\mat{H}}{j}$ is the Hadmard matrix applied on the $j$'th index. In this paper, we will use white and gray nodes to represent Z- and X-spiders respectively in the diagrams, and when $\alpha=0$, we omit the 0 inside the node:
\begin{align*}
\mat{Z}(\alpha)=\vcenter{\hbox{\begin{tikzpicture}
    \node[znode] (v) at (0,0) {$\alpha$};
    \node (t) at (0.6,0.1) {$\vdots$};
    \coordinate (v0) at (0.75,0.5);
    \coordinate (v1) at (0.75,-0.5);
    \draw (v) to[out=60,in=180] (v0) (v) to[out=-60,in=180] (v1);
\end{tikzpicture}}}\quad,\qquad
\mat{X}(\alpha)=\vcenter{\hbox{\begin{tikzpicture}
    \node[xnode] (v) at (0,0) {$\alpha$};
    \node (t) at (0.6,0.1) {$\vdots$};
    \coordinate (v0) at (0.75,0.5);
    \coordinate (v1) at (0.75,-0.5);
    \draw (v) to[out=60,in=180] (v0) (v) to[out=-60,in=180] (v1);
\end{tikzpicture}}}\quad,\qquad
\vcenter{\hbox{\begin{tikzpicture}
    \node[znode] (v) at (0,0) {$0$};
    \node (t) at (0.6,0.1) {$\vdots$};
    \coordinate (v0) at (0.75,0.5);
    \coordinate (v1) at (0.75,-0.5);
    \draw (v) to[out=60,in=180] (v0) (v) to[out=-60,in=180] (v1);
\end{tikzpicture}}}
=\vcenter{\hbox{\begin{tikzpicture}
    \node[znode] (v) at (0,0) {\phantom{$0$}};
    \node (t) at (0.6,0.1) {$\vdots$};
    \coordinate (v0) at (0.75,0.5);
    \coordinate (v1) at (0.75,-0.5);
    \draw (v) to[out=60,in=180] (v0) (v) to[out=-60,in=180] (v1);
\end{tikzpicture}}}\quad.
\end{align*}
ZX-diagrams can be used to represent quantum states and quantum gates. For example, we have the following up to a multiplicative constant factor.
\begin{equation}\label{eq:zx-examples}
\begin{aligned}
&
\ket{0}=\vcenter{\hbox{\begin{tikzpicture}
    \node[xnode] (v) at (0,0) {$0$};
    \coordinate (v0) at (0.5,0);
    \draw (v) -- (v0);
\end{tikzpicture}}}\quad,\quad
\ket{1}=\vcenter{\hbox{\begin{tikzpicture}
    \node[xnode] (v) at (0,0) {$\pi$};
    \coordinate (v0) at (0.5,0);
    \draw (v) -- (v0);
\end{tikzpicture}}}\quad,\quad
\ket{+}=\vcenter{\hbox{\begin{tikzpicture}
    \node[znode] (v) at (0,0) {$0$};
    \coordinate (v0) at (0.5,0);
    \draw (v) -- (v0);
\end{tikzpicture}}}\quad,\quad
\ket{-}=\vcenter{\hbox{\begin{tikzpicture}
    \node[znode] (v) at (0,0) {$\pi$};
    \coordinate (v0) at (0.5,0);
    \draw (v) -- (v0);
\end{tikzpicture}}}\quad,\\&
\mat{Z}=\vcenter{\hbox{\begin{tikzpicture}
    \node[znode] (v) at (0,0) {$\pi$};
    \coordinate (v0) at (-0.5,0);
    \coordinate (v1) at (0.5,0);
    \draw (v) -- (v0) (v) -- (v1);
\end{tikzpicture}}}\quad,\quad
\mat{X}=\vcenter{\hbox{\begin{tikzpicture}
    \node[xnode] (v) at (0,0) {$\pi$};
    \coordinate (v0) at (-0.5,0);
    \coordinate (v1) at (0.5,0);
    \draw (v) -- (v0) (v) -- (v1);
\end{tikzpicture}}}\quad,\quad
\mat{S}=\vcenter{\hbox{\begin{tikzpicture}
    \node[znode] (v) at (0,0) {$\frac{\pi}{2}$};
    \coordinate (v0) at (-0.6,0);
    \coordinate (v1) at (0.6,0);
    \draw (v) -- (v0) (v) -- (v1);
\end{tikzpicture}}}\quad,\quad
\mat{T}=\vcenter{\hbox{\begin{tikzpicture}
    \node[znode] (v) at (0,0) {$\frac{\pi}{4}$};
    \coordinate (v0) at (-0.6,0);
    \coordinate (v1) at (0.6,0);
    \draw (v) -- (v0) (v) -- (v1);
\end{tikzpicture}}}\quad,\\&
\mat{U}_{CZ}=\vcenter{\hbox{\begin{tikzpicture}
    \node[znode] (u) at (0,0.5) {};
    \node[hnode] (h) at (0,0) {};
    \node[znode] (v) at (0,-0.5) {};
    \coordinate (u0) at (-0.5,0.5);
    \coordinate (u1) at (0.5,0.5);
    \coordinate (v0) at (-0.5,-0.5);
    \coordinate (v1) at (0.5,-0.5);
    \draw (u) -- (h) (h) -- (v);
    \draw (u) -- (u0) (u) -- (u1);
    \draw (v) -- (v0) (v) -- (v1);
\end{tikzpicture}}}\quad\where\quad
\vcenter{\hbox{\begin{tikzpicture}
    \node[hnode] (v) at (0,0) {};
    \coordinate (v0) at (-0.5,0);
    \coordinate (v1) at (0.5,0);
    \draw (v) -- (v0) (v) -- (v1);
\end{tikzpicture}}}=\mat{H}=
\vcenter{\hbox{\begin{tikzpicture}
    \node[znode] (v1) at (0,0) {$\frac{\pi}{2}$};
    \node[xnode] (v2) at (1,0) {$\frac{\pi}{2}$};
    \node[znode] (v3) at (2,0) {$\frac{\pi}{2}$};
    \coordinate (v0) at (-0.6,0);
    \coordinate (v4) at (2.6,0);
    \draw (v0) -- (v1) (v1) -- (v2) (v2) -- (v3) (v3) -- (v4);
\end{tikzpicture}}}\quad.
\end{aligned}
\end{equation}
Here, we use a square node to represent $H$ gate to simplify the diagram even though it is representable by only Z- and X-spiders. Also, we note that the constant factor is often omitted in ZX-diagrams, because it can be easily calculated along with the diagram itself. 
In this paper, we also omit these constant factors for simplicity as they would not affect our results.

It is worth noting that ZX-diagrams, along with a set of rules that transform them, are complete to describe linear maps between qubits~\cite{hadzihasanovic2018two}. One of the rules we use in this paper is known as \textit{merge} rule, which says that two connected spiders of the same type can be merged into one with their phases being summed up. A diagram that depicts merge rule is shown below.
\begin{equation}\label{eq:merge-rule}
\vcenter{\hbox{\begin{tikzpicture}
    \node[znode] (u) at (0,0) {$\alpha$};
    \node[znode] (v) at (1,0) {$\beta$};
    \node (u2) at (-0.6,0.1) {$\vdots$};
    \node (v2) at (1.6,0.1) {$\vdots$};
    \node (mid) at (0.5,0.1) {$\vdots$};
    \coordinate (u0) at (-0.75,0.5);
    \coordinate (u1) at (-0.75,-0.5);
    \coordinate (v0) at (1.75,0.5);
    \coordinate (v1) at (1.75,-0.5);
    \draw (u) to[out=45,in=135] (v);
    \draw (u) to[out=-45,in=-135] (v);
    \draw (u) to[out=120,in=0] (u0) (u) to[out=-120,in=0] (u1);
    \draw (v) to[out=60,in=180] (v0) (v) to[out=-60,in=180] (v1);
\end{tikzpicture}}}
=\vcenter{\hbox{\begin{tikzpicture}
    \node[znode] (v) at (0,0) {$\scriptstyle\alpha+\beta$};
    \node (u2) at (-0.6,0.1) {$\vdots$};
    \node (v2) at (0.6,0.1) {$\vdots$};
    \coordinate (u0) at (-0.75,0.5);
    \coordinate (u1) at (-0.75,-0.5);
    \coordinate (v0) at (0.75,0.5);
    \coordinate (v1) at (0.75,-0.5);
    \draw (v) to[out=120,in=0] (u0) (v) to[out=-120,in=0] (u1);
    \draw (v) to[out=60,in=180] (v0) (v) to[out=-60,in=180] (v1);
\end{tikzpicture}}}
\end{equation}
For a complete list of rules, we refer our readers to~\cite{hadzihasanovic2018two}.

\subsection{Graph states as ZX-diagrams}
\label{subsubsect:zx-graph-state}
ZX-diagrams can represent graph states in a convenient way. Let $G=(V,E)$ be a graph, then $\ket{G}$ is represented by a ZX-diagram with $\abs{V}$ Z-spiders of phase 0. Each Z-spider corresponds to a vertex in $G$, which has an open edge and is connected to its neighbors via a Hadamard node. For example, let $K_3$ be the complete graph with 3 vertices. The ZX-diagram of $\ket{K_3}$ is shown below. 
\begin{align*}
\vcenter{\hbox{\begin{tikzpicture}
    \node[znode] (u) at (1,0) {};
    \node[znode] (v) at (-1,0) {};
    \node[znode] (w) at (0,-1.5) {};
    \node[hnode] (e1) at (0,0) {};
    \node[hnode,rotate=56.3] (e2) at (0.5,-0.75) {};
    \node[hnode,rotate=123.7] (e3) at (-0.5,-0.75) {};
    \coordinate (u0) at (1,0.5);
    \coordinate (v0) at (-1,0.5);
    \coordinate (w0) at (0,-1);
    \draw (u) -- (e1) (e1) -- (v);
    \draw (u) -- (e2) (e2) -- (w);
    \draw (v) -- (e3) (e3) -- (w);
    \draw (u) -- (u0) (v) -- (v0) (w) -- (w0);
\end{tikzpicture}}}
\end{align*}
The correctness of this representation could be easily seen by rewriting~\cref{def:graph-state} using ZX-diagrams.

\subsection{Graph-like ZX-diagrams}
Due to the natural relation between graph states and ZX-diagrams, \textit{graph-like ZX-diagrams} have been proposed as a canonical form to help analyze and simplify them~\cite{duncan2020graph}.
\begin{definition}[Graph-like ZX-diagram]\label{def:graph-like-zx}
A ZX-diagram is called graph-like if
\begin{enumerate}
    \item it only contains Z-spiders and Hadamard gates;
    \item Z-spiders are only connected via Hadamard gates and Hadamard gates are only used to connect Z-spiders; 
    \item Z-spiders and all Hadamard gates connecting them form a simple graph;
    \item every open edge is incident to a Z-spider and every Z-spider is incident to at most one open edge.
\end{enumerate}
\end{definition}
By definition, each graph-like ZX-diagram admits an underlying graph formed by Z-spiders as the \textit{vertices} and Hadamard gates as the \textit{edges}. Note that this graph is
different from the graph by viewing the diagram as a tensor network (where Hadamard gates are treated as vertices). However, when considering the treewidth of a graph-like ZX-diagram, both definitions of the underlying graph lead to equivalent tree decompositions and give the same definition of treewidth. Therefore, we will not distinguish them when the context is clear. 

Any ZX-diagram is equivalent to a graph-like ZX-diagram~\cite{duncan2020graph}. Moreover, it is implicitly shown by the proof of \cite[ Lemma 3.2]{duncan2020graph} that the complexity of this reduction is linear and the treewidth of the diagram is not increased. These are summarized in the following lemma as a modified version of \cite[Lemma 3.2]{duncan2020graph}.
\begin{lemma}
\label{lem:zx-to-graph}
Let $\mathcal{D}$ be any ZX-diagram with $n$ spiders, $m$ contracted indices and $k$ uncontracted indices. Then $\mathcal{D}$ can be converted to an equivalent graph-like ZX-diagram $\mathcal{D}'$ with at most $(n+5k)$ vertices and at most $(m+2k)$ edges in $O(m+n+k)$ time.
Moreover, if a tree decomposition with width $t$ of $\mathcal{D}$ is given, one can obtain a tree decomposition of $\mathcal{D}'$ with width at most $t$ in the same time complexity.
\end{lemma}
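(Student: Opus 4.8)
The plan is to make the construction behind \cite[Lemma 3.2]{duncan2020graph} fully explicit and then track, in a single linear pass, three quantities that the original statement leaves implicit: the vertex and edge counts, the running time, and an induced tree decomposition. \emph{Correctness} (equivalence of $\mathcal{D}$ and $\mathcal{D}'$, up to the global scalar the paper discards) will be inherited verbatim from the soundness of the ZX rewrite rules used \cite{hadzihasanovic2018two}, so the genuinely new content is the quantitative bookkeeping. Throughout I would treat each Hadamard gate as a \emph{marked edge} between two Z-spiders rather than as a vertex; by the remark preceding \cref{lem:zx-to-graph} this does not change any treewidth, and it turns every rewrite below into either a graph-minor operation or the addition of a pendant/subdivision vertex, which is exactly what makes the width analysis clean.

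The reduction proceeds in three phases. \textbf{(A) Color change.} For every X-spider I apply $\mat{X}(\alpha)=\big(\prod_j\actson{\mat{H}}{j}\big)\mat{Z}(\alpha)$, recoloring it to a Z-spider and placing a Hadamard on each incident edge; the spider count is unchanged and remains $n$. \textbf{(B) Interior normalization.} I cancel doubled Hadamards on any edge via $\mat{H}^2=\mat{I}$; for every remaining plain edge between two Z-spiders I fuse its endpoints by the merge rule of \cref{eq:merge-rule}, summing phases; finally I delete parallel Hadamard edges and Hadamard self-loops. Each of these is an edge deletion or an edge contraction on the spider graph, so it can only decrease the vertex and edge counts, and after (B) the diagram satisfies conditions 1--3 of \cref{def:graph-like-zx}. \textbf{(C) Boundary cleanup.} It remains to enforce condition 4, via a short case analysis over the $k$ open edges: an open edge incident to a Hadamard, a bare identity wire carrying two open ends, and a spider bearing several open edges are each repaired by inserting identity Z-spiders (and at most one Hadamard edge apiece). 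Charging every insertion to the open edge that caused it yields at most $5$ new spiders and $2$ new edges per open edge, hence the bounds $n+5k$ and $m+2k$.

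For the complexity, every phase visits each spider and each edge a constant number of times and performs constant local work, so the total time is $O(m+n+k)$; maintaining the decomposition below adds only constant work per rewrite. The treewidth claim is handled by transforming the given decomposition $T$ alongside the rewrites. First I would convert $T$ (in which a Hadamard box appears as a degree-two vertex) into a spider-level decomposition by contracting each such vertex into one of its spider neighbors, a minor operation that does not increase the width. Then each phase-(B) fusion $u,v\mapsto w$ is reflected by choosing a bag containing both $u$ and $v$ (one exists since they are adjacent) and substituting $w$ for $u$ and $v$ in every bag, which replaces two elements by one and so never enlarges a bag; each deletion is a minor and is free. Finally, every identity spider $s$ introduced in phase (C) attaches to an already-present spider $u$, so I add it inside a fresh leaf bag $\{u,s\}$ (or $\{u,v,s\}$ when it subdivides an edge $uv$) hung off a bag containing $u$; such a bag has size at most $3$, so the width stays $\le t$ whenever $t\ge 2$, the remaining tiny cases being immediate.

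The main obstacle is the phase-(C) accounting together with the requirement that the width be \emph{at most} $t$ rather than $t+O(1)$. Pinning the constants $5$ and $2$ requires enumerating every degenerate boundary configuration and checking that no repair leaks into the interior, and verifying width $\le t$ requires that each new boundary spider can always be absorbed into a fresh small leaf bag without enlarging the bags of size at most $t+1$ produced by $T$; the argument above shows this is possible precisely because a newly inserted spider is only ever adjacent to a single pre-existing spider. Everything else reduces to the soundness of the rewrite rules and the minor-monotonicity of treewidth.
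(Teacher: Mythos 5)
Your proposal is correct and follows essentially the same route as the paper's proof: the same six-step rewrite pipeline from \cite[Lemma 3.2]{duncan2020graph} (color change, Hadamard cancellation, spider fusion, parallel-edge/self-loop removal, dummy-spider insertion at the boundary), the same per-open-edge charging for the $n+5k$ and $m+2k$ bounds, and the same observation that every rewrite acts on the tree decomposition by contraction, deletion, or the attachment of small chained bags, none of which increases the width. Your only gloss is the width-$\le t$ claim for $t<2$, which you correctly flag as immediate (a subdividing spider $s$ on an edge $uv$ can always be absorbed with two bags $\{u,s\},\{s,v\}$ rather than one bag $\{u,v,s\}$, so no size-$3$ bag is ever needed); if anything, your bookkeeping of the decomposition update is more explicit than the paper's.
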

\begin{proof}
We restate the algorithm given by the proof of~\cite[Lemma 3.2]{duncan2020graph} to analyze its time and space complexity. The algorithm takes the following steps to convert $\mathcal{D}$ to $\mathcal{D}'$.
\begin{enumerate}
\item Replace all X-spiders with Z-spiders and Hadamards by definition.
\begin{align*}
\vcenter{\hbox{\begin{tikzpicture}
    \node[xnode] (v) at (0,0) {$\alpha$};
    \node (t) at (0.6,0.1) {$\vdots$};
    \coordinate (v0) at (0.75,0.5);
    \coordinate (v1) at (0.75,-0.5);
    \draw (v) to[out=60,in=180] (v0) (v) to[out=-60,in=180] (v1);
\end{tikzpicture}}}=
\vcenter{\hbox{\begin{tikzpicture}
    \node[znode] (v) at (0,0) {$\alpha$};
    \node (t) at (0.6,0.1) {$\vdots$};
    \node[hnode] (h0) at (0.75,0.5) {};
    \node[hnode] (h1) at (0.75,-0.5) {};
    \coordinate (v0) at (1,0.5);
    \coordinate (v1) at (1,-0.5);
    \draw (v) to[out=60,in=180] (h0) (v) to[out=-60,in=180] (h1);
    \draw (h0) -- (v0) (h1) -- (v1);
\end{tikzpicture}}}
\end{align*}

\item Remove adjacent Hadamards since they cancel each other.
\begin{align*}
\vcenter{\hbox{\begin{tikzpicture}
    \node[hnode] (u) at (-0.25,0) {};
    \node[hnode] (v) at (0.25,0) {};
    \coordinate (u0) at (-0.75,0);
    \coordinate (v0) at (0.75,0);
    \draw (v) -- (v0) (u) -- (u0) (u) -- (v);
\end{tikzpicture}}}=
\vcenter{\hbox{\begin{tikzpicture}
    \coordinate (v0) at (-0.5,0);
    \coordinate (v1) at (0.5,0);
    \draw (v0) -- (v1);
\end{tikzpicture}}}
\end{align*}

\item Remove edges without Hadamards by merging adjacent Z-spiders.
\begin{align*}
\vcenter{\hbox{\begin{tikzpicture}
    \node[znode] (u) at (0,0) {$\alpha$};
    \node[znode] (v) at (1,0) {$\beta$};
    \node (u2) at (-0.6,0.1) {$\vdots$};
    \node (v2) at (1.6,0.1) {$\vdots$};
    \node (mid) at (0.5,0.1) {$\vdots$};
    \coordinate (u0) at (-0.75,0.5);
    \coordinate (u1) at (-0.75,-0.5);
    \coordinate (v0) at (1.75,0.5);
    \coordinate (v1) at (1.75,-0.5);
    \draw (u) to[out=45,in=135] (v);
    \draw (u) to[out=-45,in=-135] (v);
    \draw (u) to[out=120,in=0] (u0) (u) to[out=-120,in=0] (u1);
    \draw (v) to[out=60,in=180] (v0) (v) to[out=-60,in=180] (v1);
\end{tikzpicture}}}
=\vcenter{\hbox{\begin{tikzpicture}
    \node[znode] (v) at (0,0) {$\scriptstyle\alpha+\beta$};
    \node (u2) at (-0.6,0.1) {$\vdots$};
    \node (v2) at (0.6,0.1) {$\vdots$};
    \coordinate (u0) at (-0.75,0.5);
    \coordinate (u1) at (-0.75,-0.5);
    \coordinate (v0) at (0.75,0.5);
    \coordinate (v1) at (0.75,-0.5);
    \draw (v) to[out=120,in=0] (u0) (v) to[out=-120,in=0] (u1);
    \draw (v) to[out=60,in=180] (v0) (v) to[out=-60,in=180] (v1);
\end{tikzpicture}}}
\end{align*}
% \item introduce dummy Z-spiders and Hadamards to these edges, which preserves the structure of the input graph.
% \begin{align*}
% \vcenter{\hbox{\begin{tikzpicture}
%     \coordinate (v0) at (-0.5,0);
%     \coordinate (v1) at (0.5,0);
%     \draw (v0) -- (v1);
% \end{tikzpicture}}}=
% \vcenter{\hbox{\begin{tikzpicture}
%     \node[hnode] (u) at (-0.5,0) {};
%     \node[hnode] (v) at (0.5,0) {};
%     \coordinate (u0) at (-1,0);
%     \node[znode] (z) at (0,0) {};
%     \coordinate (v0) at (1,0);
%     \draw (v) -- (v0) (u) -- (u0) (u) -- (z) (z) -- (v);
% \end{tikzpicture}}}
% \end{align*}

\item Remove parallel Hadamard edges and self loops by using the following rules:
\begin{align*}
\vcenter{\hbox{\begin{tikzpicture}
    \node[znode] (u) at (0,0) {$\alpha$};
    \node[znode] (v) at (1,0) {$\beta$};
    \node[hnode] (h1) at (0.5,0.25) {};
    \node[hnode] (h2) at (0.5,-0.25) {};
    \node (u2) at (-0.6,0.1) {$\vdots$};
    \node (v2) at (1.6,0.1) {$\vdots$};
    \coordinate (u0) at (-0.75,0.5);
    \coordinate (u1) at (-0.75,-0.5);
    \coordinate (v0) at (1.75,0.5);
    \coordinate (v1) at (1.75,-0.5);
    \draw (u) to[out=45,in=180] (h1) (h1) to[out=0,in=135] (v);
    \draw (u) to[out=-45,in=180] (h2) (h2) to[out=0,in=-135] (v);
    \draw (u) to[out=120,in=0] (u0) (u) to[out=-120,in=0] (u1);
    \draw (v) to[out=60,in=180] (v0) (v) to[out=-60,in=180] (v1);
\end{tikzpicture}}}
&=\vcenter{\hbox{\begin{tikzpicture}
    \node[znode] (u) at (0,0) {$\alpha$};
    \node[znode] (v) at (1,0) {$\beta$};
    \node (u2) at (-0.6,0.1) {$\vdots$};
    \node (v2) at (1.6,0.1) {$\vdots$};
    \coordinate (u0) at (-0.75,0.5);
    \coordinate (u1) at (-0.75,-0.5);
    \coordinate (v0) at (1.75,0.5);
    \coordinate (v1) at (1.75,-0.5);
    \draw (u) to[out=120,in=0] (u0) (u) to[out=-120,in=0] (u1);
    \draw (v) to[out=60,in=180] (v0) (v) to[out=-60,in=180] (v1);
\end{tikzpicture}}}\\[1ex]
\vcenter{\hbox{\begin{tikzpicture}
    \node[znode] (u) at (0,0) {$\alpha$};
    \node[hnode] (h) at (0.5,0) {};
    \node (u2) at (-0.6,0.1) {$\vdots$};
    \coordinate (u0) at (-0.75,0.5);
    \coordinate (u1) at (-0.75,-0.5);
    \draw (u) to[out=45,in=90] (h) (h) to[out=-90,in=-45] (u);
    \draw (u) to[out=120,in=0] (u0) (u) to[out=-120,in=0] (u1);
\end{tikzpicture}}}
&=\vcenter{\hbox{\begin{tikzpicture}
    \node[znode] (u) at (0,0) {$\alpha+\pi$};
    \node (u2) at (-0.6,0.1) {$\vdots$};
    \coordinate (u0) at (-0.75,0.5);
    \coordinate (u1) at (-0.75,-0.5);
    \draw (u) to[out=120,in=0] (u0) (u) to[out=-120,in=0] (u1);
\end{tikzpicture}}}\\[1ex]
\vcenter{\hbox{\begin{tikzpicture}
    \node[znode] (u) at (0,0) {$\alpha$};
    \node (u2) at (-0.6,0.1) {$\vdots$};
    \coordinate (v) at (0.5,0);
    \coordinate (u1) at (-0.75,-0.5);
    \draw (u) to[out=60,in=90] (v) (v) to[out=-90,in=-60] (u);
    \draw (u) to[out=120,in=0] (u0) (u) to[out=-120,in=0] (u1);
\end{tikzpicture}}}
&=\vcenter{\hbox{\begin{tikzpicture}
    \node[znode] (u) at (0,0) {$\alpha$};
    \node (u2) at (-0.6,0.1) {$\vdots$};
    \coordinate (u0) at (-0.75,0.5);
    \coordinate (u1) at (-0.75,-0.5);
    \draw (u) to[out=120,in=0] (u0) (u) to[out=-120,in=0] (u1);
\end{tikzpicture}}}
\end{align*}

\item For open edges that are not connected to Z-spiders, add dummy spiders to connect them.
\begin{align*}
\vcenter{\hbox{\begin{tikzpicture}
    \coordinate (v0) at (-0.5,0);
    \coordinate (v1) at (0.5,0);
    \draw (v0) -- (v1);
\end{tikzpicture}}}&=
\vcenter{\hbox{\begin{tikzpicture}
    \node[znode] (u) at (0,0) {};
    \node[znode] (v) at (1,0) {};
    \node[znode] (w) at (2,0) {};
    \node[hnode] (h1) at (0.5,0) {};
    \node[hnode] (h2) at (1.5,0) {};
    \coordinate (u0) at (-0.5,0);
    \coordinate (w0) at (2.5,0);
    \draw (u) -- (h1) (h1) -- (v) (v) -- (h2) (h2) -- (w);
    \draw (u) -- (u0) (w) -- (w0);
\end{tikzpicture}}}\\
\vcenter{\hbox{\begin{tikzpicture}
    \node[hnode] (h1) at (0,0) {};
    \node (text) at (-1,0) {$\cdots$};
    \coordinate (v0) at (-0.5,0);
    \coordinate (v1) at (0.5,0);
    \draw (v0) -- (h1) (h1) -- (v1);
\end{tikzpicture}}}&=
\vcenter{\hbox{\begin{tikzpicture}
    \node[hnode] (h1) at (0,0) {};
    \node[znode] (u) at (0.75,0) {};
    \node (text) at (-1,0) {$\cdots$};
    \coordinate (v0) at (-0.5,0);
    \coordinate (v1) at (1.25,0);
    \draw (v0) -- (h1) (h1) -- (u) (u) -- (v1);
\end{tikzpicture}}}
\end{align*}

\item For multiple open edges connected to one Z-spider, adding dummy spiders to split them.
\begin{align*}
\vcenter{\hbox{\begin{tikzpicture}
    \node[znode] (u) at (0,0) {$\alpha$};
    \node (u2) at (-0.6,0.1) {$\vdots$};
    \coordinate (v) at (0.75,0);
    \coordinate (u0) at (-0.75,0.5);
    \coordinate (u1) at (-0.75,-0.5);
    \draw (u) -- (v);
    \draw (u) to[out=120,in=0] (u0) (u) to[out=-120,in=0] (u1);
\end{tikzpicture}}}
&=\vcenter{\hbox{\begin{tikzpicture}
    \node[znode] (u) at (0,0) {$\alpha$};
    \node[hnode] (v1) at (0.5,0) {};
    \node[znode] (v2) at (1,0) {};
    \node[hnode] (v3) at (1.5,0) {};
    \node[znode] (v4) at (2,0) {};
    \node (u2) at (-0.6,0.1) {$\vdots$};
    \coordinate (v5) at (2.5,0);
    \coordinate (u0) at (-0.75,0.5);
    \coordinate (u1) at (-0.75,-0.5);
    \draw (u) -- (v1) (v1) -- (v2) (v2) -- (v3) (v3) -- (v4) (v4) -- (v5);
    \draw (u) to[out=120,in=0] (u0) (u) to[out=-120,in=0] (u1);
\end{tikzpicture}}}
\end{align*}
\end{enumerate}
Note that each step can be completed by traversing the graph once, so the running time is $O(n+m+k)$.
Also, note that steps 1 to 4 does not does not increase the number of vertices/spiders or edges, and step 5 and 6 introduce at most $5$ vertices and $2$ edges for each uncontracted indices so $\mathcal{D}'$ has at most $(n+5k)$ vertices and at most $(m+2k)$ edges.
Moreover, consider an arbitrary tree decomposition $\mathcal{T}$ of $\mathcal{D}$. All graph operations involved in steps 1 to 6 (including merging connected vertices, deleting vertices/edges, and adding chained dummy vertices) transform $\mathcal{T}$ in linear time, and none of these operations increases the width of $\mathcal{T}$.
\end{proof}

\section{Clifford circuit simulation as phased graph state simulation}
\label{sec:circ-as-pgs}
In this section, we define phased graph states, and present a reduction from Clifford circuit simulation to phased graph state simulation. 
This allows us to transform the circuit using vertex and edge complementation (i.e. \cref{thm:vertex-complementation} and \cref{thm:edge-complementation}) and apply tools from linear algebra to solve the simulation problem in \cref{sec:pgs-sim}.

To begin with, we define \textit{phased graph states} as a generalization of graph states with arbitrary $\rzm$ gates applied on each vertex of the graph. Such states can be represented by symmetric matrices whose off-diagonals are over $\F_2$ and diagonals are over $\Z_4$. In particular, consider $\mat{A}\in\Z_4^{n\times n}$, we define $\Omega_{1,2}(\mat{A})=\omega_1(\mat{A})+2\omega_2(D(\mat{A}))$ and denote $\J_n=\set*{\mat{A}\in\Z_4^{n\times n}}{\Omega_{1,2}(\mat{A})=\mat{A},\,\mat{A}=\mat{A}^T}$.
Then phased graph states can be defined as follows.

\begin{definition}[Phased graph state]\label{def:phased-graph-state}
Let $\mat{A}\in\J_n$. Then the phased graph state $\ket{\mat{A}}$ is defined by
$
\ket{\mat{A}}=\rzm^{d(\mat{A})}\ket{G}
$,
where $G$ is the graph with adjacency matrix $O(\mat{A})$.
\end{definition}
Similar notations of phased graph states have been seen in the literature~\cite{zhao2016fast,spengler2013graph}. Also, the amplitudes of $\ket{\mat{A}}$ in computational basis could be obtained as a generalization of \cref{lem:graph-state-amplitudes}.
\begin{lemma}\label{lem:phased-graph-state-amplitudes}
Let $\mat{A}\in\J_n$ and $\mat{x}\in\F_2^n$. Then
\(
\braket{\mat{x}}{\mat{A}} = 2^{-n/2}(-i)^{\B x^T\B A\B x}.
\)
\end{lemma}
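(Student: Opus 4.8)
The plan is to reduce \cref{lem:phased-graph-state-amplitudes} to \cref{lem:graph-state-amplitudes} by peeling off the diagonal phase gates $\rzm^{d(\mat{A})}$ and tracking how each one modifies the amplitude in the computational basis. Write $\mat{A} = O(\mat{A}) + D(\mat{A})$, let $G$ be the graph with adjacency matrix $O(\mat{A})$, and recall from \cref{def:phased-graph-state} that $\ket{\mat{A}} = \rzm^{d(\mat{A})}\ket{G} = \bigotimes_{j=1}^n \rzm^{a_{jj}}\ket{G}$. First I would compute $\braket{\mat{x}}{\mat{A}} = \bra{\mat{x}}\rzm^{d(\mat{A})}\ket{G}$. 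Since $\rzm = \mat{R}_z(-\pi/2) = \mqty[1 & 0 \\ 0 & -i]$ is diagonal, it acts on a computational-basis bra by a scalar: $\bra{x_j}\rzm^{a_{jj}} = (-i)^{a_{jj} x_j}\bra{x_j}$, because the phase $(-i)^{a_{jj}}$ is picked up only when $x_j = 1$. Here $a_{jj}\in\Z_4$ so the exponent is well defined, using $(-i)^4 = 1$.

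Tensoring these single-qubit phases together gives
\begin{equation*}
\bra{\mat{x}}\rzm^{d(\mat{A})} = \left(\prod_{j=1}^n (-i)^{a_{jj}x_j}\right)\bra{\mat{x}} = (-i)^{\sum_j a_{jj} x_j}\bra{\mat{x}},
\end{equation*}
and since $x_j\in\binaryset$ satisfies $x_j^2 = x_j$, the exponent is exactly $\sum_j a_{jj}x_j^2 = \mat{x}^T D(\mat{A})\mat{x}$. Combining with \cref{lem:graph-state-amplitudes} applied to $\ket{G}$, whose adjacency matrix is $O(\mat{A})$, I get
\begin{equation*}
\braket{\mat{x}}{\mat{A}} = (-i)^{\mat{x}^T D(\mat{A})\mat{x}}\braket{\mat{x}}{G} = (-i)^{\mat{x}^T D(\mat{A})\mat{x}}\, 2^{-n/2}(-1)^{\frac{1}{2}\mat{x}^T O(\mat{A})\mat{x}}.
\end{equation*}

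The remaining step is to merge the two phase factors into the single expression $(-i)^{\mat{x}^T\mat{A}\mat{x}}$. I would write $(-1)^{\frac{1}{2}\mat{x}^T O(\mat{A})\mat{x}} = (-i)^{\mat{x}^T O(\mat{A})\mat{x}}$; this identity holds because $\mat{x}^T O(\mat{A})\mat{x}$ is even (the remark after \cref{lem:graph-state-amplitudes} notes $\mat{x}^T\mat{A}\mat{x}$ is even for an adjacency matrix, and $(-i)^{2m} = (-1)^m$). Then the product of the two factors is $(-i)^{\mat{x}^T O(\mat{A})\mat{x} + \mat{x}^T D(\mat{A})\mat{x}} = (-i)^{\mat{x}^T \mat{A}\mat{x}}$, giving the claimed formula. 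The one point requiring care — and the only genuine obstacle — is confirming that the exponent arithmetic is consistent modulo $4$: the off-diagonal contribution must be interpreted so that $\frac{1}{2}\mat{x}^T O(\mat{A})\mat{x}$ (an integer) matches $\mat{x}^T O(\mat{A})\mat{x}$ in the base $(-i)$, while the diagonal contribution $\mat{x}^T D(\mat{A})\mat{x}$ already lives naturally mod $4$ since $a_{jj}\in\Z_4$. Verifying these two conventions align into a single $(-i)^{\mat{x}^T\mat{A}\mat{x}}$ with the exponent read mod $4$ is the crux, but it follows directly from the evenness of the off-diagonal form.
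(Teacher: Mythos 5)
Your proposal is correct and follows essentially the same route as the paper's proof: peel off the diagonal $\rzm^{d(\mat{A})}$ phases to obtain the factor $(-i)^{\mat{x}^T d(\mat{A})}$, apply \cref{lem:graph-state-amplitudes} to the underlying graph state, and merge the factors using the evenness of $\mat{x}^T O(\mat{A})\mat{x}$. Your write-up simply makes explicit the mod-4 exponent bookkeeping that the paper's one-line computation leaves implicit.
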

\begin{proof}
Note that $\rzm$ gates in $\ket{\mat{A}}$ only introduce a complex phase depending on $\mat{x}$ and $d(\mat{A})$. Then by \cref{def:phased-graph-state} and \cref{lem:graph-state-amplitudes},
\[
\bra{\mat{x}}\ket{\mat{A}}=\bra{\mat{x}}\rzm^{d(\mat{A})}\ket{O(\mat{A})}=(-i)^{\mat{x}^T d(\mat{A})}\bra{\mat{x}}\ket{O(\mat{A})}
=2^{-n/2}(-i)^{\B x^T\B A\B x}.
\]
\end{proof}

We are interested in solving the strong and weak simulation problem for phased graph states defined as follows.
\begin{definition}[Phased graph state simulation]\label{def:pgs-sim}
Let $\mat{A}\in\J_n$. The \textit{strong simulation} of $\ket{\mat{A}}$ takes any $\mat{x}\in\F_2^n$ as input and calculates
\[
\mel{\mat{x}}{\mat{H}^{\otimes n}}{\mat{A}}.
\]
The \textit{weak simulation} of $\ket{\mat{A}}$ takes as input $S\subseteq[n]$ and $\mat{y}\in\F_2^n$ as input, and samples $\mat{x}\in\mathcal{X}_{S,\mat{y}}$ according to the distribution
\[
\forall \mat{x} \in \mathcal{X}_{S,y},\;p(\mat{x})\propto\abs{\mel{\mat{x}}{\mat{H}^{\otimes n}}{\mat{A}}}^2.
\]
\end{definition}
Note that we always consider the phased graph states being transformed by Hadamard gates because of two reasons: (1) this is a natural result of \cref{lem:circ-as-pgs} as presented below; (2) vertices that are not transformed by Hadamard gates can be efficiently eliminated from the graph as we will discuss in \cref{sec:gs-sim}.

Now we are ready to present the following lemma that reduces Clifford circuit simulation problem as defined in \cref{def:circ-sim} to phased graph state simulation as defined in \cref{def:pgs-sim}.

\begin{lemma}\label{lem:circ-as-pgs}
Let $\mat{U}$ be an $n$-qubit Clifford circuit represented by $m$ gates from the gate set $\{\mat{H},\rzp,\mat{U}_\text{CZ}\}$. Then one can compute a tuple $(\mat{A},S,\mat{y})$ in $O(m+n)$ time, where $\mat{A}\in\J_N$ represents a phased graph state for some $N=O(m+n)$, $S\subseteq[n]$ with $\abs{S}=N-n$, and $\mat{y}\in\F_2^S$,
such that
\begin{itemize}
\item
the strong simulation of $\mat{U}$ for any $\mat{x}\in\F_2^{[n]\setminus S}$ is equivalent to the strong simulation of $\ket{\mat{A}}$ with input $\tilde{\mat{x}}$ such that $\tilde{x}_i=\begin{cases}
x_i & i\in[n]\setminus S\\
y_i & i\in S;
\end{cases}$
\item
 the weak simulation of $\ket{\mat{A}}$ is equivalent to the weak simulation of $\ket{\mat{A}}$ with input $S$ and $\mat{y}$;
\item 
any tree decomposition of the circuit $U$ with width $t$ can be converted to a tree decomposition of the graph corresponding to $A$ with width at most $t$.
\end{itemize}
\end{lemma}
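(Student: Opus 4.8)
The plan is to run $\mat U\ket{0^n}$ through the ZX machinery already set up, read a phased graph state off of the resulting graph-like diagram, and identify the non-output spiders as the coordinates that get fixed (the set $S$). The one device that makes the $\mat H^{\otimes N}$ of \cref{def:pgs-sim} appear and then cancel cleanly is to append a Hadamard to each output wire before converting.

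First I would assemble a ZX-diagram $\mathcal D$ for the state $\ket{\phi}:=\mat H^{\otimes n}\mat U\ket{0^n}$, i.e. the circuit $\mat U$ on $\ket{0^n}$ with a Hadamard box glued to each of its $n$ output wires. Using the building blocks of \cref{eq:zx-examples}, each input contributes one phase-$0$ X-spider (the diagram of $\ket 0$), while each of the $m$ gates $\mat H,\rzp,\mat U_{CZ}$ and each appended output Hadamard contributes $O(1)$ spiders and Hadamard boxes; thus $\mathcal D$ has $O(m+n)$ spiders and edges and is built in $O(m+n)$ time. Since each gate becomes a bounded local gadget, a width-$t$ tree decomposition of $\mat U$ turns into one of $\mathcal D$ of width at most $t$ by inserting each gadget's vertices into the bags already holding the corresponding gate. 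I then apply \cref{lem:zx-to-graph} to obtain an equivalent graph-like diagram $\mathcal D'$ with $N=O(m+n)$ Z-spiders whose Hadamard edges form a simple graph and with $n$ bare open legs (the outputs), in $O(m+n)$ time and without increasing the width.

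Next I read a matrix $\mat A\in\J_N$ off of $\mathcal D'$: the adjacency matrix of the Hadamard-edge graph is $O(\mat A)\in\F_2^{N\times N}$ (well defined because the graph is simple), and each spider phase $\theta_j\in\{0,\pi/2,\pi,3\pi/2\}$ — a multiple of $\pi/2$ because $\mat U$ is Clifford — gives a diagonal entry $a_{jj}\in\Z_4$ through $\rzm^{a_{jj}}=\mat{R}_z(\theta_j)$. By the graph-state dictionary of \cref{subsubsect:zx-graph-state}, equipping every one of the $N$ spiders with a bare output leg turns $\mathcal D'$ into the phased graph state $\ket{\mat A}$ of \cref{def:phased-graph-state}. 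The $n$ spiders that actually carry an output leg in $\mathcal D'$ are the visible qubits; I index the remaining $N-n$ spiders by $S$, so $\abs S=N-n$. These carry no open leg in $\mathcal D'=\ket{\phi}$, and by the spider definition in \cref{subsec:zx} an internal phase-$\alpha$ Z-spider is exactly the same spider given one extra bare leg and then capped with $\bra+$, since $\braket{+}{\,\cdot\,}$ deletes that leg while preserving $\alpha$. Hence $\ket{\phi}$ is $\ket{\mat A}$ with its $S$-legs capped by $\bra+$.

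It remains to phrase the caps in the $\mat H^{\otimes N}$ language and check that the amplitudes agree. Because $\bra 0\mat H=\bra+$, setting $\mat y=\mathbf 0$ on $S$ makes $\bra{y_i}\actson{\mat H}{i}$ a $\bra+$-cap for each $i\in S$, collapsing $\ket{\mat A}$ to $\ket{\phi}$ on the visible qubits; so for $\tilde{\mat x}$ equal to $\mat x$ off $S$ and to $\mat y$ on $S$,
\[
\mel{\tilde{\mat x}}{\mat H^{\otimes N}}{\mat A}
=\bra{\mat x}\mat H^{\otimes n}\ket{\phi}
=\bra{\mat x}\mat H^{\otimes n}\mat H^{\otimes n}\mat U\ket{0^n}
=\mel{\mat x}{\mat U}{0^n},
\]
up to the global scalar that ZX discards and that is computable alongside the diagram. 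This gives the strong-simulation equivalence; taking squared magnitudes and normalizing gives the weak-simulation equivalence, since the requirement $\tilde{\mat x}|_S=\mat y$ is precisely the constraint defining $\mathcal X_{S,\mat y}$. I expect the main obstacle to be this third-to-fourth step: justifying that appending the output Hadamards is exactly what makes $\mat H^{\otimes n}$ cancel so the identity is \emph{exact} (not merely up to a basis change), that every non-output spider really is a $\bra+$-cap, and that the discarded scalar is tracked consistently so the two distributions coincide. By comparison, the treewidth bullet is mechanical, following from the gadget-locality argument above together with the width guarantee of \cref{lem:zx-to-graph}.
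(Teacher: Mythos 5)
Your proposal is correct and follows essentially the same route as the paper's proof: build the ZX-diagram of the circuit, convert it to graph-like form via \cref{lem:zx-to-graph} (which also supplies the $O(m+n)$ bounds and the treewidth bullet), read off $\mat{A}\in\J_N$ from the resulting simple graph and its $\pi/2$-multiple phases, and track the discarded global scalar separately. The only difference is bookkeeping: the paper contracts $\bra{x_j}$ onto the outputs, splits every phase off its spider, and rewrites each one-qubit effect as $\bra{\tilde{x}_j}\mat{H}\rzm^{a_j}$ (so the fixed bits $\mat{y}$ are determined by the internal phases), whereas you append output Hadamards and cap internal spiders with $\bra{+}=\bra{0}\mat{H}$ (so $\mat{y}=0$ and the phases live entirely on the diagonal of $\mat{A}$) --- an equivalent choice that yields another valid tuple $(\mat{A},S,\mat{y})$.
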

\begin{proof}
This reduction could be obtained
by combining known techniques in ZX calculus.
Consider $\mat{U}\ket{0^n}$, whose ZX-diagram representation could be obtained by connecting small ZX-diagrams that represent each gate in $\mat{U}$ and $\ket{0}^{\otimes n}$ as given in \cref{eq:zx-examples}. This ZX-diagram contains $O(m+n)$ vertices, $O(m+n)$ edges, and $O(n)$ open edges. By \cref{lem:zx-to-graph}, this ZX-diagram could be converted to be graph-like in $O(m+n)$ time, and the resulting graph-like ZX-diagram contains $O(m+n)$ vertices and $O(m+n)$ edges.

Note that the graph-like ZX-diagram from the previous step has $n$ open edges, each of which corresponds to an output qubit of $\mat{U}\ket{0^n}$. For each output qubit $j$, we represent $x_j$ as a ZX-diagram that looks like
\begin{equation}\label{eq:output-qubit-zx}
\bra{x_j}=
\vcenter{\hbox{\begin{tikzpicture}
\node[xnode] (v) at (0,0) {$x_j\pi$};
\coordinate (v0) at (0.5,0);
\draw (v) -- (v0);
\end{tikzpicture}}}=
\vcenter{\hbox{\begin{tikzpicture}
\node[znode] (v) at (0,0) {$x_j\pi$};
\node[hnode] (h) at (0.6,0) {};
\coordinate (v0) at (1,0);
\draw (v) -- (h) -- (v0);
\end{tikzpicture}}}.
% =
% \vcenter{\hbox{\begin{tikzpicture}
% \node[znode] (v1) at (0,0) {$b_j\pi$};
% \node[hnode] (h1) at (0.6,0) {};
% \node[znode] (v2) at (1.2,0) {$0$};
% \node[hnode] (h2) at (1.8,0) {};
% \coordinate (v0) at (2.2,0);
% \draw (v1) -- (h1) (h1) -- (v2) (v2) -- (h2) (h2) -- (v0);
% \end{tikzpicture}}}.
\end{equation}
Connecting \cref{eq:output-qubit-zx} to each open edge, we get a graph-like ZX-diagram representing $\mel{\mat{x}}{\mat{U}}{0^n}$ and the tree dewidth is not increased. This ZX-diagram contains $N=O(m+n)$ Z-spiders and no open edges, and $n$ of the Z-spiders have phase $x_j\pi$. Moreover, the phases of Z-spiders are all multiples of $\pi/2$ because these phases are all sums of the phases in~\cref{eq:zx-examples} due to the merge rule~\cref{eq:merge-rule}.

Next, we will show that the ZX-diagram representing $\mel{\mat{x}}{\mat{U}}{0^n}$ from the previous step can be converted to the form of phased graph state simulation.

To begin with, we note that a graph-like ZX-diagram without open edges can be viewed as an inner product between a graph state and tensor product of some one-qubit states: for each Z-spider of phase $p_j$, one can apply the merge rule \cref{eq:merge-rule} reversely to split it into two connected Z-spiders, so that one of them has phase $p_j$ and the other inherits the edges associated with the original Z-spider. Then the former spiders represent a tensor product of one-qubit states, while the latter represents a graph state as presented in \cref{subsubsect:zx-graph-state}.
This procedure is illustrated in~\cref{fig:zx-to-graph}.
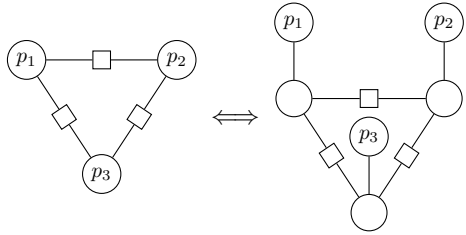
\begin{figure}[h]
\centering
$
 \vcenter{\hbox{\begin{tikzpicture}
    \node[znode] (u) at (1,0) {$p_2$};
    \node[znode] (v) at (-1,0) {$p_1$};
    \node[znode] (w) at (0,-1.5) {$p_3$};
    \node[hnode] (e1) at (0,0) {};
    \node[hnode,rotate=56.3] (e2) at (0.5,-0.75) {};
    \node[hnode,rotate=123.7] (e3) at (-0.5,-0.75) {};
    \draw (u) -- (e1) (e1) -- (v);
    \draw (u) -- (e2) (e2) -- (w);
    \draw (v) -- (e3) (e3) -- (w);
\end{tikzpicture}}}
\iff
\vcenter{\hbox{\begin{tikzpicture}
    \node[znode] (u) at (1,0) {};
    \node[znode] (v) at (-1,0) {};
    \node[znode] (w) at (0,-1.5) {};
    \node[hnode] (e1) at (0,0) {};
    \node[hnode,rotate=56.3] (e2) at (0.5,-0.75) {};
    \node[hnode,rotate=123.7] (e3) at (-0.5,-0.75) {};
    \node[znode] (u0) at (1,1) {$p_2$};
    \node[znode] (v0) at (-1,1) {$p_1$};
    \node[znode] (w0) at (0,-0.5) {$p_3$};
    \draw (u) -- (e1) (e1) -- (v);
    \draw (u) -- (e2) (e2) -- (w);
    \draw (v) -- (e3) (e3) -- (w);
    \draw (u) -- (u0) (v) -- (v0) (w) -- (w0);
\end{tikzpicture}}}
$
\caption{A graph-like ZX-diagram viewed as an inner product between graph states and a tensor product of one-qubit states. The Z-spiders with phases $p_j$ represent a tensor product of one-qubit states, while the remaining ZX-diagram represents a graph state.}
\label{fig:zx-to-graph}
\end{figure}

Applying this procedure to the ZX-diagram representing $\mel{\mat{x}}{\mat{U}}{0^n}$, we get a graph $G$ with $N=O(m+n)$ vertices and a tensor product of one-qubit states represented by Z-spiders,
\begin{equation}
\bigotimes_{j=1}^N \bra{\psi_j}=\bigotimes_{j=1}^N \frac{1}{\sqrt2}(\bra{0}+e^{ip_j}\bra{1}).
\end{equation}
Since all $p_j$ are multiples of $\pi/2$, 
let $\tilde{x}_j=\floor{p_j/\pi}\bmod2$ and $a_j=(-2p_j/\pi)\bmod2$. Then we have
\begin{align}
\bra{\psi_j}
&=\bra{0}\mat{H}\rzp^{2\tilde{x}_j-a_j}
=\bra{0}\mat{X}^{\tilde{x}_j}\mat{H}\rzm^{a_j}
=\bra{\tilde{x}_j}\mat{H}\rzm^{a_j}.
\end{align}
Let $\mat{A}\in\J_N$ such that $d(\mat{A})=\mat{a}$ and $O(\mat{A})$ is the adjacency matrix of $G$ and we have $N=O(m+n)$.
The graph-like ZX-diagram representing $\mel{\tilde{\mat{x}}}{\mat{U}}{0^n}$ is equal to
\begin{align}
\mel{\tilde{\mat{x}}}{\mat{H}^{\otimes n}\rzm^{\mat{a}}}{G}
=\mel{\tilde{\mat{x}}}{\mat{H}^{\otimes n}}{\mat{A}}.
\end{align}
Note that by construction,  $n$ of the $\tilde{x}_j$'s are equal to $x_j\pi$.
We define $S$ to be the set containing the remaining $N-n$ indices and define $\mat{y}\in\F_2^S$ such that $y_j=\tilde{x}_j$ for $j\in S$. Then we have $(\mat{A},S,\mat{y})$ as stated in the lemma such that the strong and weak simulation of $\mat{U}$ is equivalent to the strong and simulation of $\ket{\mat{A}}$. Also, given any tree decomposition of $U$ with width $t$, a tree decomposition of $A$ of width at most $t$ could be obtained by \cref{lem:zx-to-graph}.
\end{proof}

\section{Phased graph state and Gaussian elimination}\label{sec:pgs-ge}
In this section, we reveal the relation between phased graph states and Gaussian elimination. In \cref{sec:pgs-gj}, we define a generalized Gauss-Jordan elimination process for matrices in $\J_n$ and show that the result of the this process can be obtained by LDL decomposition over $\F_2$. Also, we show that this process gives an amplitudes formula for Hadamard transformed phased graph states in \cref{sec:amp-formula}.

% \subsection{phased graph state from appendix2}
% \begin{definition}[Phased Graph State]
% Given a matrix $A\in\mathbb{J}_n$, we denote the associated phased graph state as
% \[\ket{A} = \check{Z}\ket{ O(A)},\]
% where $\ket{O(A)}$ is a graph state.
% \end{definition}
% The phased graph state is a natural generalization of a graph state, which follows from the following form.
% \begin{claim}
% For any $A\in\mathbb{J}_n$,
% \[\ket{A} = \frac{1}{2^{n/2}}\sum_{x\in\F_2^n}i^{-x^TAx}\ket{x}.\]
% \end{claim}
% For strong or weak simulation of a quantum circuit, we are interested in amplitudes of
% \[H^{\otimes n}\ket{A},\]
% or in sampling some projection of the state, i.e., given $S\subset \{1,\ldots, n\}$ with $|S|=m$ and $y\in\F_2^m$, sampling bit string that match $y$ at $S$, i.e., sampling from
% \[\mathcal{X}_{S,y} = \{x: x\in\F_2^n, \forall i\in S, x_i= y_i\},\]
% according to the distribution,
% \[\forall x \in \mathcal{X}_{S,y}, p(x)\propto |\bra{x}H^{\otimes n}\ket{A}|.\]
% We provide efficient strong and weak simulation algorithms by showing that the application of a sequence of vertex and edge complementations on a total of $\rank(\omega_1(A))$ distinct vertices reduces $H^{\otimes n}\ket{A}$ to a phased graph state $\ket{B}$ (within some subspace of $\F_2^n$ in the low-rank case).

\subsection{A generalized Gauss-Jordan process}
\label{sec:pgs-gj}
In this section, we define a generalized Gauss-Jordan elimination process for matrices in $\J_n$. For simplicity, we consider symmetric matrices with unpivoted LDL decompositions so that Gaussian elimination can be applied sequentially on the row/columns.
Let $\W_n$ be the subset of matrices in $\mathbb{J}_n$ that admit an unpivoted LDL decomposition. More precisely,
\begin{align}
\W_n=\set*{\mat{A}\in\J_n}{\omega_1(\mat{A})\text{ admits an LDL decomposition }(\mat{P},\mat{L},\mat{D})\text{ s.t. } \mat{P}=\mat{I}}.
\end{align}
Note that any matrix in $\mathbb{J}_n$ can be transformed to a matrix in $\mathbb{W}_n$ by symmetric row/column permutation.

Consider a sequence of transformations (Gauss-Jordan process) based on eliminations of consecutive rows/columns or pairs of them for matrices in $\W_n$.
This process is defined in \cref{alg:gj}, with the notation $a \leftarrow b$ meaning that the algorithm sets variable $a$ to $b$.
This process is designed to reflect the action of vertex and edge complementation on a graph state on an admissible sequence of distinct vertices.
At the same time, with exception of the second bit of the diagonal, it is equivalent to Gauss-Jordan elimination (matrix inversion algorithm) over $\F_2$.

\begin{algorithm}[t]
\caption{$[B,v]=$ Gauss-Jordan-$\mathbb{W}_n$($A,k$)}\label{alg:gj}
\begin{algorithmic}[1]
\Require $A\in\mathbb{W}_n$, $A_{11}\in\mathbb{W}_k$ is rank $k$.
\Ensure 
%The Gauss-Jordan process on $\bar{A}^{(0)}$ can be expressed as follows,
If $k<n$, with $A_{11}\in\mathbb{W}_k$,
\[B = \Gamma_k(A) = 
\Omega_{1,2}\bigg(\begin{bmatrix} \Gamma_k(A_{11}) & A_{11}^{\#}A_{12} \\ A_{21}A_{11}^{\#} & A_{22} - A_{21} \Gamma_k(A_{11})A_{12} \end{bmatrix}\bigg).\]
While if $k=n$, for any $\ell\leq k$, with $A_{11}\in\mathbb{W}_\ell$,
\[B =\Omega_{1,2}\bigg( 
\begin{bmatrix} \Gamma_\ell(A_{11}) -B_{12}B_{22}B_{21} & B_{12} \\
B_{21} & B_{22}\end{bmatrix}\bigg),\]
$B_{12} =A_{11}^{\#}A_{12}B_{22}$, $B_{21}=B_{12}^T$, $B_{22} = \Gamma_{k-\ell}(A_{22} - A_{21} \Gamma_\ell(A_{11})A_{12})$.
Further, \(v\in\F_2^k\) with
$v_{\ell+1}=\omega_2(s_{11})$, where $S = A_{22} - A_{21} \Gamma_\ell(A_{11})A_{12}$.
Additionally if $\omega_1(s_{11})=0$, $v_{\ell+2} = \omega_2(s_{22})$.
\State $i\leftarrow 1$
\State $B\leftarrow A$
\While{$i<k$}
\State $P\leftarrow \begin{bmatrix} 0& I_{n-i+1} \\ I_{i-1} & 0\end{bmatrix}$ where $I_m$ is the $m\times m$ identity matrix
\State $\bar A\leftarrow PBP^T$.
\If {$\omega_1(\bar a_{11})= 1$}
\State \[B \leftarrow 
P^T\Omega_{1,2}\left(\begin{bmatrix} \bar a_{11} & \bar a_{12} \\ \bar a_{21} & \bar a_{22} - \bar a_{21}\omega_1(\bar a_{11})\bar a_{12}\end{bmatrix}\right)P.\]
\State $v_i\leftarrow \omega_2(b_{ii})$
\State $b_{ii} \leftarrow \omega_1(b_{ii})$
\State $i\leftarrow i+1$
\Else
\State With $\bar A_{11}\in\mathbb{W}_2$, as
\[B\leftarrow P^T \Omega_{1,2}\left(\begin{bmatrix}\bar A_{11} & \omega_{1}(\bar A_{11})\bar A_{12} \\ \bar A_{21}\omega_1(\bar A_{11}) & \bar A_{22}-\bar A_{21}\omega_1(\bar A_{11})\bar A_{12}\end{bmatrix}\right)P,\]
\State $v_i\leftarrow \omega_2(b_{ii})$
\State $v_{i+1}\leftarrow \omega_2(b_{i+1,i+1})$
\State $b_{ii} \leftarrow \omega_1(b_{ii})$
\State $b_{i+1,i+1} \leftarrow \omega_1(b_{i+1,i+1})$
\State $i\leftarrow i+2$
\EndIf
\EndWhile
\end{algorithmic}
\end{algorithm}

\begin{lemma}\label{lem:gj_cor}
For any $A\in\mathbb{W}_n$, $k\in \mathbb{N}$, $k\leq \rank(A)$, if $A_{11}\in\mathbb{W}_k$ is full rank, $[B,v]=$ Gauss-Jordan-$\mathbb{W}_n(A,k)$ terminates successfully and satisfies the {\bf Ensure} conditions in \cref{alg:gj}.
In particular,
\begin{align}
\omega_1\left(\begin{bmatrix} B_{11} & B_{12} \\ B_{21} & B_{22} \end{bmatrix}\right)
=\omega_1\left(\begin{bmatrix} A_{11}^{\#} & A_{11}^{\#}A_{12} \\ A_{21}A_{11}^{\#} & A_{22} - A_{21} A_{11}^{\#}A_{12} \end{bmatrix}\right), \label{eq:gjl1}
\end{align}
additionally, using $\omega_1(A_{11})=LDL^T \bmod 2$,
\begin{align}
B_{11} &= \Omega_{1,2}(L^{-T}DL^{-1}) \label{eq:gjl2}\\
B_{22} &= \Omega_{1,2}(A_{22} - A_{21}B_{11}A_{12})  \label{eq:gjl3}\\
v  &= d(\omega_2(LDL^T)\oplus\omega_2(A_{11})).\label{eq:gjl4}
\end{align}
\end{lemma}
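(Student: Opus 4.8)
The plan is to induct on the pivots processed by the while loop of \cref{alg:gj}, maintaining an invariant that tracks two pieces of information separately: the $\F_2$ reduction $\omega_1(B)$, and the $\Z_4$-valued diagonal recorded through $\omega_2(D(B))$ together with the entries of $v$ accumulated so far. As the text observes, modulo the second bit of the diagonal the process is exactly symmetric Gauss--Jordan elimination over $\F_2$, so I would first dispose of the mod-$2$ statements and then isolate the genuinely new second-bit bookkeeping.

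First I would settle termination and \cref{eq:gjl1}. Because $A_{11}\in\W_k$ is full rank, its $\F_2$ LDL decomposition has permutation $P=I$ by the definition of $\W_n$, and the block pattern of $D$ (blocks $1$ or $\smallpaulixmatrix$) dictates the exact sequence of $1\times1$ and $2\times2$ pivots the loop takes: a $1\times1$ pivot when $\omega_1(\bar a_{11})=1$, and a $2\times2$ pivot when $\omega_1(\bar a_{11})=0$, in which case fullness of $\bar A_{11}\in\W_2$ forces its off-diagonal to be $1$ so that $\omega_1(\bar A_{11})=\smallpaulixmatrix$ is its own $\F_2$ inverse. Hence the loop never stalls and processes all $k$ leading pivots. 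Working purely mod $2$, each iteration is a symmetric Schur-complement update that leaves the pivot row and column as the ``solve'' vectors while updating the trailing block; composing these updates over all $k$ pivots gives the standard block-inversion identity, which is \cref{eq:gjl1}. The mod-$2$ half of \cref{eq:gjl2} then follows by reducing $\omega_1(A_{11})=LDL^T$ and using that each admissible $D$-block is its own $\F_2$ inverse, so $A_{11}^{\#}=L^{-T}DL^{-1}$ over $\F_2$.

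The substance of the lemma is the second-bit content of \cref{eq:gjl2,eq:gjl3,eq:gjl4}. Here I would strengthen the invariant to assert that $B$ always equals $\Omega_{1,2}$ applied to the untruncated $\Z_4$ iterate produced by the same pivot sequence, and that each $v_i$ is set to the second bit of the pivot diagonal at the instant it is finalized, namely $\omega_2(\bar a_{11})$ (or the pair $\omega_2(\bar a_{11}),\omega_2(\bar a_{22})$ for a $2\times2$ pivot). The key local computation is that, since off-diagonal entries are kept in $\{0,1\}$, the diagonal of the rank-one update $\bar a_{21}\,\omega_1(\bar a_{11})\,\bar a_{12}$ at position $j$ equals $\omega_1(\bar a_{11})\,u_j^2$, where $u_j$ is the $j$-th entry of the pivot column $\bar a_{21}$; as $u_j\in\{0,1\}$ this is $\omega_1(\bar a_{11})\,u_j$ over $\Z$, so at each step every trailing diagonal is decremented by a known amount in $\{0,1\}$ mod $4$. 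Summing these contributions along the elimination and matching them against the diagonal of $LDL^T$ (for $v$) and of $L^{-T}DL^{-1}$ (for $B_{11}$), both evaluated over $\Z$, yields \cref{eq:gjl4} and the full \cref{eq:gjl2}; the extra $\oplus\,\omega_2(A_{11})$ in \cref{eq:gjl4} accounts for the original diagonal second bit of $A_{11}$, which is invisible to the $\F_2$ factorization $\omega_1(A_{11})=LDL^T\bmod 2$. \cref{eq:gjl3} is then the assertion that, once $B_{11}=\Omega_{1,2}(L^{-T}DL^{-1})$ is identified as the $\Z_4$-aware inverse of $A_{11}$, the trailing block is its Schur complement formed with this inverse.

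The main obstacle I anticipate is exactly this second-bit bookkeeping under the $\Omega_{1,2}$ truncation: since the arithmetic lives in $\Z_4$ while off-diagonals are repeatedly reduced to $\F_2$, I must check that truncating off-diagonals never disturbs the diagonal values mod $4$, and that the per-step decrements accumulate to precisely $\omega_2(LDL^T)$ rather than a shifted quantity. The most delicate case is carrying the invariant through a $2\times2$ pivot, where the $\smallpaulixmatrix$ inverse and the two simultaneous diagonal updates must be reconciled with the corresponding $2\times2$ $D$-block; I would handle it by a direct $2\times2$ computation and then fold it into the inductive step alongside the $1\times1$ case.
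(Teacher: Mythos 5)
Your proposal is correct and follows essentially the same route as the paper's proof: induction over the algorithm's pivot steps, observing that modulo~2 the process is exactly Gauss--Jordan inversion over $\F_2$ (giving \cref{eq:gjl1}), identifying the per-iteration diagonal decrements with the diagonal contributions of the blockwise terms $L_jD_jL_j^T$ plus the original $\omega_2(d(A_{11}))$ (giving \cref{eq:gjl4}), and inductively matching the second bit of the leading block against $\Omega_{1,2}(L^{-T}DL^{-1})$, with \cref{eq:gjl3} as the resulting Schur complement. The delicate points you flag (the $2\times2$ pivot and the harmlessness of the $\Omega_{1,2}$ off-diagonal truncation to the diagonal bookkeeping) are precisely the ones the paper also resolves, via its three-block partition of $L$ and the observation that the even part of $L^{-1}$ does not contribute to $\Omega_{1,2}(L^{-T}DL^{-1})$.
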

\begin{proof}
We refer to the $\ell$'th iteration of \cref{alg:gj} as the {\bf while} loop iteration when $i=\ell$.
The requirement that $A_{11}\in\mathbb{W}_k$ is full rank, ensures that in the {\bf else} clause, $\omega_1(\bar{A}_{11})=\begin{bmatrix} 0 & 1 \\ 1 & 0 \end{bmatrix}$, and that the algorithm terminates with $i=k$.
Further, note that in the Gauss-Jordan process for $\mathbb{W}_n$, the second bit of the diagonal of $A$ or of $B$ at any iteration $\ell$ has no affect on subsequent iterations in the process.
Hence, modulo 2, the process is equivalent to Gauss-Jordan matrix inversion over $\F_2$.
It follows that \cref{eq:gjl1} in the Lemma statement holds.
%\[\omega_1\left(\begin{bmatrix} B_{11} & B_{12} \\ B_{21} & B_{22} \end{bmatrix}\right)
%=\omega_1\left(\begin{bmatrix} A_{11}^{\#} & A_{11}^{\#}A_{12} \\ A_{21}A_{11}^{\#} & A_{22} - A_{21} A_{11}^{\#}A_{12} \end{bmatrix}\right).\]
%This proves the Lemma statement, except for the second bit of the diagonal of $B$, i.e., $\omega_2(d(B))$.

Further, to derive the form of $v$ in \eqref{eq:gjl4}, note that we may write the LDL decomposition as
\[LDL^T = \sum_{j=1}^t L_j D_j L_j^T,\]
where $t$ is the number of diagonal blocks in $D$, and each $L_j$ corresponds to one or two consecutive columns of $L$.
Further, with $A_{11}$ of the same dimensions as $D_1$, $L_1D_1L_1^T= A_{21}\omega_1(A_{11})A_{12} \bmod 2$.
Hence, after the first iteration in \cref{alg:gj}, we have $d(B)=d(A)-L_1D_1L_1^T$.
The same holds for subsequent trailing matrix updates, except LDL does not update rows and columns that have already been eliminated.
The vector $v$ is defined accordingly, as $v_\ell$ is defined by the $\ell$th value of the diagonal at iteration $\ell$ (just before it is eliminated).
Hence, $v$ is simply equal to the sum of the LDL trailing matrix updates with the original value of $\omega_2(d(A))$ and \eqref{eq:gjl4} in the Lemma holds.
%Since the Schur complement is the same modulo 2 to that obtained in the LDL decomposition, and since the modulo 2 part defines all subsequent trailing matrix updates, the same arguments hold by induction.
%In particular, just like in the standard Gauss-Jordan inversion process
%, subsequent updates

Let
\begin{align}
L = 
\begin{bmatrix}
L_{11} & & \\
L_{21} & L_{22} &\\
L_{31} & L_{32} & L_{33}
\end{bmatrix}
\end{align}
where $L_{11}\in\F_2^{\ell\times \ell}$ and $L_{22}\in\F_2^{s\times s}$ where $s=1$ or $s=2$ depending on the corresponding block of $D$.
At the start of the $\ell$th iteration of \cref{alg:gj}, we aim to show that we have
\[B = 
\begin{bmatrix}
B_{11} & B_{12}&B_{12} \\
B_{21} & B_{22} &B_{23}\\
B_{31} & B_{32} &B_{33}
\end{bmatrix}, B_{11}\in\mathbb{J}_\ell, B_{22}\in\mathbb{W}_{s},
B_{21} = \omega_1(L_{21}L_{11}^{-1}).\]
By equivalence to Gauss-Jordan inversion over $\F_2$ modulo 2, we have that \[B_{21}=\omega_1(A_{21}A_{11}^\#)=\omega_1(L_{21}D_{11}L_{11}^TL_{11}^{-T}D_{11}L_{11}^{-1}) = \omega_1(L_{21}L_{11}^{-1}).\]
Hence, at the $\ell$ step, the update to the leading $\ell\times \ell$ block of $B$, $B_{11}$ is 
\[B_{11}\leftarrow \Omega_{1,2}(B_{11} - B_{12}\omega_1(B_{22})B_{21}) = \Omega_{1,2}(B_{11}-L_{11}^{-T}L_{21}^T\omega_1(B_{22})L_{21} L_{11}^{-1}).\]
Further, the off-diagonal block of $L^{-1}$ is $\omega_1(L_{21}L_{11}^{-1})$ modulo 2 (the even part of $L^{-1}$ does not contribute to $\Omega_{1,2}(L^{-T}DL^{-1})$).
%is exactly the off-diagonal block of $L^{-1}$.
Hence, by induction, \eqref{eq:gjl2} holds.

It remains to show that trailing matrix updates result in a diagonal term as in \eqref{eq:gjl3}, which also leads to the form of $B$ asserted by the {\bf Ensure} clause of \cref{alg:gj}.
By the same argument as for $v$, using \eqref{eq:gjl2}, we have that 
\begin{align}
B_{22}&=\Omega_{1,2}(A_{22} - L_{21}D_{11}L_{12}) \\
&= \Omega_{1,2}(A_{22} - A_{21}L_{11}^{-T}D_{11}D_{11}D_{11}L_{11}^{-1}A_{12}) \\
&= \Omega_{1,2}(A_{22} - A_{21}B_{11}A_{12}).
\end{align}
\end{proof}
%Hence, we have the following claim.
%\begin{claim}\label{cla:gj}
%The Gauss-Jordan process on $\bar{A}$ modulo two is the same as the standard Gauss-Jordan inversion, and $k$ steps compute the principal pivot transform~\cite{tsatsomeros2000principal},
%\[\omega_1\left(\begin{bmatrix} \bar{A}^{(k)}_{11} & \bar{A}^{(k)}_{12} \\ \bar{A}^{(k)}_{21} & \bar{A}^{(k)}_{22} \end{bmatrix}\right)
%=\omega_1\left(\begin{bmatrix} \bar{A}^{(0)}_{11}^{\#} & \bar{A}^{(0)}_{11}^{\#}\bar{A}^{(0)}_{12} \\ \bar{A}^{(0)}_{21}\bar{A}^{(0)}_{11}^{\#} & \bar{A}^{(0)}_{22} - \bar{A}^{(0)}_{21} \bar{A}^{(0)}_{11}^{\#}\bar{A}^{(0)}_{12} \end{bmatrix}\right),\]
%where $A_{11}\in\mathbb{W}_{k}$.
%\end{claim}

We now demonstrate how the $\W_n$ Gauss-Jordan process corresponds to transformations of phased graph states. Let $\delta(A)=d(D)$ for $A\in\W_n$ and $D$ being the block diagonal matrix in an LDL decomposition of $A$.
\begin{lemma}\label{lem:gj-pgs}
For any $A\in\mathbb{W}_n$, $k\in \mathbb{N}$, $k\leq \rank(A)$, if $A_{11}\in\mathbb{W}_k$ is full rank, $[B,v]=$ Gauss-Jordan-$\mathbb{W}_n(A,k)$,
%Given $A\in\mathbb{W}_n$, let $B\in\mathbb{J}_n$ be the matrix produced after $k\leq \rank(\omega_1(A))$ rows/columns are eliminated by the Gauss-Jordan process on $A$.
let $B = \begin{bmatrix} B_{11} & B_{12} \\ B_{21} & B_{22} \end{bmatrix}$ where $B_{11}$ is $k\times k$ and $u=v\oplus\delta(A_{11})$.
Further, define $\bar{v}=\begin{bmatrix} v \\ 0 \end{bmatrix}\in\F_2^n$ and $\bar{u}=\begin{bmatrix} u \\ 0\end{bmatrix}\in\F_2^n$.
For any phased graph state $\ket{A}$, \[(H^{\otimes k}\otimes I^{\otimes (n-k)})\ket{A} =\alpha X^{\bar v}Z^{\bar{u}}\ket{B}\]
for some $\alpha\in\C$.
%The vector $v$ in Claim (1) is defined by $v_i = \omega_2(r_{ii})$, where $R$ is the updated $\bar{A}$ in the Gauss-Jordan process after row/column $i$ is eliminated.
%\\
%\item \[\omega_1(B_{11}) = \omega_1(A_{11})^{\#}\]
%\item \[B_{21} = \omega_1(A_{21}B_{11})\]
%\item \[B_{22} = \Omega_{1,2}(A_{22}-A_{21}\omega_1(B_{11})A_{12})\]
\end{lemma}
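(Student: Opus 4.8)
The plan is to prove the statement by induction on the number of elimination blocks produced by the LDL decomposition of $\omega_1(A_{11})$, peeling off one block at a time and matching each block to a graph-state rewrite. Since phased graph states are covariant under simultaneous relabeling of vertices and qubits, and the algorithm brings each successive pivot to the leading position via the permutation $P$, I may assume throughout that the block being eliminated occupies the leading one or two indices. The hypothesis that $A_{11}\in\W_k$ is full rank guarantees, exactly as in \cref{lem:gj_cor}, that the process consists of single blocks with $\omega_1(\bar a_{11})=1$ and of $2\times2$ antidiagonal blocks with $\omega_1(\bar a_{11})=0$, together covering the first $k$ indices, so the corresponding complementations are always well defined.

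For a single block I would apply $\actson{H}{i}$ to the pivot vertex $i$, whose diagonal $a_{ii}$ is odd. Writing $H=\rzm\rxm\rzm$ (the Euler decomposition of $H$ read off from \cref{eq:zx-examples}) and using $\rzp=\rzm^3$, the vertex phase combines with one factor of $H$ as $H\rzm^{a_{ii}}=\rzm\rxm\rzm^{a_{ii}+1}$, turning the leading $H$ into an $\rxm$ on vertex $i$ (together with a residual $Z=\rzm^2$ when $a_{ii}\equiv1 \pmod 4$). I would then invoke vertex complementation (\cref{thm:vertex-complementation}) to rewrite $\actson{\rxm}{i}\ket{G}$ as $\ket{\tau_i(G)}$ dressed by $\prod_{j\in N(i)}\actson{\rzp}{j}$, and simplify with the Pauli relations $\rxm^2=X$, $\rzm^2=Z$, $\rzm X=X\rzp$, and $Z\rxm=\rxp Z$. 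The net effect is a Pauli $\actson{(X^{v_i}Z^{u_i})}{i}$ on vertex $i$ times a phased graph state, whose off-diagonal part toggles the clique $K(N(i))$ --- precisely the rank-one $\F_2$ Schur complement of \cref{eq:gjl1} --- and whose diagonal absorbs the neighbor phases and the new pivot phase, matching the $\Z_4$ trailing update \cref{eq:gjl3}.

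The double-block case is handled the same way, except that the pivot is the off-diagonal $1$ of an antidiagonal $2\times2$ block, so I would apply $\actson{H}{i}\actson{H}{j}$ and invoke edge complementation (\cref{thm:edge-complementation}), which supplies the graph $\epsilon_{ij}(G)$ together with $Z$'s on the common neighbors $N(i)\cap N(j)$. The same Pauli-commutation simplifications then express the result as a Pauli supported on $\{i,j\}$ times a phased graph state whose adjacency and diagonal updates coincide with the $2\times2$-pivot Schur complement computed by \cref{alg:gj}.

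The composition and the attendant bookkeeping is where the real work lies. After each block the residual Pauli is supported only on the just-eliminated vertices, so it commutes past every later Hadamard $\actson{H}{j}$ by disjointness of support, and the inductive hypothesis applies to the updated matrix. The delicate point is that a later complementation adds phase gates to \emph{already-eliminated} vertices, and these must be pushed back through any $X$-type correction already sitting there, flipping their sign via $\rzp X=X\rzm$. I expect this sign bookkeeping --- compounded by the fact that the factorization $X^{\bar v}Z^{\bar u}\ket{B}$ is unique only up to stabilizers of $\ket{B}$ --- to be the main obstacle, and the heart of the argument is to show that the accumulated exponents are exactly those produced by \cref{alg:gj}: that $v$ equals the running record of diagonal second bits $d(\omega_2(LDL^T)\oplus\omega_2(A_{11}))$ of \cref{eq:gjl4}, while the $Z$-exponent assembles into $u=v\oplus\delta(A_{11})$. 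Here \cref{lem:gj_cor}, which pins down $B_{11}=\Omega_{1,2}(L^{-T}DL^{-1})$ and the form of $v$, is the essential tool, and the leftover scalar from all the phase-gate and Pauli identities is collected into the undetermined global factor $\alpha$.
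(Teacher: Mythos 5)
Your overall architecture is the same as the paper's: eliminate one LDL block at a time, absorb the pivot's Hadamard into the vertex phase via an Euler decomposition of $H$ (the paper uses $H=\tfrac{1+i}{\sqrt2}\check{X}\check{Z}\check{X}$ where you use $H\propto\rzm\rxm\rzm$, an equivalent choice), invoke \cref{thm:vertex-complementation} for odd-diagonal pivots and \cref{thm:edge-complementation} for $2\times 2$ antidiagonal pivots, read off $v_i=\omega_2(a_{ii})$ and $u=v\oplus\delta(A_{11})$, and collect scalars into $\alpha$. The two single-step cases are set up correctly.

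The gap is in the composition step, which you explicitly defer and, more importantly, misdiagnose. You claim that when a later complementation applies phase gates to \emph{already-eliminated} vertices, these must be commuted through the $X$-type Pauli corrections accumulated there via $\rzp X=X\rzm$, and you name this sign bookkeeping the heart of the argument. That obstacle does not exist in the correct bookkeeping: the $\rzp$ gates landing on eliminated neighbors are diagonal operators that act \emph{inside} the ket, where they simply increment the corresponding diagonal entries of the evolving matrix modulo $4$ --- this is exactly why \cref{alg:gj} is a Gauss--Jordan process rather than plain forward elimination (the $\Omega_{1,2}$ Schur updates continue to touch the rows/columns of indices $1,\ldots,i-1$), and why \cref{lem:gj_cor} yields $B_{11}=\Omega_{1,2}(L^{-T}DL^{-1})$ with a nontrivial second diagonal bit. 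The Pauli emitted at step $i$ is supported only on the just-eliminated vertex or pair, so it commutes with every subsequent Hadamard and never meets any later phase gate; the outside correction stays exactly $X^{\bar v}Z^{\bar u}$ with no sign flips. Consequently, the actual content you would need to verify --- that the mod-$4$ diagonal updates of the eliminated block in \cref{alg:gj} coincide step by step with the phases deposited by the complementations, so that the final ket is $\ket{B}$ for the algorithm's output $B$ --- is precisely what the paper's explicit single-step chains of equalities establish and what your proposal only gestures at. As written, the proof attempt would either stall on a spurious commutation problem or, if one pushed the phases outside the ket as you suggest, produce a state that no longer matches the algorithm's $B$; the induction therefore does not close without the verification you postponed.
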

\begin{proof}
We show that the two elimination cases in the Gauss-Jordan process (\cref{alg:gj}) can be mapped to vertex and edge complementation, with a Hadamard gate acting on each eliminated row/column.
We have that $\ket{A} = \hat{Z}^{d(A)}\ket{O(A)}$.
Let
\[N=
\Omega_{1,2}\left(\begin{bmatrix} a_{11} & a_{12} \\ a_{21} & A_{22} - a_{21}\omega_1(a_{11})a_{12}\end{bmatrix}\right).\]
Let $\bar{a}_{21} = \begin{bmatrix} 0 \\ a_{21} \end{bmatrix}$. By \cref{thm:vertex-complementation},
vertex complementation on the first vertex in $\ket{O(A)}$ gives,
\begin{align}
\ket{O(A)} &= \check{X}^{\langle 1 \rangle}\prod_{i\in\{2,\ldots,n\},a_{1i}\neq 0}\hat{Z}^{\langle i \rangle}\ket{O(N)} \\
 &= \check{X}^{\langle 1 \rangle}\hat{Z}^{\bar{a}_{21}}\ket{O(N)}. 
\end{align}

We can express the Hadamard gate as $H=\frac{1+i}{\sqrt 2}\check{X} \check{Z} \check{X}$, so $\check{X} = \frac{\sqrt 2}{1+i}H \hat{X} \hat{Z}$, and
\begin{align}
\ket{O(A)} &=\frac{\sqrt 2}{1+i} H^{\langle 1 \rangle}\hat{X}^{\langle 1 \rangle}\hat{Z}^{\langle 1\rangle}\hat{Z}^{\bar{a}_{21}}\ket{O(N)}.
\end{align}
Hence,
\begin{align}
H^{\langle 1 \rangle}\ket{A} &= H^{\langle 1 \rangle}\check{Z}^{d(A)}\ket{O(A)}  \\
&=(\check{X}^{a_{11}})^{\langle 1\rangle}H^{\langle 1 \rangle}\check{Z}^{d(A)-e_1a_{11}}\ket{O(A)} \\
&=\frac{\sqrt 2}{1+i}(\check{X}^{a_{11}})^{\langle 1\rangle}H^{\langle 1 \rangle}\check{Z}^{d(A)-e_1a_{11}} H^{\langle 1 \rangle}\hat{X}^{\langle 1 \rangle}\hat{Z}^{\langle 1\rangle}\hat{Z}^{\bar{a}_{21}}\ket{O(N)} \\
&= \frac{\sqrt 2}{1+i}(X^{\omega_2(a_{11})})^{\langle 1\rangle}\check{Z}^{d(A)-e_1a_{11}}\hat{Z}^{\langle 1\rangle}\hat{Z}^{\bar{a}_{21}}\ket{O(N)}\label{eq:tmp-1} \\
&= \frac{\sqrt 2}{1+i}(X^{\omega_2(a_{11})}Z^{\omega_2(a_{11})})^{\langle 1\rangle}\check{Z}^{d(A)}Z^{\langle 1 \rangle}\hat{Z}^{\bar{a}_{21}}\ket{O(N)}\label{eq:tmp-2} \\
&= \frac{\sqrt 2}{1+i}(X^{\omega_2(a_{11})}Z^{\omega_2(a_{11})+1})^{\langle 1\rangle}\ket{N}.
\end{align}
The term $\omega_2(a_{11})$ is exactly the definition of $v_i$ modulo the permutation performed in the Gauss-Jordan process.
The additional $Z^{\langle 1 \rangle}$ is as needed since $\delta(A)_i=1$.

In the case of edge complementation, let 
\[M=\Omega_{1,2}\left(\begin{bmatrix}A_{11} & A_{11}A_{12} \\ A_{21}A_{11} & A_{22}-A_{21}\omega_1(A_{11})A_{12}\end{bmatrix}\right).\]
By \cref{thm:edge-complementation}, we have
\begin{align}
\ket{O(A)} &= H^{\langle 1 \rangle}H^{\langle 2 \rangle}Z^{a_1 \ast a_2}\ket{O(M)},
\end{align}
where $\ast$ is the Hadamard (elementwise) product.
Further,
\begin{align}
H^{\langle 1 \rangle}H^{\langle 2 \rangle}\ket{A} &=
H^{\langle 1 \rangle}H^{\langle 2 \rangle}\check{Z}^{d(A)}\ket{O(A)}  \\
&=(X^{\omega_2(a_{11})})^{\langle 1 \rangle}(X^{\omega_2(a_{22})})^{\langle 2 \rangle}H^{\langle 1 \rangle}H^{\langle 2 \rangle}\check{Z}^{d(A)-e_1a_{11}-e_2a_{22}}\ket{O(A)} \\
&=(X^{\omega_2(a_{11})}Z^{\omega_2(a_{11})})^{\langle 1 \rangle}(X^{\omega_2(a_{22})}Z^{\omega_2(a_{22})})^{\langle 2 \rangle}\check{Z}^{d(A) - 2a_1 \ast a_2}\ket{O(M)} \\
&=(X^{\omega_2(a_{11})}Z^{\omega_2(a_{11})})^{\langle 1 \rangle}(X^{\omega_2(a_{22})}Z^{\omega_2(a_{22})})^{\langle 2 \rangle}\ket{M}.
\end{align}
Applying the above inductively leads to the equivalence stated in the lemma with $\alpha =(\sqrt{2}/(1+i))^{\sum_j v_j}$.
\end{proof}

\subsection{An amplitude formula for Hadamard-transformed phased graph states}
\label{sec:amp-formula}
A direct formula for the amplitudes of $\mat{H}^{\otimes n}\ket{\mat{A}}$ is given by \cref{thm:amp-formula} 
if $\omega_1(\mat{A})$ is full rank over $\F_2$, via an $n$-step $\mathbb{W}_n$ Gauss-Jordan process.
We now consider the general case, with $\rank(\omega_1(\mat{A})) = k$.
To obtain this formula, we need to work out the action of the remaining $n-k$ Hadamard gates acting on the trailing $n-k$ qubits of $\ket{\mat{A}}$. The following well-known lemma describes the action of $n-k$ Hadamard gates on a general state. 
\begin{lemma}\label{lem:h-second-half}
% % Let $\mat{x}\in\F_2^n$, $\mat{x}_1\in\F_2^r$ and $\mat{x}_2\in\F_2^{n-r}$ such that
% % $\mat{x}=\begin{bmatrix}
% % \mat{x}_1^T & \mat{x}_2^T
% % \end{bmatrix}^T$.
% % Then
% % \begin{equation}
% % \mat{I}^{\otimes r}\otimes\mat{H}^{\otimes (n-r)}\ket{\mat{x}}=2^{-(n-r)/2}\sum_{\mat{y}_2\in\F_2^{\otimes(n-r)}}(-1)^{\mat{y}_2^T\mat{x}_2}\ket{\mat{x_1}}\otimes\ket{\mat{y_2}}
% % \end{equation}
% Let $\ket{\psi}=\sum_{\mat{x}_1\in\F_2^k,\,\mat{x}_2\in\F_2^{(n-k)}}\alpha(\mat{x}_1,\mat{x}_2)
% % \ket{\begin{bmatrix}
% % \mat{x}_1\\\mat{x}_2
% % \end{bmatrix}}$
% \ket{\mat{x}_1}\otimes\ket{\mat{x}_2}$. 
% Then
% \begin{equation}
% \mat{I}^{\otimes k}\otimes\mat{H}^{\otimes (n-k)}\ket{\psi}=
% \sum_{\mat{x}_1\in\F_2^k,\,\mat{x}_2\in\F_2^{(n-k)}}
% \beta(\mat{x}_1,\mat{x}_2)
% % \ket{\begin{bmatrix}
% % \mat{x}_1\\\mat{y}_2
% % \end{bmatrix}}
% \ket{\mat{x}_1}\otimes\ket{\mat{x}_2}
% \end{equation}
% where
% \begin{equation}
% \beta(\mat{x}_1,\mat{x}_2)=
% 2^{-(n-k)/2}
% \sum_{\mat{y}_2\in\F_2^{(n-k)}}
% (-1)^{\mat{y}_2^T\mat{x}_2}\alpha(\mat{x}_1,\mat{y}_2).
% \end{equation}

Let $\mat{x}_1\in\F_2^k$, $\mat{x}_2\in\F_2^{n-k}$, and $\ket{\psi}$ be an $n$-qubit quantum state. Then
\begin{equation}
\mel{\begin{bmatrix}
\mat{x}_1\\\mat{x}_2
\end{bmatrix}}{
\mat{I}^{\otimes k}\otimes\mat{H}^{\otimes (n-k)}}
{\psi}=2^{-(n-k)/2}
\sum_{\mat{y}_2\in\F_2^{(n-k)}}
(-1)^{\mat{y}_2^T\mat{x}_2}
\braket{\begin{bmatrix}
\mat{x}_1\\\mat{y}_2
\end{bmatrix}}{\psi}.
\end{equation}
\begin{proof}
For $x,y\in\F_2$, we have $\bra{x}\mat{H}\ket{y}=(-1)^{xy}$. Then the lemma follows.
\end{proof}
\end{lemma}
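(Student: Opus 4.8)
The plan is to reduce the claim to the single-qubit matrix element $\bra{x}\mat{H}\ket{y}=\frac{1}{\sqrt2}(-1)^{xy}$ together with the factorization of a tensor-product operator across computational basis states. Writing $\mat{x}$ for the stacked input vector with blocks $\mat{x}_1\in\F_2^k$ and $\mat{x}_2\in\F_2^{n-k}$, I would first insert a resolution of the identity on the $n$-qubit space, $\mat{I}^{\otimes n}=\sum_{\mat{z}\in\F_2^n}\ket{\mat{z}}\bra{\mat{z}}$, so that the left-hand side becomes $\sum_{\mat{z}}\mel{\mat{x}}{\mat{I}^{\otimes k}\otimes\mat{H}^{\otimes(n-k)}}{\mat{z}}\braket{\mat{z}}{\psi}$. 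This turns the problem into evaluating the matrix elements of the partial Hadamard operator between computational basis states.

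Next I would split $\mat{z}=\begin{bmatrix}\mat{z}_1\\\mat{z}_2\end{bmatrix}$ with $\mat{z}_1\in\F_2^k$ and $\mat{z}_2\in\F_2^{n-k}$, and use that a tensor product of single-qubit operators has a matrix element between basis states equal to the product of the single-qubit matrix elements. The first $k$ factors are identity elements and collectively enforce $\mat{z}_1=\mat{x}_1$, giving a Kronecker delta $\delta_{\mat{z}_1,\mat{x}_1}$, while each of the remaining $n-k$ factors contributes $\frac{1}{\sqrt2}(-1)^{bc}$ for the corresponding bits $b,c$ of $\mat{x}_2$ and $\mat{z}_2$. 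Multiplying these yields $\mel{\mat{x}}{\mat{I}^{\otimes k}\otimes\mat{H}^{\otimes(n-k)}}{\mat{z}}=\delta_{\mat{z}_1,\mat{x}_1}\,2^{-(n-k)/2}(-1)^{\mat{x}_2^T\mat{z}_2}$.

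Substituting this back collapses the sum over $\mat{z}_1$ to the single surviving term $\mat{z}_1=\mat{x}_1$, leaving a sum over the free index $\mat{z}_2$; relabelling $\mat{z}_2$ as $\mat{y}_2$ reproduces the stated identity exactly.

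I do not expect any genuine obstacle: the lemma is a direct consequence of $\mat{H}$ being a self-inverse Fourier kernel over $\F_2$, so that a partial Hadamard transform is simply a partial $\F_2$-Fourier transform on the last $n-k$ qubits. The only points requiring care are the normalization, where the $n-k$ factors of $1/\sqrt2$ must be collected into the prefactor $2^{-(n-k)/2}$ (so that the quoted relation $\bra{x}\mat{H}\ket{y}=(-1)^{xy}$ should be read as carrying this implicit normalization), and the bookkeeping of which qubits carry identities versus Hadamards.
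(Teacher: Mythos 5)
Your proposal is correct and is essentially the same argument as the paper's, which simply cites the single-qubit kernel $\bra{x}\mat{H}\ket{y}\propto(-1)^{xy}$ and declares the lemma to follow; your version spells out the resolution of the identity, the Kronecker delta on the first $k$ qubits, and the collection of the $n-k$ factors of $1/\sqrt2$. You also rightly note that the paper's stated relation $\bra{x}\mat{H}\ket{y}=(-1)^{xy}$ carries an implicit $2^{-1/2}$ normalization, which is exactly where the prefactor $2^{-(n-k)/2}$ comes from.
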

We leverage the fact that $\omega_1(\mat{A}^{(k)})$ has a block saddle point structure (the Schur complement of $\omega_1(\mat{A}_{11})\in\F_2^{k \times k}$ must be zero if the rank of $\omega_1(\mat{A})$ is $k$).
\begin{lemma}\label{lem:null}
Consider any $\mat{B}\in\mathbb{J}_n$ such that
\[\mat{B} = \begin{bmatrix} \mat{B}_{11} & \mat{B}_{12} \\ \mat{B}_{21} & \mat{B}_{22}\end{bmatrix}, \quad \mat{B}_{11}\in\mathbb{J}_k, \quad 
\omega_1(\mat{B}_{22})=0.
\]
Then for all $\mat{x}_1\in\F_2^k$, $\mat{x}_2\in\F_2^{n-k}$,
% %$\forall x \in \F_2^n$,
% \[\mat{I}^{\otimes k} \otimes \mat{H}^{\otimes (n-k)}\ket{\mat{B}}= 
% \frac{1}{2^{k/2}}\sum_{\mat{x}_1\in\F_2^k}
% i^{- \mat{x}_1^T\mat{A}_{11}\mat{x}_1 }
% \ket{\begin{bmatrix} \mat{x}_1 \\ \omega_1(\mat{B}_{21}\mat{x}_1+\mat{z}) \end{bmatrix}}
% \]
\[
\mel{\begin{bmatrix}
\mat{x}_1\\\mat{x}_2
\end{bmatrix}}
{\mat{I}^{\otimes k}\otimes\mat{H}^{\otimes (n-k)}}
{\mat{B}}
=\begin{cases}
2^{-k/2}i^{-\mat{x}_1^T\mat{B}_{11}\mat{x}_1} & \mat{x}_2=\omega_1(\mat{B}_{21}\mat{x}_1+\mat{z})\\
0 & \text{otherwise.}
\end{cases}
\]
where $\mat{z}=\omega_2(d(\mat{B}_{22}))$.
\end{lemma}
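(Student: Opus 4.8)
The plan is to expand the $n-k$ trailing Hadamard gates with \cref{lem:h-second-half}, evaluate the resulting computational-basis amplitudes of $\ket{\mat{B}}$ via \cref{lem:phased-graph-state-amplitudes}, and then recognize the remaining sum over the Hadamard-transformed indices as an additive character sum on $\F_2^{n-k}$. First I would set $\mat{w}=\begin{bmatrix}\mat{x}_1\\\mat{y}_2\end{bmatrix}$ and apply \cref{lem:h-second-half} so that the left-hand side becomes $2^{-(n-k)/2}\sum_{\mat{y}_2\in\F_2^{n-k}}(-1)^{\mat{y}_2^T\mat{x}_2}\braket{\mat{w}}{\mat{B}}$, and then substitute $\braket{\mat{w}}{\mat{B}}=2^{-n/2}(-i)^{\mat{w}^T\mat{B}\mat{w}}$.

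The key computation is expanding the quadratic form $\mat{w}^T\mat{B}\mat{w}$ over $\Z_4$ using the block structure, namely $\mat{w}^T\mat{B}\mat{w}=\mat{x}_1^T\mat{B}_{11}\mat{x}_1+2\mat{x}_1^T\mat{B}_{12}\mat{y}_2+\mat{y}_2^T\mat{B}_{22}\mat{y}_2$, where I have used $\mat{B}_{21}=\mat{B}_{12}^T$ to merge the two cross terms into a single factor of $2$. The hypothesis $\omega_1(\mat{B}_{22})=0$ forces $\mat{B}_{22}=2D(\mat{z})$ with $\mat{z}=\omega_2(d(\mat{B}_{22}))$, since the off-diagonal entries of any matrix in $\mathbb{J}_n$ already lie in $\F_2$ and hence vanish, while the diagonal entries are even; using $y_{2,i}^2=y_{2,i}$ this gives $\mat{y}_2^T\mat{B}_{22}\mat{y}_2=2\mat{z}^T\mat{y}_2$. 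Because $(-i)^2=-1$, the two terms carrying the factor $2$ collapse into $(-1)^{\mat{y}_2^T(\mat{B}_{21}\mat{x}_1+\mat{z})}$, which equals $(-1)^{\mat{y}_2^T\omega_1(\mat{B}_{21}\mat{x}_1+\mat{z})}$ since $(-1)^{(\cdot)}$ depends only on the exponent modulo $2$. The remaining $\mat{y}_2$-independent factor is $(-i)^{\mat{x}_1^T\mat{B}_{11}\mat{x}_1}=i^{-\mat{x}_1^T\mat{B}_{11}\mat{x}_1}$.

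After pulling the $\mat{x}_1$-only factor out of the sum, what remains is $\sum_{\mat{y}_2\in\F_2^{n-k}}(-1)^{\mat{y}_2^T(\mat{x}_2\oplus\omega_1(\mat{B}_{21}\mat{x}_1+\mat{z}))}$, an additive character sum that equals $2^{n-k}$ when $\mat{x}_2=\omega_1(\mat{B}_{21}\mat{x}_1+\mat{z})$ and $0$ otherwise. Multiplying by the prefactors $2^{-(n-k)/2}\cdot 2^{-n/2}$ and the surviving $2^{n-k}$ collapses the power of two to $2^{-(n-k)/2-n/2+(n-k)}=2^{-k/2}$, producing exactly the two cases in the statement.

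I expect the only real care to lie in the $\Z_4$/$\F_2$ bookkeeping of the quadratic form: confirming that every $\mat{y}_2$-dependent contribution carries a factor of $2$ (so becomes a $\pm1$ phase under $(-i)^2=-1$) and that $\mat{B}_{22}$ enters solely through its $\omega_2$-diagonal $\mat{z}$. No structural obstacle arises, as the summation is finite and the orthogonality of additive characters of $\F_2^{n-k}$ is exact.
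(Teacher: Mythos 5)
Your proposal is correct and follows essentially the same route as the paper's proof: both combine \cref{lem:h-second-half} with the phased-graph-state amplitude formula of \cref{lem:phased-graph-state-amplitudes}, expand the quadratic form $\mat{w}^T\mat{B}\mat{w}$ blockwise so that all $\mat{y}_2$-dependent terms carry a factor of $2$ (using $\mat{B}_{22}=2D(\mat{z})$), and conclude via orthogonality of the character sum over $\F_2^{n-k}$. The only difference is that you apply \cref{lem:h-second-half} before substituting the amplitude formula while the paper does the reverse, and you spell out the $\Z_4$/$\F_2$ bookkeeping that the paper leaves implicit---both immaterial to the structure of the argument.
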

\begin{proof}
By \cref{def:phased-graph-state} and the specified structure of $\mat{B}$, we have
\begin{equation}
\braket{\mat{x}}{\mat{A}}
=2^{-n/2}
i^{-\mat{x}_1^T\mat{B}_{11}\mat{x}_1}
(-1)^{
\mat{x}_2^T
(\mat{B}_{21}\mat{x}_1+\mat{z})}
\end{equation}
Then applying \cref{lem:h-second-half}, we get
\begin{align}
&\mel{\begin{bmatrix}
\mat{x}_1\\\mat{x}_2
\end{bmatrix}}
{\mat{I}^{\otimes k}\otimes\mat{H}^{\otimes (n-k)}}
{\mat{B}}\\
=&\;
2^{-(2n-k)/2}
i^{-\mat{x}_1^T\mat{B}_{11}\mat{x}_1}
\sum_{\mat{y}_2\in\F_2^{(n-k)}}
(-1)^{
\mat{y}_2^T
(\mat{B}_{21}\mat{x}_1+\mat{x}_2+\mat{z})}\\
=&\;
\begin{cases}
2^{-k/2}i^{-\mat{x}_1^T\mat{B}_{11}\mat{x}_1} & \mat{x}_2=\omega_1(\mat{B}_{21}\mat{x}_1+\mat{z})\\
0 & \text{otherwise.}
\end{cases}
\end{align}
This completes the proof.
\end{proof}

To obtain the full formula for $\bra{\mat{x}}\mat{H}^{\otimes n}\ket{\mat{A}}$, it suffices to combine \cref{lem:gj-pgs} and \cref{lem:null}.

\begin{theorem}
\label{thm:amp-formula}
Given $A\in\W_n$ where $\rank(\omega_1(A))=k$ and $v$, $u$, $B$ defined as in \cref{lem:gj-pgs}.
Let $\mat{B}_{11}$ and $\mat{B}_{22}$ be diagonal blocks of $\mat{B}$ such that $\mat{B}_{11}\in\J_k$ and $\mat{B}_{22}\in\J_{n-k}$.
Let $\mat{w}=\begin{bmatrix}
\mat{v}\\\omega_2(d(\mat{B}_{22}))
\end{bmatrix}$.
Then for any $\mat{x}\in\F_2^n$,
\begin{equation}
\bra{\mat{x}}\mat{H}^{\otimes n}\ket{\mat{A}}
=\begin{cases}
2^{-k/2}
i^{-(\mat{x}_1+\mat{v})^T(\mat{B}_{11}+2D(u))(\mat{x}_1+\mat{v})} & \mat{x}\oplus\mat{w} \in\spn(\omega_1(\mat{A})) \\
0 & \text{otherwise}
\end{cases}\label{eq:amp-formula}
\end{equation}
where $\mat{x}_1$ are the leading $k$ elements of $\mat{x}$.
\end{theorem}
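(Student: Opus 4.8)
The plan is to split $\mat H^{\otimes n}$ into the $k$ Hadamards already accounted for by the $\W_n$ Gauss-Jordan process and the remaining $n-k$ Hadamards handled by \cref{lem:null}. Writing $\mat H^{\otimes n}=(\mat I^{\otimes k}\otimes\mat H^{\otimes(n-k)})(\mat H^{\otimes k}\otimes\mat I^{\otimes(n-k)})$ and applying \cref{lem:gj-pgs} to the right factor yields $\bra{\mat x}\mat H^{\otimes n}\ket{\mat A}=\alpha\,\bra{\mat x}(\mat I^{\otimes k}\otimes\mat H^{\otimes(n-k)})\mat X^{\bar{\mat v}}\mat Z^{\bar{\mat u}}\ket{\mat B}$. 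Since $\bar{\mat v}$ and $\bar{\mat u}$ are supported on the first $k$ coordinates, the Pauli corrections $\mat X^{\bar{\mat v}}$ and $\mat Z^{\bar{\mat u}}$ commute with $\mat I^{\otimes k}\otimes\mat H^{\otimes(n-k)}$; acting them leftward onto $\bra{\mat x}$ gives $\bra{\mat x}\mat X^{\bar{\mat v}}\mat Z^{\bar{\mat u}}=(-1)^{(\mat x_1\oplus\mat v)^T\mat u}\bra{\mat x\oplus\bar{\mat v}}$. Thus the amplitude reduces to $(-1)^{(\mat x_1\oplus\mat v)^T\mat u}$ times $\mel{\mat x\oplus\bar{\mat v}}{\mat I^{\otimes k}\otimes\mat H^{\otimes(n-k)}}{\mat B}$, up to the global phase $\alpha$.

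Before invoking \cref{lem:null}, I would verify that $\mat B$ has the required block structure $\omega_1(\mat B_{22})=0$. By \cref{eq:gjl3} together with $\omega_1(\mat B_{11})=\omega_1(\mat A_{11})^\#$ (from \cref{eq:gjl1}--\cref{eq:gjl2}), we have $\omega_1(\mat B_{22})=\omega_1(\mat A_{22})\oplus\omega_1(\mat A_{21})\,\omega_1(\mat A_{11})^\#\,\omega_1(\mat A_{12})$, which is exactly the Schur complement of $\omega_1(\mat A_{11})$ in $\omega_1(\mat A)$ over $\F_2$. Since $\omega_1(\mat A_{11})$ is $k\times k$ of full rank and $\rank(\omega_1(\mat A))=k$, this Schur complement vanishes, so $\omega_1(\mat B_{22})=0$ and \cref{lem:null} applies with $\mat z=\omega_2(d(\mat B_{22}))$.

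Applying \cref{lem:null} to $\mat B$ with input $\mat x\oplus\bar{\mat v}$, whose leading block is $\mat x_1'=\mat x_1\oplus\mat v$, the amplitude is nonzero precisely when $\mat x_2=\omega_1(\mat B_{21}\mat x_1'+\mat z)$, in which case it equals $2^{-k/2}i^{-(\mat x_1')^T\mat B_{11}\mat x_1'}$. Two bookkeeping conversions then finish the proof. For the phase, I absorb the residual sign into the quadratic form: since $(\mat x_1')_i^2=(\mat x_1')_i$ we have $2(\mat x_1')^T D(\mat u)\mat x_1'=2(\mat x_1')^T\mat u$, so $(-1)^{(\mat x_1')^T\mat u}\,i^{-(\mat x_1')^T\mat B_{11}\mat x_1'}=i^{-(\mat x_1')^T(\mat B_{11}+2D(\mat u))\mat x_1'}$; and because $\mat x_1+\mat v$ and $\mat x_1\oplus\mat v$ differ by $2(\mat x_1\ast\mat v)$, their quadratic forms against the $\J$-matrix $\mat B_{11}+2D(\mat u)$ agree modulo $4$, which is all that $i^{-(\cdot)}$ sees, matching \cref{eq:amp-formula}. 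For the support, set $\mat w=\begin{bmatrix}\mat v\\\mat z\end{bmatrix}$; then $\mat x_2=\omega_1(\mat B_{21}\mat x_1'+\mat z)$ is equivalent to $\mat x\oplus\mat w=\begin{bmatrix}\mat x_1'\\\mat B_{21}\mat x_1'\end{bmatrix}$, i.e.\ $\mat x\oplus\mat w\in\spn\left(\left[\begin{smallmatrix}\mat I\\\mat B_{21}\end{smallmatrix}\right]\right)$. Since $\mat B_{21}=\omega_1(\mat A_{21})\,\omega_1(\mat A_{11})^\#$ and $\omega_1(\mat A_{11})$ is invertible, right-multiplication by $\omega_1(\mat A_{11})$ shows this span equals $\spn\left(\left[\begin{smallmatrix}\omega_1(\mat A_{11})\\\omega_1(\mat A_{21})\end{smallmatrix}\right]\right)=\spn(\omega_1(\mat A))$, the last equality holding because these leading $k$ columns already span the rank-$k$ column space. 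Finally, $\alpha$ is an $\mat x$-independent global phase and is dropped by the standard convention.

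The main obstacle is the block-structure verification, namely establishing $\omega_1(\mat B_{22})=0$ so that \cref{lem:null} is applicable; this is where the rank hypothesis enters essentially, via the vanishing Schur complement. The remaining work is a careful but routine translation between the $\F_2$ and $\Z_4$ descriptions of the Pauli corrections, the span condition, and the quadratic phase.
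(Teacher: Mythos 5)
Your proposal is correct and follows essentially the same route as the paper's proof: the same factorization $\mat{H}^{\otimes n}=(\mat{I}^{\otimes k}\otimes\mat{H}^{\otimes(n-k)})(\mat{H}^{\otimes k}\otimes\mat{I}^{\otimes(n-k)})$, an application of \cref{lem:gj-pgs} followed by commuting $\mat{X}^{\bar{\mat{v}}}\mat{Z}^{\bar{\mat{u}}}$ onto the bra, the vanishing-Schur-complement justification that $\omega_1(\mat{B}_{22})=0$ so that \cref{lem:null} applies, and the same modulo-4 phase bookkeeping absorbing $(-1)^{(\mat{x}_1+\mat{v})^T\mat{u}}$ into the $2D(\mat{u})$ term. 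The one localized difference is the final support condition, where the paper invokes the domain-range exchange property of the principal pivot transform, whereas you identify $\spn\left(\left[\begin{smallmatrix}\mat{I}\\\mat{B}_{21}\end{smallmatrix}\right]\right)$ with $\spn(\omega_1(\mat{A}))$ directly by right-multiplying by the invertible $\omega_1(\mat{A}_{11})$ --- an equally valid and more elementary step that avoids the external citation.
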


\begin{proof}
Let $\mat{x}_2$, $\mat{v}_2$ and $\mat{u}_2$ denote the trailing $n-k$ elements of $\mat{x}$, $\mat{v}$ and $\mat{u}$ respectively.
By \cref{lem:gj-pgs},
\begin{align}
\bra{\mat{x}}\mat{H}^{\otimes n}\ket{\mat{A}} 
&=
\bra{\mat{x}}(\mat{I}^{\otimes k}\otimes \mat{H}^{\otimes(n-k)})(\mat{H}^{\otimes k}\otimes \mat{I}^{\otimes(n-k)})\ket{\mat{A}} \\
&=
\bra{\mat{x}}(\mat{I}^{\otimes k}\otimes \mat{H}^{\otimes(n-k)})\mat{X}^\mat{v}\mat{Z}^{\mat{u}}\ket{\mat{B}}\\
&=
\bra{\mat{x}}\mat{X}^\mat{v}\mat{Z}^{\mat{u}}(\mat{I}^{\otimes k}\otimes \mat{H}^{\otimes(n-k)})\ket{\mat{B}}\\
&=
\bra{\omega_1(\mat{x}+\mat{v})}\mat{Z}^{\mat{u}}(\mat{I}^{\otimes k}\otimes \mat{H}^{\otimes(n-k)})\ket{\mat{B}}\\
&=
(-1)^{(\mat{x}_1+\mat{v})^T\mat{u}}\bra{\omega_1(\mat{x}+\mat{v})}\mat{I}^{\otimes k}\otimes \mat{H}^{\otimes(n-k)}\ket{\mat{B}}.
\end{align}
Let $\mat{z}=\omega_2(d(\mat{B}_{22}))$.
By \cref{lem:null}, we have the following when $\mat{x}_2=\omega_1(\mat{B}_{21}(\mat{x}_1+v)+\mat{z})$,
\begin{align}
\bra{\mat{x}}\mat{H}^{\otimes n} \ket{\mat{A}}
&=2^{-k/2}
i^{-\omega_1(\mat{x}_1+v)^T\mat{B}_{11}\omega_1(\mat{x}_1+v)
-2(\mat{x}_1+v)^Tu
}\\
&=2^{-k/2}
i^{-(\mat{x}_1+v)^T\mat{B}_{11}(\mat{x}_1+v)
-2(\mat{x}_1+v)^Tu
}\\
&=2^{-k/2}
i^{-(\mat{x}_1+v)^T(\mat{B}_{11}+2D(u))(\mat{x}_1+v)
}
\end{align}
Also, if $\mat{x}_2\neq\omega_1(\mat{B}_{21}(\mat{x}_1+v)+\mat{z})$,
\[
\bra{\mat{x}}\mat{H}^{\otimes n}\ket{\mat{A}}=0.
\]

Finally, we show the condition $\mat{x}_2=\omega_1(\mat{B}_{21}(\mat{x}_1+v)+\mat{z})$ is equivalent to $\mat{x}\oplus\mat{w}\in\spn(\mat{A})$. When the condition is satisfied, we have
\begin{align}
\begin{bmatrix}
\mat{B}_{11} & \mat{B}_{12} \\
\mat{B}_{21} & 0
\end{bmatrix}
\begin{bmatrix}
\mat{x}_1+v \\
0
\end{bmatrix}
\equiv
\begin{bmatrix} 
\mat{B}_{11}(\mat{x}_1+v) \\
\mat{x}_2+\mat{z}
\end{bmatrix} \pmod 2.
\end{align}
And further, by the domain-range exchange property of principal pivot transform~\cite[Theorem 3.1]{tsatsomeros2000principal}, the above implies that
\begin{align}
\begin{bmatrix}
\mat{A}_{11} & \mat{A}_{12} \\ \mat{A}_{21} & \mat{A}_{22}
\end{bmatrix} \begin{bmatrix} \mat{B}_{11}(\mat{x}_1+v) \\ 0 \end{bmatrix}
\equiv
\begin{bmatrix} 
\mat{x}_1 +v\\
\mat{x}_2+\mat{z}
\end{bmatrix} \pmod 2.
\end{align}
Hence, the right-hand side above, $\mat{x}\oplus\mat{w}$, must be in the span of $\omega_1(\mat{A})$ over $\F_2$.
Further, if $\mat{x}\oplus\mat{w}\in \spn(\omega_1(\mat{A}))$, the above condition must hold, since the first $k$ columns of $\mat{A}$ have the same span as all of the columns of $\mat{A}$.
\end{proof}

\section{Phased graph state simulation}\label{sec:pgs-sim}
In this section, we utilize the relation between phased graph states and Guassian elimination as presented in \cref{sec:pgs-ge} with techniques in numerical linear algebra to solve the phased graph state simulation problem as defined in \cref{def:pgs-sim}.

\subsection{Usage of implicit fast LDL decomposition}
As shown in \cref{lem:gj_cor}, the generalized Gauss-Jordan process is closely related to LDL decompositions. Therefore, the fast LDL algorithm~\cite{solomonik2025fast} can be applied here to solve the phased graph state simulation problem in running time depending on the treewidth of the graph. However, extra caution is needed because the fast LDL algorithm represents the decomposition in an implicit form and we need to make sure all of the intermediate steps for phased graph state simulation can be performed efficiently.
We first provide a lemma that computes parts of the inverse based on a sparse LDL decomposition, then show how to use it to compute all non-trivial quantities needed later from $L$.
% \yuchen{New lemma below}
% \begin{lemma}
% Given $A\in\W_n$ such that $\rank(\omega_1(A))=k$ and a tree decomposition of the graph corresponding to $A$ with width $\tau$, then $\omega$ defined in \cref{thm:amp-formula} can be along with the fast LDL algorithm~\cite{solomonik2025fast} in the same $O(n\tau^{\omega-1})$ time complexity.
% \end{lemma}
% \begin{proof}
% ...
% \end{proof}

\begin{lemma}\label{lem:tinv}
Consider a symmetric matrix $A\in\mathbb{F}_2^{n\times n}$ with a tree decomposition $T=(V_T,E_T)$ of width $\tau$ with $O(n/\tau)$ bags corresponding to a rooted binary tree.
%Assume the rows and columns of $A$ are permuted so that 
Let $P$ be the permutation obtained from the LDL algorithm in~\cite{solomonik2025fast}. %results in an LDL with no row/column permutations.
Let $A_{11}$ be the maximal leading square full-rank block of $P^TAP$, and consider
\[(P^TAP)^g = \begin{bmatrix} A_{11}^{\#} & A_{11}^{\#}A_{12} \\ A_{21}A_{11}^{\#} & 0\end{bmatrix}\]
Let $X_{B}=P_B^TXP_B$ be the restriction of $X$ to the vertices of a bag $B$ of the tree decomposition (each column of $P_B$ is a distinct elementary vector) for any $X\in\mathbb{F}_2^{n\times n}$.
The set of matrices $\{(P(P^TAP)^gP)_B | B \in V_T\}$ may be computed in time $O(n\tau^{\omega-1})$.
\end{lemma}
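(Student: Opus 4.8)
The plan is to reduce the task to a \emph{selected inversion} carried out over the tree decomposition, reusing the implicit sparse factorization produced by the fast LDL algorithm of~\cite{solomonik2025fast}. First I would run that algorithm on $A$ (which is already over $\F_2$) to obtain, in $O(n\tau^{\omega-1})$ time, an LDL decomposition $(P,L,D)$ of $A$ in the sense of \cref{def:gf2-ldl}, stored in an implicit, bag-local form, with the pivots ordered so that the maximal leading full-rank block $A_{11}$ of $P^TAP$ occupies the first $k$ columns; then $D=\begin{bmatrix}D_{11}&0\\0&0\end{bmatrix}$ with $D_{11}$ invertible. The point of this ordering is that every entry of $(P^TAP)^g$ is expressible purely in terms of the factors: by the computation already carried out in the proof of \cref{lem:gj_cor}, the leading block is $A_{11}^{\#}=\omega_1(L_{11}^{-T}D_{11}^{-1}L_{11}^{-1})$, the off-diagonal block is $A_{21}A_{11}^{\#}=\omega_1(L_{21}L_{11}^{-1})$, and the trailing block is $0$. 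Thus it suffices to compute the entries of $L_{11}^{-1}$ and of the product $L_{21}L_{11}^{-1}$ that fall inside some bag.

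The observation that makes this tractable is that, although $L^{-1}$ is generally dense, every pair of indices $(i,j)$ lying in a common bag is an edge of the chordal completion induced by the tree decomposition, hence belongs to the fill pattern of the elimination. I would therefore compute the required entries by a Takahashi-style recurrence that traverses the tree from the root to the leaves, i.e.\ the reverse of the leaf-to-root elimination order used by the forward factorization. Partitioning along each separator into a $2\times 2$ block system, one first obtains the restriction of the inverse to the separator and then recovers, for each child subtree, the within-bag inverse entries from the child's local $L$ block together with the already-computed separator entries, via the standard identity in which the new frontal inverse block equals $L^{-T}$ times the propagated separator inverse times $L^{-1}$, restricted to the bag. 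Each bag manipulates dense matrices of size $O(\tau)$, so each bag costs $O(\tau^\omega)$; summing over the $O(n/\tau)$ bags yields the claimed $O(n\tau^{\omega-1})$ bound, matching the cost of the forward factorization.

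Two features specific to our setting must then be handled. Because $A_{11}$ is the maximal leading full-rank block and $\rank(A)=\rank(A_{11})=k$, the Schur complement vanishes, the last $n-k$ pivots of $D$ are zero, and the associated rows and columns of $(P^TAP)^g$ are identically zero; I would initialize the top-down sweep at the boundary between the full-rank and the null part, propagating only through the invertible pivots $D_{11}$. A bag straddling this boundary then requires genuine inverse entries $\omega_1(L_{11}^{-T}D_{11}^{-1}L_{11}^{-1})$ for pairs inside the full-rank part, the mixed products $\omega_1(L_{21}L_{11}^{-1})$, and zeros for pairs inside the null part, all of which the sweep supplies. Finally, the permutation $P$ is pure bookkeeping: since each bag is a set of vertices, the restriction $(P(P^TAP)^gP)_B$ in the original ordering coincides with the restriction of $(P^TAP)^g$ to the corresponding permuted index set, so no global dense matrix is ever formed.

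The main obstacle I anticipate is twofold. First, one must verify that the implicit bag-local representation of~\cite{solomonik2025fast} exposes exactly the blocks $L_{11}$, $L_{21}$, and $D_{11}$ needed for the recurrence without materializing any dense global factor, so that the per-bag work genuinely stays $O(\tau^\omega)$; this amounts to checking that the separator/frontal structure of that algorithm is compatible with the top-down inverse sweep. Second, and more delicate, is handling the rank deficiency cleanly over $\F_2$: one must confirm that the selected-inversion recurrence, classically stated for a nonsingular matrix, yields precisely the generalized inverse with the prescribed zero trailing block when the trailing pivots are singular, and that the $2\times 2$ antidiagonal pivot blocks $\begin{bmatrix}0&1\\1&0\end{bmatrix}$ permitted by the $\F_2$ LDL decomposition are inverted correctly within each frontal update.
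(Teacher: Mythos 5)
Your proposal is correct and takes essentially the same route as the paper: both perform a root-to-leaf sweep over the tree decomposition after the bottom-up factorization, computing each bag's restriction of the generalized inverse from the parent's blocks via $O(\tau^\omega)$ dense products per bag (the paper's explicit update $A^g_{11}-\bar{A}^h_{11}A_{12}BA_{21}\bar{A}^h_{11}$, $\bar{A}^h_{11}A_{12}\bar{B}_{11}$, etc., is precisely your Takahashi/selected-inversion recurrence), with the rank-deficient trailing part handled by zeroing the peeled rows/columns so that only the invertible pivots propagate. One cosmetic slip: within-bag entries of $L_{11}^{-1}$ alone would not determine the block $\omega_1\big(L_{11}^{-T}D_{11}^{-1}L_{11}^{-1}\big)$, but your actual top-down recurrence computes the bag-restricted inverse blocks directly, so the argument is unaffected.
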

\begin{proof}
We can compute these blocks by induction, starting from the root node of the tree decomposition.
For the root node bag $R$, it suffices to directly compute $A^g_R$ from the Schur complement (which may be formed from the lower right block of the triangular part of the $L$ factor and the corresponding part of $D$).
For any non-root bag $B$, we assume that we have access to $A^g_P$ where $P$ is the parent of $B$.
Compute $A^h_P$ by zeroing out all rows/columns of $A^g_P$ corresponding to peeled rows/columns (not part of the leading full rank block), and removing all rows/columns that are not in $B$.
Define $\bar{A}^h_P$ by setting the diagonal entries corresponding to zero rows/columns, to $1$.
Now, let $K\subseteq B$ the set of vertices in $B$ not contained in $P$ and let $M$ be the remaining vertices in $B$.
Assume $Y=A^g_B$ is ordered so that vertices in $K$ appear after vertices in $M$.
We may obtain the leading $|M|$-dimensional block of $Y$, $Y_{11}$ directly from $A^g_P$.
To get $Y_{21}$ and $Y_{22}$, we may use products like $A_{KM}A^h_PA_{MK}$ where $A_{KM}$ is the off-diagonal block of $A$ between vertices in $K$ and vertices in $M$.
In particular, for a 2-node tree with nodes $B$ and $P$, we have that the initial $A$ has the block structure,
\[A=\begin{bmatrix}A_{11} & A_{12} & 0 \\ A_{21} & A_{22} & A_{23} \\ 0 & A_{32} & A_{33}\end{bmatrix}, \quad A_{11}\in\mathbb{F}_2^{|K|\times |K|}, A_{22}\in\mathbb{F}_2^{|M|\times |M|}.\]
Let $B=A^h_P$ and $\bar{B}=\bar{A}^h_P$, where we have (in this case), $A_P=\begin{bmatrix}A_{22} & A_{23} \\ A_{32}  & A_{33}\end{bmatrix}$.
Define $A^h_{11}$ and $\bar{A}^h_{11}$ analogously to $A^h_P$ and $\bar{A}^h_P$.
Then, we have that 
\[P(P^TAP)^gP=\begin{bmatrix}A^g_{11}-\bar{A}^h_{11}A_{12}BA_{21}\bar{A}^h_{11} & \bar{A}^h_{11}A_{12}\bar{B}_{11} & \ast \\ \bar{B}_{11}A_{21}\bar{A}^h_{11} & (A^g_P)_{11}& (A^g_P)_{12} \\ \ast & (A^g_P)_{21} & (A^g_P)_{22}\end{bmatrix},\]
where $\ast$ denotes blocks we are not interested in computing.
Clearly, the block matrix multiplications we are interested in computing have dimension smaller than the bag size. 
When there are more than 2 nodes, in particular when $B$ has descendants, the situation does not change, except that we should start with the Schur complement obtained after eliminating all descendants of $B$ instead of simply $A$.
Hence, it suffices to do $O(\tau^\omega)$ work for each of $O(n/\tau)$ nodes, leading to the complexity in the lemma.
\end{proof}
An alternate algorithm (proof of the above lemma) is to compute an LDL decomposition for every bag in the tree decomposition, with that bag as root.
For each LDL the desired block of the inverse is trivial to obtain from the root node.
The $O(n/\tau)$ LDL decompositions can be computed in $O(\tau^{\omega})$ amortized time each, by shifting the root following an Euler tour of the tree, and recomputing the LDL for only $O(1)$ bags at each step in the tour.

Leveraging the sparse LDL and partial inverse construction algorithms, we provide the following lemma to summarize the complexities of subroutines that are needed for phased graph state simulation based on the fast LDL algorithm described in~\cite{solomonik2025fast}.
\begin{lemma}\label{lem:gj-complexity}
Given $A\in\W_n$ such that $\rank(\omega_1(A))=k$ and a tree decomposition of the graph corresponding to $A$ with width $\tau$, then the fast LDL algorithm~\cite{solomonik2025fast} can compute an implicit representation of $L\in\F_2^{n\times k}$ and $D\in\F_2^{k\times k}$ in $O(n\tau^{\omega-1})$ time, such that the following conditions are all satisfied,
\begin{itemize}
\item $\omega_1(A)=\omega_1(LDL^T)$ is a reduced LDL decomposition over $\F_2$.
\item Let $L=\begin{bmatrix}
L_1\\L_2
\end{bmatrix}$ where $L_1\in\F_2^{k\times k}$. Then the
matrix-vector product with $L_1$, $L_1^{-1}$ or $L_2L_1^{-1}$ over $\F_2$ could be performed in $O(n\tau)$ time.
\item Let $\tilde L=\begin{bmatrix}
L_1 & 0\\L_2 & I
\end{bmatrix}$ for $L_1$, $L_2$ as defined above. Then for any $X\in\F_2^{n\times \tau}$, $\tilde LX$, $\tilde L^TX$, $\tilde L^{-1}X$, or $\tilde L^{-T}X$ over $\F_2$ could be computed in $O(n\tau^{\omega-1})$ time.
% \item for any $X\in\F_2^{n\times \tau}$, matrix-matrix product $LX$ could be computed in $O(n\tau^{\omega-1})$ time;
\end{itemize}
Moreover, %$\omega_2(d(LDL^T))$ and 
$\omega_2(d(L_1^{-T}DL_1^{-1}))$ as well as $v$ and $w$ as defined in \cref{thm:amp-formula} can be computed along with the LDL decomposition without increasing the time complexity.
\end{lemma}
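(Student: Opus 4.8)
The plan is to obtain every bound by invoking the tree-decomposition-based fast LDL algorithm of \cite{solomonik2025fast} and reading off the required quantities from the tree-ordered, block-structured factor it produces. First I would run that algorithm on $\omega_1(A)\in\F_2^{n\times n}$ with the given width-$\tau$ decomposition; it returns, in $O(n\tau^{\omega-1})$ time, an implicit unit lower-trapezoidal $L\in\F_2^{n\times k}$ and a block-diagonal $D\in\F_2^{k\times k}$ (with blocks $1$ and $\smallpaulixmatrix$) such that $\omega_1(A)=\omega_1(LDL^T)$, which is a reduced LDL decomposition in the sense of \cref{def:gf2-reduced-ldl}. The hypothesis $A\in\W_n$ ensures that the leading full-rank structure required downstream by \cref{lem:gj_cor} and \cref{thm:amp-formula} is respected, which settles the first bullet.

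The single-vector bounds in the second bullet I would derive from the fill structure of tree-decomposition elimination: a nonzero $L_{ij}$ can arise only when $i$ and $j$ share a bag of the chordal completion, so $L$ consists of $O(n/\tau)$ dense blocks of size $O(\tau)$ and $\nnz(L)=O(n\tau)$. A matrix--vector product with $L_1$ (or with $L$) is then $O(n\tau)$; a product with $L_1^{-1}$ is a forward substitution against the sparse unit-lower-triangular $L_1$, again $O(n\tau)$; and $L_2L_1^{-1}x$ is evaluated as $L_2(L_1^{-1}x)$, so that the dense product $L_2L_1^{-1}$ is never formed, once more $O(n\tau)$.

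The third bullet is where fast matrix multiplication is essential and is the technical heart. Applying any of $\tilde L,\tilde L^{T},\tilde L^{-1},\tilde L^{-T}$ to $X\in\F_2^{n\times\tau}$ is a sparse triangular product or solve with $\tau$ simultaneous right-hand sides; done column by column this would cost $\tau\cdot O(n\tau)=O(n\tau^2)$. Instead I would use the blocked, tree-ordered triangular sweep of \cite{solomonik2025fast}: the computation factors into $O(n/\tau)$ node-local dense operations on $O(\tau)\times O(\tau)$ matrices carrying $O(\tau)$ right-hand sides, each costing $O(\tau^\omega)$, for a total of $O(n\tau^{\omega-1})$. The block identity $\tilde L^{-1}=\begin{bmatrix} L_1^{-1} & 0 \\ L_2L_1^{-1} & I\end{bmatrix}$ over $\F_2$ reduces the inverse applications to the same batched triangular solve, and $\tilde L^{T},\tilde L^{-T}$ follow by sweeping the tree in the opposite order.

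For the ``moreover'' quantities I would track second bits over $\Z_4$ alongside the $\F_2$ factorization. Because $L$ and $D$ are $\{0,1\}$-valued, $d(LDL^T)$ over $\Z$ is computable in $O(n\tau)$ using the $O(\tau)$ nonzeros per row of $L$ together with the block-diagonal form of $D$, which yields $v=d(\omega_2(LDL^T))\oplus\omega_2(d(A_{11}))$ and hence $u=v\oplus\delta(A_{11})$ by \cref{lem:gj_cor}. The trailing second-bit diagonal $\omega_2(d(B_{22}))$ needed for $w$ is accumulated from the Schur-complement updates formed at each bag over $\Z_4$ rather than $\F_2$, and $\omega_2(d(L_1^{-T}DL_1^{-1}))=\omega_2(d(B_{11}))$ is obtained by running the bag-local partial-inverse construction of \cref{lem:tinv} over $\Z_4$, since every diagonal entry is pinned down by within-bag data. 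I expect the main obstacle to be exactly this $\Z_4$ bookkeeping: the factorization is over $\F_2$, yet $\omega_2$ of a diagonal of a matrix product depends on integer cross-terms (the $2\sum_{j<l}c_jc_l$ contribution), so one must verify that carrying just enough integer information on the diagonals and on the $O(\tau)$-sized blocks faithfully recovers these carries through the $\Omega_{1,2}$ projections while leaving every node-local operation at its $O(\tau^\omega)$ cost, thereby avoiding brute-force routes such as extracting all $n$ columns of $L_1^{-1}$ (which would cost $O(n^2\tau^{\omega-2})$).
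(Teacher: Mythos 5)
Your three bulleted claims land in the right place, but note that the paper derives all of them directly from the implicit representation the algorithm of \cite{solomonik2025fast} actually outputs, $L=\omega_1\bigl(\prod_{i}^{m}Q_i\bigr)$ with $m=O(n/\tau)$ factors each acting on an $O(\tau)$-sized submatrix: applying the factors (or their inverses, in reverse order) costs $O(\tau^2)$ each for a vector and $O(\tau^\omega)$ each for an $n\times\tau$ block, giving $O(n\tau)$ and $O(n\tau^{\omega-1})$ at once. Your second-bullet argument instead assumes an \emph{explicit} sparse $L$ with $\nnz(L)=O(n\tau)$ and does forward substitution against $L_1$ and a sparse multiply by $L_2$; in this pivoted, rank-revealing $\F_2$ setting that assumption is unjustified, and the paper explicitly states that the rows of $L$ corresponding to the peeled (linearly dependent) rows are \emph{not} available explicitly from the fast LDL. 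The discrepancy is harmless for the complexity bounds only because your bullet-three route, the identity $\tilde L^{-1}=\left[\begin{smallmatrix} L_1^{-1} & 0\\ L_2L_1^{-1} & I\end{smallmatrix}\right]$ applied through the implicit factors, already subsumes the single-vector case; you should have argued the second bullet that way too.

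The genuine gap is in your computation of $w$. You propose to accumulate $\omega_2(d(B_{22}))$ ``from the Schur-complement updates formed at each bag over $\Z_4$,'' but the fast LDL peels linearly dependent rows off early in the factorization and never forms subsequent Schur updates for them, so the trailing diagonal entries you need are precisely the ones for which no such updates are ever formed --- this is the exact obstruction the paper's proof singles out. The paper's resolution is to run the bag-local generalized-inverse recursion of \cref{lem:tinv} and track the second bit of the diagonal in the products appearing in that formula (including when the Schur complements there are formed); you invoke \cref{lem:tinv} only for $\omega_2(d(L_1^{-T}DL_1^{-1}))$ and defer the integer-carry bookkeeping for $w$ as something ``to verify,'' which is the very step the proof must supply. (Your formula $v=d(\omega_2(LDL^T)\oplus\omega_2(A_{11}))$ agrees with \cref{lem:gj_cor}, but computing it post hoc from explicit rows of $L$ again presumes explicit access; the paper instead records the second bit of each diagonal Schur update at the moment of elimination, which costs nothing extra and needs no explicit $L$.)
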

\begin{proof}
The fast LDL algorithm described in \cite{solomonik2025fast} takes $O(n\tau^{\omega-1})$ time and represents $L=\omega_1\left(\prod_i^{m}Q_i\right)$ for $m=O(n/\tau)$ such that each $Q_i$ acts on a submatrix of size $O(\tau)$. Therefore, the three conditions in the lemma are satisfied.
For computing $v$ (recall that $v_i$ corresponds to the second bit of the diagonal in the Gauss-Jordan process when  it is eliminated), it suffices to keep track of (but not propagate/use) the second bit of Schur complement updates to the diagonal. 
Its easy to check that computing this does not change the complexity of the LDL (no greater asymptotic number of operations over $\mathbb{F}_2$ required, even when considering fast matrix multiplication).

However, obtaining he second part of $w$ (recall that it is defined to be $\omega_{2}(d(B_{22}))$ for $B_{22}$ as given in \cref{lem:gj_cor} that corresponds to nodes that are not eliminated in the Gauss-Jordan process) is trickier with the sparse LDL procedure in~\cite{solomonik2025fast}, because linearly-dependent rows are peeled-off early in the factorization and we do not have the corresponding rows of $L$ explicitly.
To overcome this without an efficient algorithm for explicit construction of the rows of $L$, we may leverage \cref{lem:tinv}, which gives an explicit formula for computing the generalized inverse based on the Gauss-Jordan process for all diagonal blocks.
To obtain $w$, it suffices to keep track of the second bit of the diagonal in the products in that formula (also when computing the Schur complement itself).
The term $\omega_2(d(L_1^{-T}DL_1^{-1}))$ can be similarly computed via keeping track of the second bit of the diagonal in the recursive procedure of \cref{lem:tinv}, but this time not combining it with the regular LDL updates to the Schur complement (updates from eliminations of descendant of descendants of bag $B$ in the proof of \cref{lem:tinv}).
\end{proof}

\subsection{Weak simulation of phased graph states}\label{sec:weak-sim-pgs}
Since the nonzero amplitudes of a phased graph state have equal magnitude as given in \cref{thm:amp-formula}, we are interested in uniform sampling of 
\begin{equation}
\set*{x\in\mathcal{X}_{S,y}}{x\oplus w\in\spn(\omega_1(A))}.
\end{equation}
We can assume, WLOG, that $w=0$, since if $w\neq 0$ it suffices to produce the sample $x\oplus w$ then subtract $w$ modulo 2.
Hence, we focus on sampling
\begin{equation}
\mathcal{Y}_{A,S,y} =\set*{x\in\mathcal{X}_{S,y}}{ x\in\spn(\omega_1(A))}.
\end{equation}
Further, given the decomposition, $\omega_1(A)=LDL^T$ with trapezoidal $L$, since $L$ preserves the span of $\omega_1(A)$,
\[\mathcal{Y}_{A,S,y}=\mathcal{Y}_{L,S,y}.\]

For vector spaces $V$ and $W$, let $\spn(V)\setminus\spn(W)$ denote the quotient space defined by the intersecting subspace,
\begin{equation}
\spn(V)\setminus\spn(W) = \spn(V)/(\spn(V)\cap \spn(W)).
\end{equation}
Then we have the following lemma that decomposes $\mathcal{Y}_{A,S,y}$.
\begin{lemma}\label{lem:affsamp}
%Given $B\in\mathbb{F}_2^{n\times \ell}$, 
For any lower-trapezoidal, unit-diagonal matrix $L\in\mathbb{F}_2^{n\times r}$, any set $\bar S=\set{\bar{s}_1, \cdots,\bar{s}_\ell}\subseteq\{1,\ldots, n\}$, and vector $y\in\mathbb{F}_2^n$, such that $\forall i \in \bar{S}, y_i=0$. Let
\[\bar B=\begin{bmatrix} e_{\bar{s}_1} & \cdots & e_{\bar{s}_\ell}\end{bmatrix},\]
%and consider $\mathcal{Y}_{L,S,y}$ defined as in~\eqref{eq:zlby}.
then, we have that either $\mathcal{Y}_{L,S,y}=\emptyset$ or
% \begin{align}
% &\exists x^\perp\in \spn(\bar B)\setminus \spn(L), x^\vert \in \spn(L), \text{ s.t. }  x^\vert +x^\perp=y  \text { and } \label{eq:as1}\\
% &\forall x \in \mathcal{Y}_{L,S,y}, x = u(x) + y- x^\perp, u(x) \in \spn(\bar B)\cap \spn(L). \label{eq:as2}
% \end{align}
\begin{align}
&\exists x^\perp\in \spn(\bar B)\setminus \spn(L), x^\vert \in \spn(L), \text{ s.t. }  x^\vert \oplus x^\perp=y  \text { and } \label{eq:as1}\\
&\forall x \in \mathcal{Y}_{L,S,y}, x = u(x) \oplus  y\oplus  x^\perp, u(x) \in \spn(\bar B)\cap \spn(L). \label{eq:as2}
\end{align}
Moreover, if $L_1\in\mathbb{F}_2^{r\times r}$ is the leading triangular block of $L$ and $y_1\in\mathbb{F}_2^r$ is the first $r$ elements of $y$,
\begin{align}
x^\vert = \omega_1\bigg(\begin{bmatrix} y_1 \\ L_2L_1^{-1}y_1\end{bmatrix}\bigg). \label{eq:as3}
\end{align}
\end{lemma}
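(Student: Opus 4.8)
The plan is to view $\mathcal{Y}_{L,S,y}$ as the intersection of an affine subspace with a linear subspace, and then to use the invertibility of the unit-triangular leading block $L_1$ to write down an explicit representative. Identifying $\bar S$ with the complement $[n]\setminus S$ of the fixed set and using the hypothesis that $y$ vanishes on $\bar S$, the first step is to observe that the constraint set is the coset $\mathcal{X}_{S,y}=y\oplus\spn(\bar B)$: fixing $x_i=y_i$ for $i\in S$ while leaving the coordinates in $\bar S$ free is exactly $y\oplus\spn(\bar B)$. Consequently $\mathcal{Y}_{L,S,y}=(y\oplus\spn(\bar B))\cap\spn(L)$, which is the object to be described.

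Next I would derive the coset decomposition \eqref{eq:as1}--\eqref{eq:as2}. The intersection of an affine space with a linear space is either empty or a coset of the intersection $\spn(\bar B)\cap\spn(L)$ of the two underlying linear spaces. Assuming it is nonempty, choosing any member $x^\vert$ of it gives $x^\vert\in\spn(L)$, while $x^\perp:=y\oplus x^\vert$ lies in $\spn(\bar B)$ and is determined only modulo $\spn(\bar B)\cap\spn(L)$, i.e.\ as an element of the quotient $\spn(\bar B)\setminus\spn(L)$; this is \eqref{eq:as1}. For \eqref{eq:as2}, any $x\in\mathcal{Y}_{L,S,y}$ satisfies $u(x):=x\oplus x^\vert\in\spn(L)$ (difference of two span members) and $u(x)\in\spn(\bar B)$ (both $x$ and $x^\vert$ lie in $y\oplus\spn(\bar B)$), so $x=u(x)\oplus x^\vert=u(x)\oplus y\oplus x^\perp$ with $u(x)\in\spn(\bar B)\cap\spn(L)$, as claimed.

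Finally I would exhibit the explicit base point \eqref{eq:as3}. Since $L_1$ is unit lower-triangular and hence invertible, $LL_1^{-1}y_1$ is the unique element of $\spn(L)$ whose leading $r$ coordinates equal $y_1$, and its block form is precisely \eqref{eq:as3}; membership in $\spn(L)$ is immediate. The substance is to check that this vector actually lies in $\mathcal{Y}_{L,S,y}$, equivalently that the residual $y\oplus x^\vert$ lies in $\spn(\bar B)$. By construction the leading block of the residual vanishes, since the top $r$ entries of $x^\vert$ reproduce $y_1$, so the residual is supported on the trailing coordinates; it then lands in $\spn(\bar B)$ provided those trailing coordinates belong to $\bar S$, i.e.\ provided the LDL pivoting places the fixed set $S$ inside the leading full-rank block. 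I expect this support condition to be the main obstacle: it is where the ordering of the pivots relative to $S$ must be invoked, and it is exactly what can fail when a fixed coordinate lies outside the leading block, which is the situation that renders $\mathcal{Y}_{L,S,y}$ empty.
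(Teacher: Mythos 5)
Your handling of \eqref{eq:as1}--\eqref{eq:as2} is correct and essentially the same as the paper's: both arguments view $\mathcal{Y}_{L,S,y}$ as the intersection of the coset $\mathcal{X}_{S,y}=y\oplus\spn(\bar B)$ with $\spn(L)$, which is either empty or a coset of $\spn(L)\cap\spn(\bar B)$; your observation that any two members differ by an element of $\spn(L)\cap\spn(\bar B)$ is exactly the paper's justification of \eqref{eq:as2}. For \eqref{eq:as3} the paper argues differently in form but equivalently in substance: it exhibits $P=L\begin{bmatrix}L_1^{-1}&0\end{bmatrix}=\begin{bmatrix}I&0\\L_2L_1^{-1}&0\end{bmatrix}$, checks $P^2=P$ and $\spn(P)=\spn(L)$, and sets $x^\vert=Py$; your characterization of the same vector as the unique element of $\spn(L)$ whose leading $r$ coordinates equal $y_1$ is an equivalent description.

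The gap is in the last step, which you flag but do not close. You correctly identify that for this $x^\vert$ to serve as the witness in \eqref{eq:as1}, one must verify that the residual $y\oplus x^\vert$ (supported on the trailing $n-r$ coordinates) lies in $\spn(\bar B)$, and you then merely conjecture that failure of this support condition ``is the situation that renders $\mathcal{Y}_{L,S,y}$ empty.'' That dichotomy is false as stated: take $n=2$, $r=1$, $L=(1,1)^T$, $S=\{2\}$, $\bar S=\{1\}$, $y=(0,1)^T$. Here $\mathcal{Y}_{L,S,y}=\{(1,1)^T\}\neq\emptyset$, with unique witness $x^\vert=(1,1)^T$ and $x^\perp=e_1$ (the intersection $\spn(L)\cap\spn(\bar B)$ is trivial), yet the formula \eqref{eq:as3} yields $x^\vert=(y_1,\,L_2L_1^{-1}y_1)=(0,0)^T$ with residual $(0,1)^T\notin\spn(\bar B)$: a free leading coordinate ($1\in\bar S$) can absorb a mismatch that projection along the trailing coordinates cannot. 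What your argument does establish is validity of the formula under an extra hypothesis tying $S$ to the leading block --- e.g.\ if $S\subseteq\{1,\ldots,r\}$ the residual automatically lands in $\spn(\bar B)$, and if $\{1,\ldots,r\}\subseteq S$ then $x^\vert$ is forced to be $LL_1^{-1}y_1$ and the residual test does coincide with emptiness. To be fair, the paper's own proof also never verifies the residual membership --- checking that $P$ is a projector onto $\spn(L)$ does not by itself show that $Py$ satisfies \eqref{eq:as1} --- so the subtlety you isolated is real and is elided by the paper as well; but in your write-up it remains an unproven (and, in full generality, false) assertion, so the proof of \eqref{eq:as3} is incomplete at precisely the step that carries its content.
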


\begin{proof}
If $\mathcal{Y}_{L,S,y}\neq \emptyset$, there must exist a unique vector 
\[x^{\perp}\in\spn(\bar B)\setminus\spn(L),\]
such that $x^\vert = y\oplus x^\perp\in\spn(L)$, i.e.,
\begin{align}
\mathcal{Y}_{L,S,y} = \set*{x}{ x = x^\perp \oplus y, x^\perp \in \spn(\bar B)\setminus\spn(L), x \in \spn(L)},
\end{align}
and hence \eqref{eq:as1} holds.
Further, \eqref{eq:as2} follows, as for any $v,w\in\mathcal{Y}_{L,B,y}$, we must have $v \oplus  w\in\spn(L) \cap \spn(\bar B)$ since $\spn(\bar B)$ includes all vectors in $\mathcal{X}_{S,0}$.
Further, the projection matrix on $\spn(L)$ is,
\[
P=\begin{bmatrix}
L_1\\L_2
\end{bmatrix}
\begin{bmatrix}
L_1^{-1} & 0
\end{bmatrix}
=\begin{bmatrix}
I & 0\\
L_2L_1^{-1} & 0
\end{bmatrix}\pmod2.
\]
We can verify that it is the projector because $P^2=P$, $\spn(P)=\spn(L)$.
Then, $x^\vert=Py$ gives \eqref{eq:as3}.
%Hence, we may obtain $x^\vert$, simply by solving for $z^\vert$,
%\[L_1z^\vert = y_1 \pmod 2,\]
%and computing $x^\vert = Lz^\vert$.
\end{proof}

Given that \cref{lem:affsamp} gives a way to obtain $x^\vert$ by triangular solve, we may subsequently just compute $x^\perp = y \oplus x^\vert$ and check if $x^\perp\in\spn(B) \Leftrightarrow x^\perp \in \mathcal{X}_{S,0}$ to determine whether $\mathcal{Y}_{L,S,y}=\emptyset$.

It then remains to devise a routine to sample $u(x)$, i.e., uniformly sample $\spn(L)\cap \spn(B)$.
Note that we can compute the intersection of vector space by finding the quotient of the null space of $B$,
\begin{align}
\spn(L)\cap\spn(\bar B) = \spn(L)\setminus\spn(B), B = \begin{bmatrix} e_{s_1} & \cdots & e_{s_{n-\ell}}\end{bmatrix}.
\end{align}
Further, 
\begin{align}
\spn(L)\cap\spn(\bar B) = \spn\bigg(P^T\begin{bmatrix} 0 \\ \hat{L}_2\end{bmatrix}\bigg), \quad \hat{L} = PL, \quad PB = \begin{bmatrix} I \\ 0\end{bmatrix},\label{eq:u-subspace}
\end{align}
where $P$ is a permutation matrix and $\hat{L}_2$ is the lower $\ell\times r$ block of $\hat{L}$.
Hence, we can sample $u(x)$ by finding a random linear combination of linearly independent columns of $\hat{L}_2$.
Due the the trapezoidal structure of $L$, it would suffice to omit some rows and columns of $L$ to get a basis for this subspace. 
In particular, let $M\in\F_2^{n\times n}$ be the diagonal matrix that zeros out the $(n-l)$ rows of $L$ specified in $\cref{eq:u-subspace}$, i.e. $m_{jj}=1$ iff $e_j\in\spn(\bar B)$. Also, let $N\in\F_2^{n\times n}$ be the diagonal matrix that zeros out the linearly dependent columns of $L$, i.e. the columns corresponding to the leading $r$ rows of $L$ that are zeroed out by $M$. Alternatively, 
$N$ can be defined as the leading $r\times r$ diagonal block of $M$. 
Then $u(x)$ can be sampled by computing $MLNc$ for uniformly random $c\in\F_2^r$.

The complexity of the sampling algorithm described above is given in the following theorem.
\begin{theorem}[Complexity of weak phased graph state simulation]\label{thm:weak-sim-pgs}
Given any $y\in\F_2^n$, $S\subseteq[n]$, $A\in\J_n$ and a tree decomposition of the graph corresponding to $A$ with width $\tau$, uniformly sampling $k\in\mathbb{N}$ elements from \[\set*{x\in\mathcal{X}_{S,y}}{x\oplus w\in\spn(\omega_1(A))},\]
where $w$ is defined as in \cref{thm:amp-formula}, or reporting the set is empty,
could be done
in $O(n\tau^{\omega-1}+\min(kn\tau^{\omega-2},\ell n\tau^{\omega-2}+k\ell^{\omega-1}))$ time for $\ell=n-\abs{S}$.
\end{theorem}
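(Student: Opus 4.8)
The plan is to turn the structural decomposition already established for $\mathcal{Y}_{L,S,y}$ into an explicit algorithm and to charge its cost to the implicit fast-LDL primitives of \cref{lem:gj-complexity}. First I would reduce to the case $w=0$, exactly as in the discussion preceding the theorem: compute $w$ (which \cref{lem:gj-complexity} provides within the LDL time) and agree to emit every eventual sample XORed with $w$. I would also normalize $y$ so that $y_i=0$ for $i\in\bar S$, which is harmless since $\mathcal{X}_{S,y}$ ignores the $\bar S$-entries of $y$. The target then becomes uniform sampling of $\mathcal{Y}_{L,S,y}$, where $\omega_1(A)=LDL^T$ is the reduced LDL decomposition with trapezoidal $L$, and by \cref{lem:affsamp} every sample has the form $u(x)\oplus y\oplus x^\perp$ with $u(x)$ uniform over $\spn(L)\cap\spn(\bar B)$.

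The setup phase runs the fast LDL algorithm to obtain the implicit $L$ and $D$ in $O(n\tau^{\omega-1})$ time. Using \cref{lem:affsamp} I would compute $x^\vert=\omega_1\!\left(\begin{bmatrix} y_1 \\ L_2L_1^{-1}y_1\end{bmatrix}\right)$ by a single solve with $L_2L_1^{-1}$, costing $O(n\tau)$ by \cref{lem:gj-complexity}, set $x^\perp=y\oplus x^\vert$, and test whether $x^\perp\in\mathcal{X}_{S,0}$ (i.e.\ $x^\perp$ vanishes on $S$) in $O(n)$ time; if this test fails I report that the set is empty. Otherwise the whole task reduces to repeatedly drawing a uniform element of $\spn(L)\cap\spn(\bar B)$, which the preceding discussion expresses as $MLNc$ for uniform $c$.

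For the sampling phase I would give two routines and take the cheaper. The direct routine draws, for each of the $k$ samples, a random $c$ and forms $MLNc$ through the factored form of $L$; batching the samples into blocks of width $\tau$ and invoking the block products of \cref{lem:gj-complexity} costs $O(n\tau^{\omega-1})$ per block, i.e.\ $O(kn\tau^{\omega-2})$ overall. The precomputation routine instead materializes a basis once: $N$ retains exactly the columns of $L$ whose pivot row lies in $\bar S$, of which there are at most $\ell=n-\abs{S}$, and restricted to the $\bar S$-rows (the action of $M$) these are automatically independent by the unit lower-trapezoidal pivot structure, so no extra Gaussian elimination is needed. Extracting these $\le\ell$ columns amounts to applying $L$ to $\le\ell$ elementary vectors; batched into $\lceil\ell/\tau\rceil$ blocks of width $\tau$ this costs $O(\ell n\tau^{\omega-2})$ and yields an explicit $\ell\times d$ basis $\hat B$ with $d\le\ell$. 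Each sample is then $\hat B c$ for a fresh uniform $c\in\F_2^d$, and producing all $k$ of them is one $\ell\times d$ by $d\times k$ product, which fast matrix multiplication performs in $O(k\ell^{\omega-1})$ time. Summing the phases, taking the minimum over the two routines, and folding the XOR with the fixed offset $y\oplus x^\perp\oplus w$ into lower-order work gives the claimed $O(n\tau^{\omega-1}+\min(kn\tau^{\omega-2},\ell n\tau^{\omega-2}+k\ell^{\omega-1}))$ bound.

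I expect the main obstacle to be the bookkeeping forced by the \emph{implicit} representation of $L$: \cref{lem:gj-complexity} guarantees only fast block products with $L$ and its factors, so I must phrase every quantity the sampler needs — the solve producing $x^\vert$, the selected columns of $\hat B$, and the emptiness test — purely in terms of these primitives, since the $\bar S$-rows of $L$ are not directly available. The second delicate point is the correctness of the precomputation routine: I must verify that the $\le\ell$ columns kept by $N$ are independent and span $\spn(L)\cap\spn(\bar B)$, so that $c\mapsto\hat B c$ is a bijection onto the intersection and a uniform $c$ yields a uniform element. Establishing this span/independence claim (asserted when $MLN$ was introduced), together with the careful interplay between the LDL pivot permutation and the $S$/$\bar S$ split, is the crux of the argument.
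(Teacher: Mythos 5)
Your proposal is correct and follows essentially the same route as the paper's proof: the same reduction to $w=0$, the same use of \cref{lem:affsamp} with the implicit-LDL primitives of \cref{lem:gj-complexity} for computing $x^\vert$ and testing emptiness, and the identical two-regime sampling via $MLNc$ — width-$\tau$ batched products costing $O(kn\tau^{\omega-2})$, versus explicitly forming the at-most-$\ell$-column basis in $O(\ell n\tau^{\omega-2})$ and then batch-multiplying in $O(k\ell^{\omega-1})$. Your only departures are cosmetic: you check emptiness by directly verifying that $x^\perp$ vanishes on $S$ in $O(n)$ time (the paper routes this through the implicit LDL at no asymptotic harm), and you spell out the pivot-staircase justification for the independence of the retained columns of $MLN$, which the paper merely asserts from the trapezoidal structure of $L$.
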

\begin{proof}
Assume, WLOG, $A\in\W_n$. By \cref{lem:gj-complexity}, we can compute an implicit LDL decomposition of $\omega_1(A)$ as well as the vector $w$ in $O(n\tau^{\omega-1})$ time.

The first step is to check whether the set we seek to sample is empty. By \cref{lem:affsamp}, $x^\vert$ can be computed by applying $L_2L_1^{-1}$, which takes $O(n\tau^{\omega-1})$ time according to \cref{lem:gj-complexity}.
Then we need to check $y \oplus x^\vert\in\spn(B) \iff\mathcal{Y}_{L,S,y}=\emptyset$. This is also in $O(n\tau^{\omega-1})$ time with the implicit LDL given by \cref{lem:gj-complexity}.

Then we sample $u(x)$ by computing $MLNc$ for uniformly random $c\in\F_2^r$.
By \cref{lem:gj-complexity}, performing this matrix-vector product with the implicit form of $L$ takes $O(n\tau)$ time, and to sample $k$ times, the multiplication could be done in $k/\tau$ batches in $O(kn\tau^{\omega-2})$ time.

However, if $\ell$ is relatively small (i.e. $\ell<n\tau$), it is preferable to explicitly form the basis of the subspace.
Note that $N$ contains at most $\ell$ nonzero columns, so $MLN$ could be computed in $O(\ell n\tau^{\omega-2})$ time with $\ell/\tau$ batches of matrix-matrix multiplication with the implicit $L$ given by \cref{lem:gj-complexity}. After that, sampling for $k$ times could be done by $k/\ell$ batches of fast matrix multiplication, which takes $O(k\ell^{\omega-1})$ time.
\end{proof}

\subsection{Strong simulation of phased graph states}\label{sec:strong-sim-pgs}
The strong simulation of phased graph state simulation boils down to the evaluation of the amplitudes formula in \cref{thm:amp-formula}, which could be evaluated in the same cost of fast LDL decomposition.

\begin{theorem}[Complexity of strong phased graph state simulation]\label{thm:strong-sim-pgs}
Given $k$ binary strings $x^{(1)},\dots,x^{(k)}\in\F_2^n$, $A\in\J_n$ and a tree decomposition of the graph corresponding to $A$ with width $\tau$, the amplitudes formula \cref{eq:amp-formula} in \cref{thm:amp-formula} can be evaluated for all $x=x^{(1)},\dots,x^{(k)}$ in $O(n\tau^{\omega-1}+kn\tau^{\omega-2})$ time.
\end{theorem}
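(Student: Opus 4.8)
The plan is to evaluate the formula in \cref{eq:amp-formula} by running the fast LDL machinery once and then pushing all $k$ inputs through a small number of batched matrix-vector products. Writing $r=\rank(\omega_1(A))$ to distinguish the rank from the number of inputs $k$, I would first assume without loss of generality that $A\in\W_n$, since a symmetric permutation changes neither the underlying graph nor its given tree decomposition (one only permutes the inputs $x$ accordingly). I then invoke \cref{lem:gj-complexity}: in $O(n\tau^{\omega-1})$ time this yields the implicit reduced decomposition $\omega_1(A)=\omega_1(LDL^T)$ with $L=\begin{bmatrix}L_1\\L_2\end{bmatrix}$, the block structure of $D$, the vectors $v$ and $w$ of \cref{thm:amp-formula}, and fast batched application of $L_1^{-1}$ and $L_2L_1^{-1}$. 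From $d(D)$ I also recover $\delta(A_{11})$ and hence $u=v\oplus\delta(A_{11})$ as in \cref{lem:gj-pgs}, all in $O(n)$ additional time.

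For each input $x$ two things must be computed: the support test $x\oplus w\in\spn(\omega_1(A))$ and, when it holds, the $\Z_4$-valued exponent $Q=(x_1+v)^T(B_{11}+2D(u))(x_1+v)\bmod 4$. For the support test I use $\spn(\omega_1(A))=\spn(L)$ (valid since $D$ is full rank $r$) together with the projector $\begin{bmatrix}I&0\\L_2L_1^{-1}&0\end{bmatrix}$ onto $\spn(L)$ from \cref{lem:affsamp}: the vector $z=x\oplus w$ lies in the span iff it is fixed by this projector, which reduces to a single application of $L_2L_1^{-1}$ to the leading part of $z$ followed by an equality check, outputting amplitude $0$ on failure.

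The heart of the argument is evaluating $Q$ without ever forming $B_{11}$ explicitly. Setting $t=(x_1+v)\bmod 2\in\{0,1\}^r$ and using $t_i^2=t_i$, I have $Q\equiv t^TB_{11}t+2u^Tt\pmod 4$. The key observation is that for any symmetric integer matrix $C$ with $B=\Omega_{1,2}(C)$ one has $t^TBt\equiv t^TCt\pmod 4$: the diagonals agree modulo $4$ by definition of $\Omega_{1,2}$, while each off-diagonal of $B$ agrees with that of $C$ modulo $2$ and hence, appearing doubled in the quadratic form, modulo $4$. Applying this with $C=L_1^{-T}DL_1^{-1}$, for which $B_{11}=\Omega_{1,2}(C)$ by \cref{lem:gj_cor} (\cref{eq:gjl2}), gives $t^TB_{11}t\equiv s^TDs\pmod 4$ with $s=L_1^{-1}t$ over $\Z$. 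Since $s^TDs\bmod 4$ depends only on $s\bmod 2$—each diagonal block equal to $1$ contributes $s_i$ and each $2\times2$ antidiagonal block contributes $2s_is_j$—I may take $s=L_1^{-1}t$ computed over $\F_2$ via one application of $L_1^{-1}$, then read off $Q$ in $O(r)$ time from the block structure of $D$ together with $2u^Tt$. The amplitude is $2^{-r/2}i^{-Q}$.

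Finally I batch. The only superlinear per-input operations are the applications of $L_2L_1^{-1}$ (support test) and $L_1^{-1}$ (exponent), both applied to the $k$ leading vectors collected into a matrix in $\F_2^{r\times k}$. By \cref{lem:gj-complexity} each such product costs $O(n\tau^{\omega-1})$ per block of $\tau$ columns, so $\lceil k/\tau\rceil$ blocks give $O(kn\tau^{\omega-2})$ in total (absorbed into $O(n\tau^{\omega-1})$ when $k<\tau$), while the remaining $O(kn)$ of per-input bookkeeping is dominated since $\omega>2$ forces $\tau^{\omega-2}\ge1$. Adding the one-time $O(n\tau^{\omega-1})$ for the LDL yields the claimed bound. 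I expect the main obstacle to be precisely the third paragraph: justifying that the modulo-$4$ quadratic form, which genuinely carries the second-bit phase data of $B_{11}$, can nevertheless be recovered purely from the $\F_2$ quantity $L_1^{-1}t$ and the combinatorial block structure of $D$, so that no explicit $\Z_4$ arithmetic on $B_{11}$ (which we never form) is required.
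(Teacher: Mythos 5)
Your proposal is correct, and its skeleton coincides with the paper's proof: assume $A\in\W_n$ WLOG, run the implicit fast LDL of \cref{lem:gj-complexity} once in $O(n\tau^{\omega-1})$, test $x\oplus w\in\spn(\omega_1(A))$ by a batched triangular solve (your projector test with $L_2L_1^{-1}$ is computationally identical to the paper's solve of $Ly=x\oplus w$), and evaluate the exponent through batched applications of $L_1^{-1}$ in $\lceil k/\tau\rceil$ blocks of $\tau$ columns, giving $O(kn\tau^{\omega-2})$. Where you genuinely diverge is the mod-$4$ exponent: the paper uses the splitting $x_1^TB_{11}x_1=x_1^T\omega_1(L_1^{-T}DL_1^{-1})x_1+2x_1^Td(\omega_2(L_1^{-T}DL_1^{-1}))$ (\cref{eq:strong-exponent-eval}), which requires the precomputed second-bit diagonal vector $\omega_2(d(L_1^{-T}DL_1^{-1}))$ — the quantity whose computation forces the extra bookkeeping machinery of \cref{lem:tinv} inside \cref{lem:gj-complexity}. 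You instead prove the congruence $t^T\Omega_{1,2}(C)t\equiv t^TCt\pmod 4$ for binary $t$ and symmetric integer $C$ (sound: diagonals of $\Omega_{1,2}(C)$ agree with $C$ mod $4$, off-diagonals mod $2$ and appear doubled), apply it with $C=L_1^{-T}DL_1^{-1}$ from \cref{eq:gjl2}, and observe that $s^TDs\bmod 4$ with $s=L_1^{-1}t$ depends only on $s\bmod 2$ because each $1$-block contributes $s_i^2\equiv(s_i\bmod 2)\pmod 4$ and each antidiagonal block contributes $2s_is_j$; this lets you read off the exponent in $O(r)$ per input from the block structure of $D$ after a single $\F_2$ solve. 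This is a mild but real simplification — it bypasses the diagonal second-bit precomputation entirely for this theorem (it is still needed elsewhere, e.g.\ for $w$), and it makes explicit the mod-$4$-from-mod-$2$ recovery that the paper's proof leaves compressed into the phrase ``can be computed along with the LDL.'' Your handling of the remaining details — replacing $x_1+v$ by its mod-$2$ reduction (harmless since even entries annihilate their quadratic-form contributions mod $4$, exactly as in the proof of \cref{thm:amp-formula}), $u=v\oplus d(D)$, and the absorption of $O(kn)$ bookkeeping using $\tau^{\omega-2}\ge 1$ — all checks out.
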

\begin{proof}
Assume WLOG that $A\in\W_n$. By \cref{lem:gj-complexity}, we can compute an implicit LDL decomposition of $\omega_1(A)$ as well as the vector $v$ and $w$ as defined in \cref{thm:amp-formula} in $O(n\tau^{\omega-1})$ time. Then we check the complexity of evaluating $\cref{eq:amp-formula}$, which is restated below,
\begin{equation}
\bra{\mat{x}}\mat{H}^{\otimes n}\ket{\mat{A}}
=\begin{cases}
2^{-k/2}
i^{-(\mat{x}_1+\mat{v})^T(\mat{B}_{11}+2D(u))(\mat{x}_1+\mat{v})} & \mat{x}\oplus\mat{w} \in\spn(\omega_1(\mat{A})) \\
0 & \text{otherwise}.
\end{cases}\label{eq:amp-formula-2}
\end{equation}

Then $x\oplus w\in\spn(\omega_1(A))$ could be checked by solving the systems $Ly=x\oplus w$ for $x=x^{(1)},\dots,x^{(k)}$. This can be done in $O(kn\tau^{\omega-2})$ time using $k/\tau$ batches of fast matrix multiplications by \cref{lem:gj-complexity}. Further, since $u=v\oplus d(D)$, it remains to compute $x_1^TB_{11}x_1$ in the amplitudes formula \cref{eq:amp-formula-2} for all $x=x^{(1)},\dots,x^{(k)}$. Note that
\begin{align}
x_1^TB_{11}x_1
&=x_1^T\Omega_{1,2}(L_1^{-T}DL_1^{-1})x_1\\
&=x_1^T\omega_1(L_1^{-T}DL_1^{-1})x_1+2x_1^Td(\omega_2(L_1^{-T}DL_1^{-1})).\label{eq:strong-exponent-eval}
\end{align}
Since $d(\omega_2(L_1^{-T}DL_1^{-1}))$ can be computed along with the LDL, the remaining cost is dominated by the first term, which can also be computed in $O(kn\tau^{\omega-2})$ time using $k/\tau$ batches of fast matrix multiplications by \cref{lem:gj-complexity}.
\end{proof}

In applications such as circuit simulation, the $x$'s for which to compute amplitudes often share common parts. In this case, it can be more efficient to precompute the part of the amplitudes formula corresponding the shared parts of $x$ to avoid repetitive work in a similar way as in weak simulation. This is detailed in the following theorem.
\begin{theorem}\label{thm:strong-sim-pgs-fixed-bits}
Given $k$ binary strings $x^{(1)},\dots,x^{(k)}\in \mathcal{X}_{S,y}$ for some $S\subseteq[n]$ and $y\in\F_2^n$ such that $y_i=0$ for  $i\notin S$, $A\in\J_n$ and a tree decomposition of the graph corresponding to $A$ with width $\tau$, the amplitudes formula \cref{eq:amp-formula} in \cref{thm:amp-formula} can be evaluated for all $x=x^{(1)},\dots,x^{(k)}$ in $O(n\tau^{\omega-1}+\ell n\tau^{\omega-2}+k\ell^{\omega-1})$ time for $\ell=n-\abs{S}$.
\end{theorem}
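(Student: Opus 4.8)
The plan is to follow the template of \cref{thm:strong-sim-pgs}, but to replace its per-sample cost (batched there to $kn\tau^{\omega-2}$) with an amortized $O(\ell^{\omega-1})$ per-sample cost after an $O(\ell n\tau^{\omega-2})$ preprocessing, exactly mirroring how the $\ell<n\tau$ branch of \cref{thm:weak-sim-pgs} confines all work to the $\ell$ free coordinates. Assume WLOG $\mat A\in\W_n$, let $r=\rank(\omega_1(\mat A))$ and $T=[n]\setminus S$ with $\abs{T}=\ell$, and split $T=T_1\cup T_2$ with $T_1=T\cap[r]$. By \cref{lem:gj-complexity} I first compute, in $O(n\tau^{\omega-1})$ time, the implicit reduced LDL $\omega_1(\mat A)=\omega_1(\mat L\mat D\mat L^T)$ (leading block $\mat L_1\in\F_2^{r\times r}$), the vectors $\mat v,\mat w$, and $d(\omega_2(\mat L_1^{-T}\mat D\mat L_1^{-1}))$. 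Since each $\mat x^{(j)}$ agrees with $\mat y$ on $S$ and $y_i=0$ off $S$, I write $\mat x^{(j)}=\mat y\oplus\bm\delta^{(j)}$ with $\bm\delta^{(j)}$ supported on $T$, so the only data varying across samples is the $\ell$-vector $\hat{\bm\delta}^{(j)}=\bm\delta^{(j)}|_T$; I collect these as columns of $\hat{\bm\Delta}\in\F_2^{\ell\times k}$.

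For the support test $\mat x\oplus\mat w\in\spn(\omega_1(\mat A))=\spn(\mat L)$, I use that this holds iff $(\mat I-\mat P)(\mat x\oplus\mat w)=0$ for the projector $\mat P=\begin{bmatrix}\mat I&0\\ \mat L_2\mat L_1^{-1}&0\end{bmatrix}$ of \cref{lem:affsamp}, so that $\mat I-\mat P=\begin{bmatrix}0&0\\ \mat L_2\mat L_1^{-1}&\mat I\end{bmatrix}$ over $\F_2$. Writing $\mat x\oplus\mat w=(\mat y\oplus\mat w)\oplus\bm\delta$ and using linearity, the residual splits into a fixed part $(\mat I-\mat P)(\mat y\oplus\mat w)$ (computed once in $O(n\tau^{\omega-1})$) plus $(\mat I-\mat P)\bm\delta$, a linear image of $\hat{\bm\delta}$ of rank at most $\ell$. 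The plan is to compress this condition to an equivalent system $\hat{\mat R}\hat{\bm\delta}=\hat{\mat r}$ with $\hat{\mat R}$ having at most $\ell$ rows: a one-time consistency check of the fixed part against the row space decides whether the whole instance is infeasible (all amplitudes zero), and otherwise reduces membership to the pivot rows. I then test all samples simultaneously via the product $\hat{\mat R}\hat{\bm\Delta}$, done in $k/\ell$ batches of fast $\ell\times\ell$ multiplication for $O(k\ell^{\omega-1})$ total.

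For the phase, I decompose the exponent $(\mat x_1+\mat v)^T(\mat B_{11}+2D(\mat u))(\mat x_1+\mat v)$ over $\Z_4$ exactly as in the proof of \cref{thm:strong-sim-pgs} (a $\Z_4$-linear term from $d(\mat B_{11})$ and $\mat u$, plus the off-diagonal part handled through $\mat z=\mat L_1^{-1}\mat x_1$ and $\mat D$ as in \cref{eq:strong-exponent-eval}). Substituting $\mat x_1=\mat y_1\oplus\bm\delta_1$ gives $\mat z=\mat z_0\oplus\mat G\hat{\bm\delta}$ with $\mat z_0=\mat L_1^{-1}\mat y_1$ and $\mat G$ the submatrix of $\mat L_1^{-1}$ with columns indexed by $T_1$; after folding the fixed linear and constant contributions into its diagonal, the exponent becomes a single $\Z_4$-quadratic form $\hat{\bm\delta}^T\tilde{\mat C}\hat{\bm\delta}$ in the $\ell$ free bits. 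Given the $\ell\times\ell$ coefficient $\tilde{\mat C}$, I batch-evaluate over all samples by computing $\tilde{\mat C}\hat{\bm\Delta}$ in $O(k\ell^{\omega-1})$ and reading $\hat{\bm\delta}^{(j)\,T}(\tilde{\mat C}\hat{\bm\delta}^{(j)})$ off as the $j$th column-sum of the entrywise product $\hat{\bm\Delta}\ast(\tilde{\mat C}\hat{\bm\Delta})$ in $O(k\ell)$. Combining the support indicator with the phase yields each amplitude, and all per-sample work lands in $O(k\ell^{\omega-1})$.

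The hard part will be producing the compressed objects $\hat{\mat R}$ and $\tilde{\mat C}$ — equivalently, a row basis of the residual map and the $T_1\times T_1$ block of $\mat B_{11}=\Omega_{1,2}(\mat L_1^{-T}\mat D\mat L_1^{-1})$ — within the $O(\ell n\tau^{\omega-2})$ budget. Forming them naively, e.g. via the dense product $\mat G^T\mat D\mat G$ or by Gaussian elimination on a tall $n\times\ell$ matrix, costs $\Theta(n\ell^{\omega-1})$, which exceeds the budget precisely once $\ell>\tau$, so this route fails in general. The decisive step is therefore to extract these $\le\ell$-dimensional reduced representations directly from the tree-structured implicit LDL: applying $\mat L_1^{-1}$ and $\mat L_2\mat L_1^{-1}$ to only the $\ell$ relevant columns in $\ell/\tau$ batches via \cref{lem:gj-complexity}, and recovering the needed inverse blocks through the bag-local generalized-inverse construction of \cref{lem:tinv}, so that the preprocessing stays in $O(n\tau^{\omega-1}+\ell n\tau^{\omega-2})$ while each amplitude is obtained in amortized $O(\ell^{\omega-1})$ time, giving the claimed bound.
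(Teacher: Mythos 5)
Your overall architecture is the same as the paper's proof (split $x^{(j)}=y\oplus\delta^{(j)}$ with $\delta^{(j)}$ supported on the $\ell$ free coordinates, compress both the support test and the exponent of \cref{eq:amp-formula} to $\ell$-dimensional objects, then batch over samples in $k/\ell$ groups of fast $\ell\times\ell$ multiplications), but there is a genuine gap precisely at the step you yourself label ``the hard part'': you never give a valid construction of $\hat{R}$ and $\tilde{C}$ within the $O(n\tau^{\omega-1}+\ell n\tau^{\omega-2})$ preprocessing budget. The appeal to \cref{lem:tinv} does not close it: that lemma only produces blocks of the generalized inverse restricted to individual bags of the tree decomposition, i.e.\ entries $(i,j)$ with $i$ and $j$ in a common bag, whereas the $T_1\times T_1$ block of $L_1^{-T}DL_1^{-1}$ you need has index pairs spread across arbitrarily many bags. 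Likewise, your route to $\hat R$ --- a row basis of the residual map $(I-P)$ restricted to the $\ell$ free columns --- requires pivot-finding on an $(n-r)\times\ell$ matrix, which costs $\Theta(n\ell^{\omega-1})$ and exceeds the budget once $\ell>\tau$, exactly the failure mode you flagged for the dense $G^TDG$ product.

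The missing idea, and what the paper's proof uses, is that both compressed objects are \emph{row restrictions of batched implicit solves}, so no Gram-type product and no tall elimination is ever needed. For the exponent: since $G=L_1^{-1}E_{T_1}$ (which you already form in $\ell/\tau$ batches), we have $G^TDG=E_{T_1}^T\bigl(L_1^{-T}(DG)\bigr)$, so one further batched application of $L_1^{-T}$ to $DG$ --- again $O(\ell n\tau^{\omega-2})$ by \cref{lem:gj-complexity} --- followed by selecting the $T_1$ rows yields $\tilde C$ modulo $2$ directly; this suffices for the mod-$4$ evaluation of \cref{eq:strong-exponent-eval}, because off-diagonal entries enter the exponent only with a factor $2$, the diagonal of $\omega_1(L_1^{-T}DL_1^{-1})$ is a $0/1$ vector determined by its mod-$2$ value, and $\omega_2(d(L_1^{-T}DL_1^{-1}))$ is supplied alongside the LDL by \cref{lem:gj-complexity}. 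For the support test, the paper sidesteps any row compression $\hat R$: after the one-time feasibility check of \cref{lem:affsamp}, every varying vector to be tested is supported on the free coordinates, where membership in $\spn(L)$ is equivalent to membership in $\spn(L)\cap\spn(\bar B)$, and that intersection has the explicit, effectively $\ell\times\ell$ basis $MLN$ already constructed in the $\ell<n\tau$ branch of \cref{thm:weak-sim-pgs} within $O(\ell n\tau^{\omega-2})$; testing the $k$ candidates against this basis in $k/\ell$ batches costs $O(k\ell^{\omega-1})$. With these two substitutions your argument goes through and coincides with the paper's proof; as written, however, the claimed bound is not established.
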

\begin{proof}
To check the zero condition of \cref{eq:amp-formula}, we follow the weak simulation algorithm in \cref{sec:weak-sim-pgs}.
Firstly, we compute an implicit reduced LDL decomposition of $\omega_1(A)$ as well as $w,u,v$ as defined in \cref{thm:amp-formula}.
Then we can check $\mathcal{Y}_{L,S,y}=\emptyset$ in the same way as in weak simulation.
If so, we return zeros for all amplitudes.
Otherwise, we can explicitly form the basis of $\spn(L)\cap\spn(\bar B)$. This step takes $O(n\tau^{\omega-1}+\ell n\tau^{\omega-2})$ time with the same analysis for weak simulation as in the proof of \cref{thm:weak-sim-pgs} based on the implicit LDL given by \cref{lem:gj-complexity}.

By \cref{lem:affsamp}, we can check $x\oplus w\oplus y\oplus x^\perp\in \spn(L)\cap\spn(\bar B)\iff\mel{x}{H^{\otimes n}}{A}=0$ for $x=x^{(1)},\dots,x^{(k)}$. This can be done by $k/\ell$ batches of fast matrix multiplication using the explicit basis computed in the previous step in $O(kl^{\omega-1})$ time. It then remains to compute the exponent in the amplitude formula \cref{eq:amp-formula}. This can be done by evaluating \cref{eq:strong-exponent-eval} in \cref{thm:strong-sim-pgs}, except that at most $\ell$ elements of $x_1$ varies for all $x_1=x_1^{(1)},\dots,x_1^{(k)}$. Therefore, the computation is dominated by multiplications of $\ell\times\ell$ matrices and takes $O(kl^{\omega-1})$ time with batches of fast matrix multiplication. 
\end{proof}

\section{Applications in simulation}
\label{sec:apps-sim}
In this section, we apply our results to several simulation problems. In \cref{sec:gs-sim}, we show that our algorithm  solves the graph state simulation problem and matches the complexity of the state of art~\cite{gosset2024fast}. Then we apply our algorithm to Clifford circuit simulation in \cref{sec:clifford-sim} and Clifford+T circuit simulation in \cref{sec:clifford-t-sim} and compare the complexities with known techniques. 

\subsection{Graph state simulation}\label{sec:gs-sim}
The graph state simulation problem~\cite{gosset2024fast} is of independent interest due to the wide application of graph states, which is defined as follows.
\begin{definition}[Graph state simulation~\cite{gosset2024fast,kerzner2021clifford}]\label{def:gs-sim}
Let $G=(V,E)$ be a simple graph and each vertex $i\in V$ is associated with a unitary $U_i\in\set{H, H\check{Z}, I}$. Let $U_\text{base}=\bigotimes_{i\in V}U_i$. The strong simulation of $G$ takes any $x\in\F_2^{\abs{V}}$ as input, and outputs
\[
\mel{x}{U_\text{base}}{G}.
\]
The weak simulation takes $S\subseteq V$ and $y\in\F_2^S$ as input, and samples $x\in\mathcal{X}_{S,y}$ from the distribution,
\[
p(x)\propto\abs{\mel{x}{U_\text{base}}{G}}^2.
\]
\end{definition}
Note that applying $H$, $H\check{Z}$, or $I$ gate to a state and then measuring it in computational basis can be viewed as measuring the state in Pauli X, Y or Z basis respectively.

We claim that phased graph state simulation problem as defined in \cref{def:pgs-sim} is equivalent to \cref{def:gs-sim} because all vertices measured in Z basis can be trimmed in linear time using the following property of graph states~\cite{raussendorf2003measurement}.

\begin{lemma}\label{lem:z-measure}
Let $G=(V,E)$ be a graph, $v\in V$ and $b\in\F_2$. Then
\begin{equation*}
\actson{\bra{b}}{v}
\ket{G}=\frac{1}{\sqrt2}\prod_{u\in N(v)}\left(\actson{\mat{Z}}{u}\right)^b\ket{G\setminus\set{v}}.
\end{equation*}
\end{lemma}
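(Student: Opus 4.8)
The plan is to expand $\ket{G}$ using its definition (\cref{def:graph-state}) and then track how the projector $\actson{\bra{b}}{v}$ passes through the $\mat{U}_{CZ}$ gates incident to $v$. The key algebraic fact is that $\mat{U}_{CZ}$ is diagonal in the computational basis, so all CZ gates commute and projecting one of its qubits onto $\ket{b}$ collapses the gate to a power of $\mat{Z}$ on the other qubit. Concretely, writing $\mat{U}_{CZ}=\sum_{c\in\F_2}\mat{Z}^c\otimes\ket{c}\bra{c}$ (with the second tensor factor being the qubit that will be projected), one checks directly that $\actson{\bra{b}}{v}\actson{\mat{U}_{CZ}}{u,v}=(\actson{\mat{Z}}{u})^b\,\actson{\bra{b}}{v}$.

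First I would partition the edges of $G$ into those incident to $v$ and those not, and use commutativity of the CZ gates to write
\[\ket{G}=\Bigl(\prod_{u\in N(v)}\actson{\mat{U}_{CZ}}{u,v}\Bigr)\Bigl(\prod_{\substack{ij\in E\\ i,j\neq v}}\actson{\mat{U}_{CZ}}{i,j}\Bigr)\ket{+}^{\otimes\abs{V}}.\]
Since the right-hand product acts trivially on qubit $v$ and realizes the graph state of the subgraph induced on the remaining vertices, the second factor applied to $\ket{+}^{\otimes\abs{V}}$ equals $\ket{+}_v\otimes\ket{G\setminus\set{v}}$.

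Next I would apply $\actson{\bra{b}}{v}$ and push it through the CZ gates incident to $v$. Because these gates all share the qubit $v$, the product $\prod_{u\in N(v)}\actson{\mat{U}_{CZ}}{u,v}$ equals $\sum_{c\in\F_2}\bigl(\prod_{u\in N(v)}(\actson{\mat{Z}}{u})^c\bigr)\otimes\ket{c}\bra{c}_v$, and projecting the $v$ qubit onto $\bra{b}$ selects the $c=b$ term, yielding the operator $\prod_{u\in N(v)}(\actson{\mat{Z}}{u})^b$ together with $\bra{b}_v$. Applying this to $\ket{+}_v\otimes\ket{G\setminus\set{v}}$ and using $\braket{b}{+}=1/\sqrt2$ for both $b\in\F_2$ then gives exactly the claimed identity.

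The only point requiring care---and the main, albeit minor, obstacle---is the multi-edge step: one must verify that taking the product over all neighbors of $v$ of the single-edge identity does not produce cross terms. This holds because the rank-one projectors $\ket{c}\bra{c}_v$ acting on the common qubit $v$ annihilate one another unless their indices agree, so only a single power $c$ survives in the sum above; the remainder of the argument is routine bookkeeping.
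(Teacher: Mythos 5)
Your proposal is correct and follows essentially the same route as the paper's proof: both partition the $\mat{U}_{CZ}$ gates into those incident to $v$ and the rest, identify the second factor with $\ket{+}_v\otimes\ket{G\setminus\set{v}}$, and exploit the diagonal (controlled) structure of CZ to reduce the projection onto $\bra{b}_v$ to powers of $\mat{Z}$ on the neighbors with the factor $\braket{b}{+}=1/\sqrt2$. The only cosmetic difference is that you phrase the final step in operator form via $\mat{U}_{CZ}=\sum_{c\in\F_2}\mat{Z}^c\otimes\ket{c}\bra{c}$, whereas the paper expands the state $\ket{+}_v=\frac{1}{\sqrt2}(\ket{0}+\ket{1})$ and evaluates each branch --- algebraically the same argument.
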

\begin{proof}
Assume WLOG that $v$ is the first vertex in $G$. Then by definition,
\begin{align*}
    \ket{G}&=\prod_{ij\in E}\actson{\mat{U}_{CZ}}{i,j}\ket{+}^{\otimes\abs{V}}\\
    &=\prod_{u\in N(v)}\actson{\mat{U}_{CZ}}{u,v}
    \prod_{ij\in E,i\neq v,j\neq v}\actson{\mat{U}_{CZ}}{i,j}
    \ket{+}^{\otimes\abs{V}}\\
    &=\prod_{u\in N(v)}\actson{\mat{U}_{CZ}}{u,v}\ket{+}\otimes\ket{G\setminus\set{v}},
\end{align*}
since $\actson{\mat{U}_{CZ}}{i,j}$ commutes with each other.
Note that $\ket{+}=\frac{1}{\sqrt2}(\ket{0}+\ket{1})$, and $\mat{U}_{CZ}$ does nothing if the first qubit is $\ket{0}$ and applies $\mat{Z}$ to the second qubit if the first qubit is $\ket{1}$. So, the equation above can be written as
\begin{align*}
    \ket{G}&=\frac{1}{\sqrt2}\ket{0}\otimes\ket{G\setminus\set{v}}+
    \frac{1}{\sqrt2}\prod_{u\in N(v)}\actson{\mat{Z}}{u}\ket{1}\otimes\ket{G\setminus\set{v}}.
\end{align*}
Applying $\actson{\bra{b}}{v}$ to this equation completes the proof.
\end{proof}
 
Then we have the following theorem that matches the $O(n\tau^{\omega-1})$ time complexity of single-sample weak graph state state simulation in~\cite{gosset2024fast} and improve the the $O(n\tau^2)$ time complexity of strong graph state simulation (with phase) in \cite{kerzner2021clifford}.
\begin{theorem}\label{thm:gs-sim-complexity}
Let $G$ be a graph with $n$ vertices and a tree decomposition of $G$ with width $\tau$ is given.
Then the strong/weak simulation of $\ket{G}$ for $k\in\N$ amplitudes/samples could be solved in $O(n\tau^{\omega-1}+kn\tau^{\omega-2})$ time.
\end{theorem}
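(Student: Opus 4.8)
The plan is to reduce the graph state simulation problem of \cref{def:gs-sim} to the phased graph state simulation problem of \cref{def:pgs-sim} by trimming exactly the vertices measured in the $Z$ basis (those with $U_i=I$), after which \cref{thm:strong-sim-pgs} and \cref{thm:weak-sim-pgs} apply directly. First I would split $V$ into the set $W$ of $Z$-measured vertices and the remaining vertices, on which $U_i\in\set{H,H\check{Z}}$. On the latter the operator $U_\text{base}$ restricts to $\mat{H}^{\otimes n'}\check{Z}^{d}$ with $d_i=1$ precisely when $U_i=H\check{Z}$, which is exactly $\mat{H}^{\otimes n'}\ket{\mat{A}}$ for $O(\mat{A})$ the adjacency matrix of the induced subgraph $G\setminus W$ and $d(\mat{A})=d$. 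Thus, once $W$ is removed, the remaining task is an instance of \cref{def:pgs-sim}.

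To eliminate $W$ I would apply \cref{lem:z-measure} iteratively: measuring $v\in W$ with outcome $b$ deletes $v$, contributes a factor $\frac{1}{\sqrt2}$, and applies $Z^b$ to each neighbor. The key algebraic observation is that such a correction on an $X$/$Y$-measured neighbor $u$ merely flips its target bit, since $\bra{x_u}U_u Z^b=\bra{x_u\oplus b}U_u$ for $U_u\in\set{H,H\check{Z}}$, whereas on another $Z$-measured vertex it contributes only a sign. Hence all corrections can be folded into the target string (and into $S,\mat{y}$) of the resulting phased graph state instance, up to an easily computed global sign and normalization, leaving the underlying graph equal to $G\setminus W$. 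This trimming runs in $O(n+\abs{E})$ time and does not increase treewidth: deleting the vertices of $W$ from every bag of the given tree decomposition of $G$ yields a tree decomposition of $G\setminus W$ of width at most $\tau$, so the phased graph state routines run with width $\tau$ on $n'\le n$ vertices.

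For strong simulation the values at the $W$-vertices are part of each input $x^{(j)}$, so the induced bit flips are determined; I would feed the corrected targets into \cref{thm:strong-sim-pgs}, evaluating the $k$ amplitudes in $O(n\tau^{\omega-1}+kn\tau^{\omega-2})$ time, with the per-input signs and factors costing only $O(kn)$ extra. For weak simulation I would first argue, again through \cref{lem:z-measure}, that each $Z$-measurement outcome is uniform and independent of the others, because the induced $Z$ corrections affect only phases and not the squared-magnitude probabilities; thus the $W$-vertices outside $S$ may be sampled by fair coin flips and those in $S$ are fixed by $\mat{y}$. Conditioned on these outcomes the remaining problem is weak phased graph state sampling, to which \cref{thm:weak-sim-pgs} applies; taking the first branch of its minimum yields $O(n\tau^{\omega-1}+kn\tau^{\omega-2})$, and the $O(kn)$ cost of the coin flips is absorbed.

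The commutation identities and the restriction of the tree decomposition to $G\setminus W$ are routine. The one point that demands care is the bookkeeping of the $Z$ corrections propagating across $W$, together with the verification that the $Z$-measurement outcomes are genuinely uniform and mutually independent; this is precisely what licenses sampling them separately in the weak case and folding them into the fixed targets in the strong case, and it is where I would spend most of the argument.
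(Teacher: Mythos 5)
Your strong-simulation half is correct and is essentially the paper's argument: both trim the $Z$-measured vertices with \cref{lem:z-measure} and invoke \cref{thm:strong-sim-pgs}; the paper folds the resulting $Z$ corrections into the diagonal $d\in\Z_4^{n-k}$ of the trimmed phased graph state, while you commute them into the bra via $\bra{x_u}U_uZ^b=\bra{x_u\oplus b}U_u$ (which is a correct identity, with no extra sign, since $HZ=XH$ and $\check Z$ commutes with $Z$). The $O(n+\abs{E})$ trimming cost and the observation that deleting $W$ from every bag preserves width $\le\tau$ are also fine.

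The weak-simulation half has a genuine gap. The claim that the $Z$ outcomes on $W\setminus S$ are uniform and independent ``because the induced $Z$ corrections affect only phases and not the squared-magnitude probabilities'' is false, and your own strong-case identity shows why: a $Z^b$ correction commuted through $U_u\in\set{H,H\check Z}$ becomes a \emph{bit flip on the outcome}, so it translates the support $w\oplus\spn(\omega_1(A'))$ of the trimmed state's outcome distribution (cf.\ \cref{thm:amp-formula}) --- a magnitude effect. The marginal of $x_W$ is uniform only \emph{before} conditioning on $\mathcal{X}_{S,y}$; once $S$ fixes bits of $X/Y$-measured vertices, the affine support couples $x_W$ to the remaining outcomes whenever $\omega_1(A')$ is rank deficient. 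Concretely, take $G=K_2$ with $U_1=I$, $U_2=H$: then $(I\otimes H)\ket{G}=\frac{1}{\sqrt2}(\ket{00}+\ket{11})$, and with $S=\set{2}$, $y_2=0$ the correct sampler must output $x_1=0$ with probability $1$, whereas your scheme flips a fair coin for $x_1$ and, on $x_1=1$, finds the conditional set empty --- so it either wrongly reports emptiness with probability $1/2$ or, if patched by rejection (which would in fact give the right distribution, since all fibers over compatible $x_W$ have equal size), degrades to expected exponential time, e.g.\ $2^{n/2}$ trials for $n/2$ disjoint such edges. The repair is to sample \emph{jointly}: the nonzero-probability outcomes $(x_W,x_R)$ form a single affine subspace of $\F_2^n$, spanned by the columns of $\begin{bmatrix}I\\ A_{RW}\end{bmatrix}$ (with $A_{RW}$ the biadjacency between the kept vertices and $W$, since $c(x_W)$ is linear in $x_W$) together with $\begin{bmatrix}0\\ \omega_1(A')\end{bmatrix}$, all with equal magnitude, so the task is one instance of constrained uniform affine sampling handled by \cref{lem:affsamp}/\cref{thm:weak-sim-pgs} on a matrix whose graph still has width $O(\tau)$. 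This is in keeping with the paper's proof, which keeps the corrections inside the phased-graph-state instance and hands the entire problem to \cref{thm:weak-sim-pgs} rather than pre-sampling the $Z$ outcomes by coin flips.
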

\begin{proof}
By \cref{lem:z-measure}, each vertex $v$ measured in $\mat{Z}$ basis could be preprocessed efficiently. In detail, suppose $G$ has $n$ vertices and $m$ edges, and $k$ vertices are measured in $Z$ basis. Then we can compute in $O(n+m)$ time that,
\begin{align}
\mel{x}{U_\text{base}}{G}=2^{-k/2}\mel{x'}{H^{\otimes(n-k)}\check{Z}^d}{G'},
\end{align}
where $d\in\Z_4^{n-k}$ and $G'$ is the graph after deleting the $k$ vertices measured in $Z$ basis. Hence, strong and weak simulation of graph state $\ket{G}$ are reduced to strong and weak simulation of the phased graph state $\ket{A}=\check{Z}^d\ket{G'}$ respectively. Then the theorem follows from \cref{thm:strong-sim-pgs} and \cref{thm:weak-sim-pgs}.
\end{proof}

Also, we note that our algorithm can be used to simulate planar Clifford circuits studied in \cite{gosset2024fast}. In particular, since our algorithm matches the complexity of graph state simulation in \cite{gosset2024fast}, it can be used in the proof of Theorem 7 in \cite{gosset2024fast} to 
get the same $O(n^{\omega/2}d^\omega)$ complexity for simulating $n$-qubit depth-$d$ planar Clifford circuit.

% via a linear time reduction. To see this, we first note that \cref{def:pgs-sim} can be viewed as evaluating amplitudes of or sampling from a phased graph state $\ket{\mat{A}}$ in Pauli $\mat{X}$ or $\mat{Y}$ basis for each vertex. By definition, Pauli $\mat{X}$ and $\mat{Y}$ basis are formed by eigenvectors of corresponding Pauli matrices, and the unitary $\mat{H}$ or $\mat{H}\rzm$ performs changes of basis from the computational basis to Pauli $\mat{X}$ or $\mat{Y}$ basis respectively. Since $\rzm$ could be absorbed into the diagonal of $\mat{A}$, measuring $\ket{\mat{A}}$ in Pauli $\mat{X}$ or $\mat{Y}$ basis is effectively measuring $\mat{H}^{\otimes n}\ket{\mat{A}}$ in computational basis.

% If we want to measure any vertex of $\ket{\mat{A}}$ in Pauli $\mat{Z}$ basis (i.e. computational basis), the following property of graph states

\subsection{Clifford circuit simulation}\label{sec:clifford-sim}
In this section, we give the time complexities of our Clifford circuit simulation algorithm and compare it with related works.
Combining \cref{lem:circ-as-pgs} with the phased graph state simulation algorithms in \cref{sec:pgs-sim} gives the following theorem.
\begin{theorem}\label{thm:clifford-complexity}
Let $\mat{U}$ be a Clifford circuit with $n$ qubits and $m$ gates, each of which acts on at most 2 qubits, for $m=\Omega(n)$, and a tree decomposition of the circuit with width $\tau$ is given. Then strong/weak simulation of $\mat{U}$ for $k\in\N$ amplitudes/samples could be solved in 
$O(m\tau^{\omega-1}+\min(km\tau^{\omega-2}, mn\tau^{\omega-2}+kn^{\omega-1}))$ time.
\end{theorem}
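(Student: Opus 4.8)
The plan is to chain the circuit-to-phased-graph-state reduction of \cref{lem:circ-as-pgs} with the phased graph state simulation bounds of \cref{thm:weak-sim-pgs}, \cref{thm:strong-sim-pgs}, and \cref{thm:strong-sim-pgs-fixed-bits}. First I would apply \cref{lem:circ-as-pgs} to $\mat{U}$ to obtain a tuple $(\mat{A},S,\mat{y})$ in $O(m+n)$ time, where $\mat{A}\in\J_N$ is a phased graph state on $N=O(m+n)$ vertices, $\abs{S}=N-n$, and $\mat{y}\in\F_2^S$, and where the supplied tree decomposition of the circuit of width $\tau$ is converted to a tree decomposition of the graph of $\mat{A}$ of width at most $\tau$. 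Since $m=\Omega(n)$, this gives $N=O(m)$, so every factor of $N$ appearing in the phased graph state bounds may be replaced by $m$.

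The crucial bookkeeping step is to identify the free-parameter count $\ell=N-\abs{S}$ of the phased graph state simulation problem. Because $\abs{S}=N-n$, the number of non-fixed coordinates is exactly $\ell=n$: the $N-n$ vertices indexed by $S$ are pinned to $\mat{y}$ by the reduction, while the remaining $n$ coordinates correspond to the output qubits of $\mat{U}$ that are free to vary (strong simulation) or to be sampled (weak simulation). This is the observation that makes the second branch of the forthcoming minimum scale with $n$ rather than with $N$.

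For weak simulation, I would use the second bullet of \cref{lem:circ-as-pgs} to reduce weak simulation of $\mat{U}$ to weak simulation of $\ket{\mat{A}}$ with input $S$ and $\mat{y}$, noting that by \cref{thm:amp-formula} the nonzero amplitudes are equal in magnitude so the target distribution is uniform over $\set*{x\in\mathcal{X}_{S,\mat{y}}}{x\oplus w\in\spn(\omega_1(\mat{A}))}$, exactly the set sampled by \cref{thm:weak-sim-pgs}. Substituting $N=O(m)$ and $\ell=n$ into its bound $O(N\tau^{\omega-1}+\min(kN\tau^{\omega-2},\ell N\tau^{\omega-2}+k\ell^{\omega-1}))$ yields the claimed $O(m\tau^{\omega-1}+\min(km\tau^{\omega-2},mn\tau^{\omega-2}+kn^{\omega-1}))$. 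For strong simulation, I would observe that the $k$ circuit inputs $\mat{x}^{(1)},\dots,\mat{x}^{(k)}\in\F_2^n$ all map under \cref{lem:circ-as-pgs} to strings $\tilde{\mat{x}}^{(j)}$ that agree with $\mat{y}$ on $S$, hence lie in $\mathcal{X}_{S,\mat{y}}$ with a common fixed part. This allows me to apply either \cref{thm:strong-sim-pgs}, which gives the $km\tau^{\omega-2}$ branch, or \cref{thm:strong-sim-pgs-fixed-bits} with $\ell=n$, which gives the $mn\tau^{\omega-2}+kn^{\omega-1}$ branch; taking whichever is smaller reproduces the stated minimum.

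The individual steps are routine once the reduction is in hand; the only points requiring care are the identification $\ell=n$ and the use of the hypothesis $m=\Omega(n)$, which is precisely what collapses $N=O(m+n)$ to $O(m)$ (without it the bound would retain an additive $n\tau^{\omega-1}$). I would also make sure to match the fixed-bit set $S$ and shift vector $\mat{y}$/$w$ of the reduction to the hypotheses of the phased-graph-state theorems, in particular normalizing $\mat{y}$ to vanish off $S$ so that \cref{thm:strong-sim-pgs-fixed-bits} applies verbatim. I do not anticipate any genuine obstacle beyond this bookkeeping, since the entire content has been isolated in the earlier results.
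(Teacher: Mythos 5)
Your proposal is correct and follows essentially the same route as the paper's proof: reduce via \cref{lem:circ-as-pgs}, use $m=\Omega(n)$ to get $N=O(m)$, identify $\ell=n$ with the output qubits, and invoke \cref{thm:weak-sim-pgs}, \cref{thm:strong-sim-pgs}, and \cref{thm:strong-sim-pgs-fixed-bits}. If anything, you are slightly more careful than the paper, which cites only \cref{thm:strong-sim-pgs} and \cref{thm:weak-sim-pgs} in its proof body even though the $mn\tau^{\omega-2}+kn^{\omega-1}$ branch for strong simulation requires \cref{thm:strong-sim-pgs-fixed-bits}, as you correctly note.
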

\begin{proof}
By \cref{lem:circ-as-pgs}, the circuit simulation problem can be reduced to a phased graph state simulation problem with $N=O(m+n)$ vertices. Further, a tree decomposition of the phased graph state of width at most $\tau$ can be obtained in linear time. 
Also, note that the number of unfixed output bits of this phased graph state, i.e. $\ell$ as defined in \cref{thm:weak-sim-pgs}, is exactly $n$ because it corresponds to the output qubits of the circuit. Then \cref{thm:strong-sim-pgs} and \cref{thm:weak-sim-pgs} gives the desired complexities stated in the theorem.
\end{proof}

For any circuit as defined in \cref{thm:clifford-complexity}, a tree decomposition of width at most $O(n)$ could always be obtained by dividing the gates into $m/n$ consecutive parts such that each part is only connected to its preceding and subsequent parts. Then we have the following corollary. 
\begin{corollary}\label{cor:general-clifford-sim-complexity}
Let $\mat{U}$ be a Clifford circuit with $n$ qubits and $m$ gates, each of which acts on at most 2 qubits, for $m=\Omega(n)$. Then strong/weak simulation of $\mat{U}$ for $k\in\N$ amplitudes/samples could be solved in $O(mn^{\omega-1}+kn^{\omega-1})$ time.
\end{corollary}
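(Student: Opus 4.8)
The plan is to obtain this corollary from \cref{thm:clifford-complexity} by producing an explicit tree decomposition of the circuit of width $\tau=O(n)$ and then substituting this value into the complexity bound. Since \cref{thm:clifford-complexity} already takes a tree decomposition as input and internally handles the full reduction to phased graph state simulation via \cref{lem:circ-as-pgs}, the only new ingredient required is the construction of a width-$O(n)$ decomposition; the statement then follows by algebraic simplification of the given bound.

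First I would construct the decomposition. View the circuit as a graph whose vertices are the $m$ gates, with an edge joining two gates that are consecutive along a common qubit wire. Partition the gates, in time order, into $\ceil{m/n}$ consecutive chunks $P_1,\dots,P_{\ceil{m/n}}$, each containing at most $n$ gates, and arrange the corresponding bags along a path. For each chunk $P_i$ I would form a bag $B_i$ consisting of the (at most $n$) gates in $P_i$ together with, for each of the $n$ qubits, the most recent gate acting on that qubit before chunk $P_i$; this boundary set contributes at most $n$ further vertices, so $\abs{B_i}\le 2n$ and the width is at most $2n-1=O(n)$. I would then verify the three tree-decomposition axioms: vertex coverage is immediate, since each gate lies in its own chunk's bag; edge coverage holds because two gates consecutive on a wire either share a chunk or else the earlier gate persists as a boundary vertex in the later gate's bag; and the connectivity axiom holds because each gate appears in a contiguous interval of bags, namely its own chunk followed by those chunks for which it remains the latest gate on one of its at most two qubits.

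With $\tau=O(n)$ in hand, I would substitute into $O(m\tau^{\omega-1}+\min(km\tau^{\omega-2},\,mn\tau^{\omega-2}+kn^{\omega-1}))$. The leading term becomes $O(mn^{\omega-1})$, and the second argument of the minimum becomes $mn\cdot n^{\omega-2}+kn^{\omega-1}=O(mn^{\omega-1}+kn^{\omega-1})$. Since the minimum never exceeds its second argument, the whole expression is bounded by $O(mn^{\omega-1}+kn^{\omega-1})$, as claimed. The only point requiring genuine care is the connectivity axiom in the decomposition: one must confirm that a qubit left idle across many chunks does not break the connected-subtree condition, which is precisely why each boundary gate is carried forward contiguously until superseded by a later gate on that wire. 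Everything else is a routine substitution.
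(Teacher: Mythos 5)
Your proposal is correct and follows essentially the same route as the paper, which justifies the corollary by the one-line observation that the gates can be divided into $m/n$ consecutive parts of $O(n)$ gates each, yielding a width-$O(n)$ path decomposition to plug into \cref{thm:clifford-complexity}. Your explicit boundary-vertex bookkeeping (carrying the latest gate on each wire forward until superseded) is a welcome refinement, since it correctly handles edges between gates many chunks apart on an idle qubit wire --- a subtlety the paper's sketch, which asserts each part is connected only to its adjacent parts, silently glosses over.
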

This improves the result of quadratic form expansion method~\cite{de2022fast}. For weak simulation with many samples, the tableau method with Pauli frame propagation implemented in the Stim library~\cite{gidney2021stim} would take $O(n^{\omega}+km)$, and our algorithm can perform better when $m\gg n^{\omega-1}$ and $k\gg m$.
When comparing with the quadratic-form-expansion methods, our algorithm is faster due to the usage of fast matrix multiplication.
\cref{tab:complexity-cmp} compares the time complexity for strong and weak simulation in detail.

\begin{table}[!ht]
\centering
\small
\begin{tabular}{|c|c|c|}
    \hline
     Methods & Strong simulation & Weak simulation \\
     \hline
     Tableau \cite{aaronson2004improved,gosset2024fast} & $O(mn+kn^\omega)$ & $O(mn+kn^\omega)$ \\
     \hline
     Tableau with Pauli frame~\cite{gidney2021stim} & - & $O(n^\omega+km)$ \\
     \hline
     Quadratic form expansions~\cite{de2022fast} & $O(mn^2+kn^{\omega-1})$ & $O(mn^2+kn^2)$ \\
     \hline
     This work & $O(mn^{\omega-1}+kn^{\omega-1})$ & $O(mn^{\omega-1}+kn^{\omega-1})$ \\
     \hline
\end{tabular}
\caption{Time complexity of strong simulation of Clifford circuits, where $m$ is number of gates, $n$ is number of qubits, $k$ is number of amplitudes/samples for strong/weak simulation. Note that: (i) Fast matrix multiplication in sub-cubic time is assumed when possible. (ii) Ref.~\cite{de2022fast} mainly focuses on weak simulation, and the complexity of quadratic form expansions for strong simulation in the table is obtained by evaluating \cref{eq:qfe-amplitudes} directly using fast matrix multiplication.
% (iii) We only list the complexity of computing one single amplitude for strong simulation. To compute all $2^n$ amplitudes, all of these algorithms take $O(n2^n)$ time assuming $m\ll 2^n$. In particular, the tableau representation allows us to project a random state into the intersection of +1 eigenspace of stabilizers by tensor contractions and additions~\cite{gidney2021stim}, and the quadratic form expansion and our method can iterate through all binary strings in the Gray code order with $O(n)$ additional cost for each amplitude.
}
\label{tab:complexity-cmp}
\end{table}

We note that the graph state simulation algorithm in \cite{gosset2024fast} and \cite{kerzner2021clifford} can also be applied to the graph given by our circuit-to-graph reduction in \cref{sec:circ-as-pgs} to solve single amplitude/sample strong/weak simulation of Clifford circuits. If the tree decomposition of width $O(n)$ is used, the runtime of that would be $O(mn^2)$ for strong simulation and $O(mn^{\omega-1})$ for weak simulation. In comparison, our algorithm matches the complexity for weak simulation and improves strong simulation in a similar way as in graph state simulation discussed in \cref{sec:gs-sim}.

\subsection{Clifford+T circuit simulation}
\label{sec:clifford-t-sim}
General quantum circuit can be represented by Clifford gates and $\mat{T}$ gates, where $\mat{T}=\mat{R}_z(\pi/4)$.
Our techniques can also be used to simulate Clifford+T circuits by applying the stabilizer rank decomposition of non-Clifford states~\cite{bravyi2019simulation}. Such decompositions are defined as follows,
\[
\ket{T}^{\otimes t}=\left(
\vcenter{\hbox{\begin{tikzpicture}
    \node[znode] (v) at (0,0) {${\scriptstyle \pi/4}$};
    \coordinate (v0) at (0.5,0);
    \draw (v) -- (v0);
\end{tikzpicture}}}
\right)^{\otimes t}
=\sum_{j=1}^r \alpha_j\ket{\phi_j},
\]
where $\alpha_j\in\C$ are coefficients and $\ket{\phi_j}$ are Clifford states. The smallest possible $r$ is called the stabilizer rank of $\ket{T}^{\otimes t}$ and it is known that $r\le 2^{0.3963t}$~\cite{qassim2021improved}.

In particular, consider a quantum circuit $\mat{U}$ on $n$ qubits with $m$ Clifford gates and $t$ T gates. 
By \cref{lem:circ-as-pgs}, we can represent $\mel{\mat{b}}{\mat{U}}{0^n}$ as a graph-like ZX diagram, and the phases of its nodes are multiples of $\pi/4$. For each node with phase $\pi/4$ or $3\pi/4$, we apply the following equality:
\begin{equation}\label{eq:clifford-T-dummy}
\vcenter{\hbox{\begin{tikzpicture}
    \node[znode] (v) at (0,0) {${\scriptstyle \pi/4}$};
    \node (t) at (-0.6,0.1) {$\vdots$};
    \coordinate (v0) at (-0.75,0.5);
    \coordinate (v1) at (-0.75,-0.5);
    \draw (v) to[out=120,in=0] (v0) (v) to[out=240,in=0] (v1);
\end{tikzpicture}}}=
\vcenter{\hbox{\begin{tikzpicture}
    \node[znode] (u) at (2,0) {${\scriptstyle \pi/4}$};
    \node[hnode] (h1) at (1.5,0) {};
    \node[znode] (w) at (1,0) {};
    \node[hnode] (h2) at (0.5,0) {};
    \node[znode] (v) at (0,0) {};
    \node (t) at (-0.6,0.1) {$\vdots$};
    \coordinate (v0) at (-0.75,0.5);
    \coordinate (v1) at (-0.75,-0.5);
    \draw (v) to[out=120,in=0] (v0) (v) to[out=240,in=0] (v1);
    \draw (u) -- (h1) (h1) -- (w) (w) -- (h2) (h2) -- (v);
\end{tikzpicture}}}\quad,\qquad
\vcenter{\hbox{\begin{tikzpicture}
    \node[znode] (v) at (0,0) {${\scriptstyle 3\pi/4}$};
    \node (t) at (-0.6,0.1) {$\vdots$};
    \coordinate (v0) at (-0.75,0.5);
    \coordinate (v1) at (-0.75,-0.5);
    \draw (v) to[out=120,in=0] (v0) (v) to[out=240,in=0] (v1);
\end{tikzpicture}}}=
\vcenter{\hbox{\begin{tikzpicture}
    \node[znode] (u) at (2,0) {${\scriptstyle \pi/4}$};
    \node[hnode] (h1) at (1.5,0) {};
    \node[znode] (w) at (1,0) {};
    \node[hnode] (h2) at (0.5,0) {};
    \node[znode] (v) at (0,0) {${\scriptstyle \pi/2}$};
    \node (t) at (-0.6,0.1) {$\vdots$};
    \coordinate (v0) at (-0.75,0.5);
    \coordinate (v1) at (-0.75,-0.5);
    \draw (v) to[out=120,in=0] (v0) (v) to[out=240,in=0] (v1);
    \draw (u) -- (h1) (h1) -- (w) (w) -- (h2) (h2) -- (v);
\end{tikzpicture}}}\quad.
\end{equation}
In this way, the resulting ZX diagram can be divided into two parts: one contains $O(n+m)$ nodes with phase being mulitples of $\pi/2$, and the other contains the new nodes introduced by \cref{eq:clifford-T-dummy}, as depicted below.
\begin{equation}\label{eq:clifford-T-amplitude-diagram}
\mel{\mat{b}}{\mat{U}}{0^n}=
\vcenter{\hbox{\begin{tikzpicture}
    \draw (0,0) rectangle (1,2);
    \draw[dashed] (1.65,0) rectangle (3.65,2);
    \coordinate (v1) at (1,0.3) {};
    \coordinate (v2) at (1,1.7) {};
    \node[hnode] (h11) at (1.4,0.3) {};
    \node[hnode] (h12) at (2.6,0.3) {};
    \node[hnode] (h21) at (1.4,1.7) {};
    \node[hnode] (h22) at (2.6,1.7) {};
    \node[znode] (d1) at (2.0,0.3) {};
    \node[znode] (d2) at (2.0,1.7) {};
    \node (dots1) at (2.0,1.1) {$\vdots$};
    \node[znode] (z1) at (3.25,0.3) {${\scriptstyle \pi/4}$};
    \node[znode] (z2) at (3.25,1.7) {${\scriptstyle \pi/4}$};
    \node (dots2) at (3.25,1.1) {$\vdots$};
    \node (label1) at (0.5,1) {$\kappa$};
    \draw (v1) -- (h11) (h11) -- (d1) (d1) -- (h12) (h12) -- (z1);
    \draw (v2) -- (h21) (h21) -- (d2) (d2) -- (h22) (h22) -- (z2);
\end{tikzpicture}}}
\end{equation}
We note that the first part of the diagram, labeled by $\kappa$, inherits the graph structure of $\mat{U}$, while the second part the diagram represents $\ket{T}^{\otimes t}$ up to local Clifford gates.
In this way, non-Clifford circuits can be reduced to weighted sum of Clifford circuits.
\begin{equation}\label{eq:stabilizer-decomposition-zx}
\cref{eq:clifford-T-amplitude-diagram}
=\sum_{j=1}^r\quad
\alpha_j\left(\quad
\vcenter{\hbox{\begin{tikzpicture}
    \draw (0,0) rectangle (1,2);
    \draw (2,0) rectangle (3,2);
    \coordinate (v1) at (1,0.3) {};
    \coordinate (v2) at (1,1.7) {};
    \node[hnode] (h1) at (1.5,0.3) {};
    \node[hnode] (h2) at (1.5,1.7) {};
    \node (dots1) at (1.5,1.1) {$\vdots$};
    \coordinate (z1) at (2,0.3);
    \coordinate (z2) at (2,1.7);
    \node (label1) at (0.5,1) {$\kappa$};
    \node (label2) at (2.5,1) {$\phi_j$};
    \draw (v1) -- (h1) (h1) -- (z1);
    \draw (v2) -- (h2) (h2) -- (z2);
\end{tikzpicture}}}
\quad\right)
% =\sum_{j=1}^r\quad
% \alpha_j\left(\right)
\end{equation}
Here, $\phi_j$ represents the ZX diagram representing the Clifford terms given by the stabilizer decomposition of $\ket{T}^{\otimes t}$.

Ignoring the edges between $\kappa$ and $\phi_j$, let the matrix representation of $\kappa$ and $\phi_j$ be $\mel{\mat{y}^{(\kappa)}}{H^{\otimes n}}{\mat{A}^{(\kappa)}}$ and
$\mel{\mat{y}^{(\phi_j)}}{H^{\otimes n}}{\mat{A}^{(\phi_j)}}$ respectively.
Then the matrix representation of each term in \cref{eq:stabilizer-decomposition-zx} can be written as $\mel{\mat{x}^{(j)}}{H^{\otimes n}}{\mat{A}^{(j)}}$, where 
\[
    \mat{A}^{(j)}=\mqty[
        \mat{A}^{(\kappa)} & \mat{B}^{(j)}\\
        (\mat{B}^{(j)})^T & \mat{A}^{(\phi_j)}
    ],\quad
    \mat{x}^{(j)}=\mqty[
        \mat{y}^{(\kappa)}\\
        \mat{y}^{(\phi_j)}
    ],
\]
and $\mat{B}^{(j)}$ encodes $t$ edges between $\kappa$ and $\phi_j$. Moreover, one can assume $\mat{A}^{(\kappa)}$ is full rank by flipping any diagonal of it at the cost of introducing an extra dummy node in a similar way as in \cref{eq:clifford-T-dummy},
\begin{equation}
\vcenter{\hbox{\begin{tikzpicture}
    \node[znode] (v) at (0,0) {${\scriptstyle \beta}$};
    \node (t) at (-0.6,0.1) {$\vdots$};
    \coordinate (v0) at (-0.75,0.5);
    \coordinate (v1) at (-0.75,-0.5);
    \coordinate (v2) at (0.5,0);
    \draw (v) to[out=120,in=0] (v0) (v) to[out=240,in=0] (v1) (v) -- (v2);
\end{tikzpicture}}}=
\vcenter{\hbox{\begin{tikzpicture}
    \node[znode] (u) at (2,0) {${\scriptstyle\pi/2 }$};
    \node[hnode] (h1) at (1.5,0) {};
    \node[znode] (w) at (1,0) {};
    \node[hnode] (h2) at (0.5,0) {};
    \node[znode] (v) at (0,0) {${\scriptstyle\alpha }$};
    \node (t) at (-0.6,0.1) {$\vdots$};
    \coordinate (v0) at (-0.75,0.5);
    \coordinate (v1) at (-0.75,-0.5);
    \coordinate (v2) at (2.5,0);
    \draw (v) to[out=120,in=0] (v0) (v) to[out=240,in=0] (v1);
    \draw (u) -- (h1) (h1) -- (w) (w) -- (h2) (h2) -- (v) (u) -- (v2);
\end{tikzpicture}}},\;\where\;\beta=\alpha+\pi/2.
\end{equation}

Therefore, \cref{eq:stabilizer-decomposition-zx} could be evaluated by solving strong simulation problem for each $\ket{\mat{A}^{(j)}}$. Note that each $\mat{A}^{(\phi_j)}$ has $O(t)$ rows/columns because $\ket{\phi_j}$ can be represented as a graph state with local Clifford gates~\cite{van2004graphical}. Since $\mat{A}^{(\kappa)}$ is shared among all $r$ terms, one only need to calculate LDL decomposition of $\mat{A}^{(\kappa)}$ once. For each $j=1,\dots,r$, we can compute the Schur complement of prefactorized $\mat{A}^{(\kappa)}$ in $O(\nnz(\mat{L}^{(\kappa)}) + \nnz(\mat{A}^{(\kappa)})$ time. So the overall time complexity is given by $O(T_{LDL}(\mat{A}^{(\kappa)}) + r \cdot (\nnz(\mat{L}^{(\kappa)}) + \nnz(\mat{A}^{(\kappa)}) + t^\omega))$.
This is summarized in the following theorem.
\begin{theorem}
Let $\mat{U}$ be a quantum circuit on $n$ qubits represented by $m$ Clifford gates and $t$ T gates, $r$ be the stabilizer rank of $\ket{T}^{\otimes t}$, and $G$ be the graph representing structure of $\mat{U}$. Let $\mat{A}$ be the matrix representing the Clifford part of the circuit and $(\mat{P},\mat{L},\mat{D})$ be an LDL decomposition of $\mat{A}$.
Then the amplitude $\mel{\mat{b}}{\mat{U}}{0^n}$ for arbitrary binary string $\mat{b}$ can be evaluated in
$O(T_{LDL}(\mat{A}) + r \cdot (\nnz(\mat{L}) + \nnz(\mat{A}) + t^\omega))$ time, where $T_{LDL}(\mat{A})$ is the time complexity of computing LDL decomposition of A.
\end{theorem}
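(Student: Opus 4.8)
The plan is to combine the ZX-calculus reduction established in \cref{eq:clifford-T-dummy}--\cref{eq:stabilizer-decomposition-zx} with the amplitude formula of \cref{thm:amp-formula}, and then to show that the $r$ Clifford strong-simulation subproblems can all reuse a single LDL factorization of the Clifford block, so that each term costs only a Schur-complement update plus a small dense evaluation.

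First I would formalize the decomposition into Clifford pieces. Starting from \cref{lem:circ-as-pgs}, the amplitude $\mel{\mat{b}}{\mat{U}}{0^n}$ is a graph-like ZX diagram whose phases are multiples of $\pi/4$; applying \cref{eq:clifford-T-dummy} to peel a degree-one dummy spider off each $\pi/4$ or $3\pi/4$ node separates it into the Clifford block $\kappa$ (phases multiples of $\pi/2$, with $O(n+m)$ spiders inheriting the structure of $G$) and a block equal to $\ket{T}^{\otimes t}$ up to local Cliffords. Substituting $\ket{T}^{\otimes t}=\sum_{j=1}^r\alpha_j\ket{\phi_j}$ gives \cref{eq:stabilizer-decomposition-zx}, i.e. $\mel{\mat{b}}{\mat{U}}{0^n}=\sum_{j=1}^r\alpha_j\mel{\mat{x}^{(j)}}{\mat{H}^{\otimes n}}{\mat{A}^{(j)}}$, where each $\mat{A}^{(j)}$ has block form with shared leading block $\mat{A}^{(\kappa)}$, trailing block $\mat{A}^{(\phi_j)}$ of size $O(t)$, and coupling $\mat{B}^{(j)}$ carrying the $t$ T-edges so that $\nnz(\mat{B}^{(j)})=O(t)$. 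I would also record the diagonal-flip trick (one extra dummy node per flip) that lets us assume $\mat{A}^{(\kappa)}$ is full rank, which is exactly the hypothesis needed to invoke the explicit Schur-complement form of \cref{thm:amp-formula}.

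Next I would evaluate each term through \cref{thm:amp-formula}. Since $\mat{A}^{(\kappa)}$ is the full-rank leading block of every $\mat{A}^{(j)}$, the Gauss-Jordan/LDL data of $\mat{A}^{(j)}$ is obtained by eliminating $\mat{A}^{(\kappa)}$ once and then processing the trailing $O(t)$-dimensional Schur complement $S^{(j)}=\mat{A}^{(\phi_j)}-(\mat{B}^{(j)})^T(\mat{A}^{(\kappa)})^{\#}\mat{B}^{(j)}$. The decisive point is that the factorization $\mat{A}=\mat{P}\mat{L}\mat{D}\mat{L}^T\mat{P}^T$ of $\mat{A}^{(\kappa)}$ is $j$-independent, so it is computed a single time at cost $T_{LDL}(\mat{A})$; the only $j$-dependent work is forming $S^{(j)}$ from the prefactorized $\mat{L},\mat{D}$ (a sparse solve against the $O(t)$ nonzero columns of $\mat{B}^{(j)}$ followed by the trailing contraction) and then evaluating \cref{eq:amp-formula} on the dense $O(t)\times O(t)$ matrix $S^{(j)}$, carrying along the $\Z_4$ second-bit ($\omega_2$) corrections that \cref{lem:gj-complexity} shows can be tracked throughout the factorization.

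For the cost accounting, the one-time factorization contributes $T_{LDL}(\mat{A})$. Each of the $r$ terms spends $O(\nnz(\mat{L})+\nnz(\mat{A}))$ to form $S^{(j)}$ from the shared factors and $O(t^\omega)$ to evaluate the amplitude formula on the resulting $O(t)\times O(t)$ Schur complement via fast matrix multiplication, summing to the stated $O(T_{LDL}(\mat{A})+r\cdot(\nnz(\mat{L})+\nnz(\mat{A})+t^\omega))$. I expect the main obstacle to be the tight per-term bound $O(\nnz(\mat{L})+\nnz(\mat{A}))$ for the Schur-complement update: one must argue that reusing the shared factorization (rather than refactorizing $\mat{A}^{(j)}$ from scratch) confines the $j$-dependent arithmetic to the sparse coupling $\mat{B}^{(j)}$ and the already-determined fill pattern of $\mat{L}$, and that the second-bit $\Z_4$ updates demanded by \cref{thm:amp-formula} obey the same sparsity and do not inflate the count.
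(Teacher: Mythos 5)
Your proposal follows essentially the same route as the paper's own argument: the same ZX-level splitting into the shared Clifford block $\kappa$ and the $\ket{T}^{\otimes t}$ part via \cref{eq:clifford-T-dummy}, the same stabilizer-rank expansion \cref{eq:stabilizer-decomposition-zx}, the same diagonal-flip trick to make $\mat{A}^{(\kappa)}$ full rank, and the same reuse of a single LDL factorization with a per-term $O(t)\times O(t)$ Schur complement evaluated by \cref{thm:amp-formula} at cost $O(\nnz(\mat{L})+\nnz(\mat{A})+t^\omega)$. The per-term Schur-complement bound you flag as the main obstacle is asserted at the same level of detail in the paper itself, so your account is faithful to, not weaker than, the published proof.
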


This algorithm achieves better performance than naively simulating $r$ Clifford circuits. 
In comparison, the previous work~\cite{bravyi2016improved} approximates $\abs{\mel{\mat{b}}{\mat{U}}{0^n}}^2$ with relative error $\epsilon$ and failure probability $p_f$ in $O(\text{poly}(n,m)+rt^3\epsilon^{-2}p_f^{-1})$ time. Our algorithm utilizes fast matrix multiplication, runs deterministically and also computes the exact amplitude with phase. 
Further, for $n$-qubit planar circuits of depth $d$ and $t$ T gates as defined in \cite{gosset2024fast,kerzner2021clifford}\footnote{Planar circuits are circuits where two-qubit gates are applied on edges of a planar graph formed by qubits.}, the tree width is bounded by $\sqrt{n}d$.
Hence, our algorithm runs in time $O(n^{(\omega+1)/2}d^{\omega} + r(n^{3/2}d+t^\omega))$,
%time assuming $r$ dominates,
which generally improves the time complexity $O(r(n^{3/2}t^6d^3))$ achieved in \cite[Theorem 33]{kerzner2021clifford}.
Finally, we note that this strong simulation algorithm can also be used for weak simulation using the gate-by-gate sampling method~\cite{bravyi2022simulate}.

\section{Other applications}
\label{sec:apps-other}
In this section, we discuss other applications of our results. In \cref{sec:lc-eq}, we show that our result allows a new characterization of locally Clifford (LC) equivariant graph states, which leads to a linear bound of the diameter of each graph state orbit~\cite{adcock2020mapping}. In \cref{sec:gs-learn}, we present an efficient protocol to learn graph states whose adjacency matrices are low rank.

\subsection{Local equivalence of graph states}\label{sec:lc-eq}

Let $\mathbb{Y}_2$ be the set of the 2-by-2 Clifford unitaries.
Two quantum states $\ket{\psi}$ and $\ket{\psi}$ are locally Clifford equivalent (LC-equivalent) iff 
\[\exists U_1,\ldots, U_n\in \mathbb{Y}_2, \text{ such that } \ket{\psi} = \bigg(\bigotimes_{i=1}^n U_i\bigg)\ket{\phi}.\]

\begin{theorem}
\label{thm:lc-eq}
Two $n$-qubit graph states are LC-equivalent iff, for  some consistent ordering of vertices, their adjacency matrices $A$ and $B$ satisfy
\[\exists u, v\in\F_2^n, k\in\{0,\ldots, n\}, \text{ such that }
A+ D(u) = \omega_1(\Gamma_k(B+D(v))).\]
\end{theorem}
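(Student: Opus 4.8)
The plan is to work in the binary symplectic (stabilizer) picture, where the graph state $\ket{\bar A}$ is represented by the generator matrix $[\,\mat I\mid \bar A\,]$ whose $X$-part is the identity and whose $Z$-part is the adjacency matrix. In this picture a product of single-qubit Cliffords acts as a block-diagonal $2\times 2$ symplectic matrix over $\F_2$, which after the usual reorganization into diagonal blocks $a,b,c,d$ (with $a_id_i+b_ic_i=1$ per qubit) sends $[\,\mat I\mid \bar A\,]$ to $[\,a+\bar A c\mid b+\bar A d\,]$; this is again a graph state precisely when $a+\bar A c$ is invertible, in which case the new adjacency matrix is $(a+\bar A c)^{-1}(b+\bar A d)$. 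By the theorem of Van den Nest, Dehaene and De Moor~\cite{van2004graphical} (equivalently, the local-complementation realizations of \cref{thm:vertex-complementation} and \cref{thm:edge-complementation}), two graph states are LC-equivalent exactly when such a transformation exists, so the whole argument reduces to matching this transformation, in a suitable common ordering, with $\omega_1(\Gamma_k(\cdot))$ up to diagonal shifts.

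For the ($\Leftarrow$) direction I would argue directly from \cref{lem:gj-pgs}. Choosing the ordering so that $B+D(v)\in\W_n$, that lemma gives $(\mat H^{\otimes k}\otimes \mat I^{\otimes(n-k)})\ket{B+D(v)}=\alpha\,\mat X^{p}\mat Z^{q}\ket{\Gamma_k(B+D(v))}$ for some $p,q\in\F_2^n$ and $\alpha\in\C$, so $\ket{B+D(v)}$ and $\ket{\Gamma_k(B+D(v))}$ differ only by Hadamards, Paulis and a scalar, all local Cliffords. The hypothesis $A+D(u)=\omega_1(\Gamma_k(B+D(v)))$ says the underlying graph of the phased graph state $\ket{\Gamma_k(B+D(v))}$ has adjacency $O(A+D(u))=A$, while its $\Z_4$ diagonal contributes only $\rzm$ phase gates; peeling off the phase gates on both sides and undoing the common reordering yields $\ket{\bar A}\sim_{\mathrm{LC}}\ket{\bar B}$.

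For the ($\Rightarrow$) direction the plan is to normalize the per-qubit symplectic matrices. Since $\mathrm{Sp}(2,\F_2)\cong S_3$ is generated by the Hadamard $\mat H$ and the phase $\mat S=\rzp$, every single-qubit symplectic factors (up to Paulis, which affect only stabilizer signs and leave the adjacency structure unchanged) as $\mat S^{a_i}\mat H^{\varepsilon_i}\mat S^{b_i}$ with $a_i,b_i,\varepsilon_i\in\{0,1\}$. I would absorb the input phases $\prod_i \mat S^{b_i}$ into a diagonal shift $D(v)$ of $\bar B$ and the output phases $\prod_i \mat S^{a_i}$ into a shift $D(u)$ of $\bar A$, leaving a Hadamard applied exactly on $T=\set*{i}{\varepsilon_i=1}$. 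Picking one permutation that moves $T$ into the leading $k=\abs{T}$ coordinates and refines the order so the leading block admits an unpivoted LDL (i.e.\ lands in $\W_n$), the induced map on $M=B+D(v)$ is the computation of the first paragraph with $a=d=\mathrm{diag}(0_k,\mat I_{n-k})$ and $b=c=\mathrm{diag}(\mat I_k,0_{n-k})$. A short block calculation then gives $(a+Mc)^{-1}(b+Md)=\begin{bmatrix}M_{11}^{\#} & M_{11}^{\#}M_{12}\\ M_{21}M_{11}^{\#} & M_{22}-M_{21}M_{11}^{\#}M_{12}\end{bmatrix}$, which is exactly $\omega_1(\Gamma_k(M))$ by \cref{lem:gj_cor}; reading off its off-diagonal and diagonal first bits as $A$ and $u$ gives $A+D(u)=\omega_1(\Gamma_k(B+D(v)))$. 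Invertibility of $a+Mc$, forced because the transformed state is again a graph state, is precisely invertibility of the leading block $M_{11}$, i.e.\ the full-rank hypothesis that lets $\Gamma_k$ run.

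The step I expect to be the main obstacle is making the two pictures agree on the nose rather than merely up to local Cliffords. Concretely, I must check that the second-bit $\Z_4$ diagonal data on both the input side (from the phase gates, which only need $v\in\F_2$ since by \cref{eq:gjl1} the second bit of the input does not affect $\omega_1(\Gamma_k(\cdot))$) and the output side (the $\Omega_{1,2}$ terms tracked in \cref{lem:gj_cor}) encodes only $\mat Z=\rzm^2$ Paulis and is harmlessly discarded by the $\omega_1$ in the statement; that the two-sided factorization $\mat S^{a_i}\mat H^{\varepsilon_i}\mat S^{b_i}$ yields genuinely independent shifts $u,v$ on the two graphs; and that a single permutation can simultaneously send $T$ to the front and guarantee $\W_n$-membership so that $\Gamma_k$ applies without pivoting. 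The conceptual crux is the completeness of the phase–Hadamard–phase normal form, namely that no Clifford outside this family is ever needed; this is what ties LC-equivalence to Gaussian elimination, and it rests on the finiteness of $\mathrm{Sp}(2,\F_2)$ together with the exhaustive factorization of its six elements.
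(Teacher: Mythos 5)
Your proposal is correct, and it reaches the statement by a genuinely different route at the one place where the proof is hard. Both arguments share the same skeleton: the phase--Hadamard--phase normal form of single-qubit Cliffords (you derive it from the six elements of $\mathrm{Sp}(2,\F_2)$; the paper derives it from the Euler decomposition in \cref{lem:one-qubit-clifford}), absorption of phases into $\F_2$ diagonal shifts with Paulis relegated to the second bit of the $\Z_4$ diagonal (your observation that $\omega_1$ discards them is exactly the paper's use of \cref{eq:applyX}), and the converse direction via \cref{lem:gj-pgs}. The divergence is in how full rank of the Hadamard block $M_{TT}$ is forced. The paper stays inside its phased-graph-state formalism: it takes a maximal independent subset $S'\subseteq S$ of the Hadamard positions, applies \cref{lem:gj-pgs} only on $S'$, notes that the trailing block of $\Gamma_{s'}(B+D(\bar v))$ vanishes modulo $2$, and invokes \cref{lem:null} to conclude that if $s>s'$ the residual Hadamards produce a state with non-constant amplitude magnitudes, whose overlap with any phased graph state cannot have modulus one. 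You instead obtain the rank condition from the binary symplectic picture: a stabilizer state is a graph state iff the $X$-block of its generator matrix is invertible, and with Hadamards on $T$ that block is block-triangular with diagonal blocks $M_{TT}$ and $I$, so invertibility of $M_{TT}$ is forced; your identification of $\omega_1(\Gamma_k(M))$ with the principal pivot transform $(a+Mc)^{-1}(b+Md)$ then matches \cref{eq:gjl1} of \cref{lem:gj_cor}. What each buys: your route is shorter and makes the connection to the Van den Nest--Dehaene--De Moor characterization explicit, at the cost of importing a symplectic dictionary the paper never sets up; the paper's route is self-contained and, via \cref{lem:null}, gives the stronger quantitative statement that the overlap magnitude strictly degrades in the rank-deficient case. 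One step you should make explicit rather than leave implicit: the invertibility criterion must be applied after commuting the output phase gates and Paulis past the Hadamards, using the fact that diagonal symplectics are per-qubit unitriangular and hence preserve the $X$-block, since the transformed state is a priori only a phased graph state $\ket{\bar A + D(u')}$ rather than $\ket{\bar A}$ itself; with that remark, and the (standard, and also used tacitly by the paper) fact that a full-rank symmetric matrix over $\F_2$ admits an unpivoted LDL after a symmetric permutation within $T$, your argument is complete.
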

\begin{proof}
For brevity, we use the notation $\ket{\psi} \propto \ket{\phi}$ to mean $\exists \alpha \in\mathbb{C}$, $|\alpha|=1$ such that $\ket{\psi}=\alpha\ket\phi$.
We denote the elementwise product of two vectors $u$ and $v$ as $u\circ v$.

Any one-qubit Clifford unitary may be represented, for some $a,b,c\in\{0,1,2,3\}$, as
\[U \propto \check{Z}^a\check{X}^b\hat{Z}^c,\]
which is sometimes referred to as the Euler decomposition of the gate.
We provide a proof that the Clifford Euler decomposition always exists in \cref{lem:one-qubit-clifford} in \cref{sec:one-qubit-clifford}.
We further decompose $\check{X}^b = X^{\omega_2(b)}(\check{Z}H\check{Z})^{\omega_1(b)}$, and use the canonical form $a',b'\in\F_2$, $c',d'\in\{0,1,2,3\}$,
\[U \propto \check{Z}^{c'}X^{a'}H^{b'}\hat{Z}^{d'}\]

Hence, for a general local unitary transformation given by one-qubit unitaries $U_1,\ldots,U_n$, we may find $\hat{u},\bar{v}\in\{0,1,2,3\}^n$, $p,q\in\F_2^n$, such that, %$U_i=\check{Z}^{r_i}H\check{Z}^{\bar{v}_i}$
%and
%$+ (1-p_i)\check{Z}^{r_i}\check{X}^{t_i}$ and
\begin{align}
\bigg(\bigotimes_{i=1}^n U_i\bigg)\ket{B} \propto \check{Z}^{\hat{u}} X^qH^p\ket{B+D(\bar{v})} %\propto\hat{Z}^{s} H^q\check{Z}^{r\cdot q} (\check{Z} H\check{Z})^{r\cdot (1-q)}\ket{B}.
%\check{X}^{(1-p)\circ t}\ket{B +D(p \circ t)}.
\end{align}
%We may further simplify the above as follows, for some $w,\bar{v}\in\{0,1,2,3\}^n$ and $p\in\F_2^n$,
%\begin{align}
%\bigg(\bigotimes_{i=1}^n U_i\bigg)\ket{B} \propto \hat{Z}^{w} H^p\ket{B + D(\bar{v})}.
%%\check{X}^{(1-p)\circ t}\ket{B +D(p \circ t)}.
%\end{align}
%%Note that, for any phased graph state $\ket{\phi}$,
%%\[X^{\langle i \rangle} \ket{\phi}\]
%%\begin{align}
%%\bigg(\bigotimes_{i=1}^n U_i\bigg)\ket{B} \propto \check{X}^{(1-p)\circ t}\check{Z}^r\ket{\psi}, \ket{\psi} = H^{p}\ket{B +D(\bar{v}) }, \bar v = p\circ t.
%%\end{align}
Let $S=\{i,p_i=1,i\in\{1,\ldots,n\}\}$, $s=|S|$ and consider a subset $S'\subseteq S$ such that $S'$ is of maximal size $s'=|S'|$ and the set of indices in $S'$ correspond to linearly independent columns/rows of $\omega_1(B+D(\bar v))$.
Assume wlog the columns/rows of $B+D(\bar v)$ are ordered as $S'\prec \{1,\ldots,n\}\setminus S \prec S \setminus S'$, so $S'=\{1,\ldots, s'\}$.
Then, by \cref{lem:gj-pgs}, for some $x,z\in\F_2^n$ (with only the first $s'$ elements potentially nonzero),
\[H^{\otimes s'} \otimes I^{\otimes (n-s')} \ket{B+D(\bar{v})} \propto X^xZ^z\ket{\Gamma_{s'}(B+D(\bar{v}))},\]
hence
\begin{align}
%\ket{\psi}&=
H^{p}\ket{B + D(\bar{v})} 
&= (I^{\otimes (n-(s-s'))}\otimes H^{\otimes(s-s')}) (H^{\otimes s'} \otimes I^{\otimes (n-s')})\ket{B+D(\bar{v})} \\
&\propto(I^{\otimes (n-(s-s'))}\otimes H^{\otimes(s-s')})X^xZ^z\ket{\Gamma_{s'}(B+D(\bar{v}))} \\
&=X^xZ^z(I^{\otimes (n-(s-s'))}\otimes H^{\otimes(s-s')})\ket{\Gamma_{s'}(B+D(\bar{v}))}.
\end{align}
However, note that the bottom right $(s-s')\times (s-s')$ block of $\ket{\Gamma_{s'}(B+D(\bar{v}))}$ must be zero modulo two, as the last columns are linearly dependent on the first $s'$.
Hence, we may apply \cref{lem:null} to
\[(I^{\otimes (n-(s-s'))}\otimes H^{\otimes(s-s')})\ket{\Gamma_{s'}(B+D(\bar{v}))},\]
and observe that if $s>s'$, the magnitude of the amplitudes of the above state are not constant (differ from that of a graph state).
Hence, if $s>s'$,
%with $x' = 2x+(1-p)\circ t$,
%\yuchen{
%I think we should have $x' = x+(1-p)\circ t/2$ here because $t$ is the exponent of $\check{X}$ above. We can show $(1-p)\circ t$ is always even by following:
%By Euler decomposition, one-qubit Clifford qubit state can always be written as $\check{Z}^a\check{X}^b\check{Z}^c$ for $a,b,c\in\{0,1,2,3\}$. When $b=1$ or $b=3$, we have $\check{Z}\check{X}\check{Z}\propto H$ and $\check{Z}\check{X}^3\check{Z}\propto ZHZ$ so $U_i$ can be written in the form $\check{Z}^r\mat{H}\check{Z}^t$ for $r,t\in\{0,1,2,3\}$; when $b=0$ or $b=2$, we have $\check{Z}X\check{Z}\propto X$ and $\check{Z}X\propto X\check{Z}^3$, so $U_i$ can be written in the form $\check{Z}^r\check{X}^t$ 
%for $c\in\{0,1,2,3\}$ and $t\in\{0,2\}$.
%}
\begin{align}
\abs{\mel{A}{\left(\bigotimes_{i=1}^n U_i\right)}{B}} &=\abs{\mel{A+D(\hat u+2z)}{X^{x}}{\Gamma_{s'}(B+D(\bar{v}))}}\neq 1,
%, \bar u = -(r+z)\bmod 4,
\end{align}
%\yuchen{
%$x'$ can be nonzero if $\hat{Z}^{\bar{u}}X^{x'}\check{Z}^{\bar{v}}$ is a stabilizer of $\ket{B}$ or $\ket{A}$. Since $X^{e_i}Z^{Ae_i}$ is stabilizer of $\ket{A}$,
%\[
%|\bra{A+D(\bar{u})}X^{x'}\ket{\Gamma_{s'}(B+D(\bar{v}))}|=
%|\bra{A+D(\bar{u})}\mat{Z}^{z}\ket{\Gamma_{s'}(B+D(\bar{v}))}|=1
%\]
%where $z_j=\sum_{x'_i=1}a_{ij}$. Then we must have $z=0$ by choosing appropriate $\omega_2(\bar{u})$.
%}\\
since, we have that for any phased graph state $\ket{R}$, any any $j\in\{0,\ldots,n\}$,
\begin{align}
X^{\langle j \rangle }\ket{R} \propto \sum_{x\in\F_2^n}i^{-(x-e_j)^TR(x-e_j)} \propto \sum_{x\in\F_2^n}i^{-x^T(R+2D(r_j))x} = \ket{R+2D(r_j)},\label{eq:applyX}
\end{align}
where $r_j$ is the $j$th row/column of $R$.
Hence, $\exists \bar{u}\in\{0,1,2,3\}^n$, such that 
\[\bra{A+D(\hat u+2z)}X^{x} \propto \bra{A + D(\bar{u})}.\]
%In particular, since for any phased graph state $\ket{R}$,
%\[X^{\langle i \rangle }\ket{R} \propto \sum_{x\in\F_2^n}i^{-(x-e_i)^TR(x-e_i)} \propto \sum_{x\in\F_2^n}i^{-x^T(R-2D(r_i))x},\]
%where $r_i$ is the $i$th row/column of $R$, we have
%\[\bra{A-D(r+s\bmod 4)}X^x =\bra{A+D(\bar{u})},\quad \bar{u} = 2Ax-r-s\bmod 4.\]
%modulo $X^{x}$ (which only permutes the set of amplitudes).
Therefore, we must have $S=S'$.
%Further, we can assert $x'=0$, as if $x'\neq 0$, $X^{x'}\ket{\Gamma_{s'}(B+D(\bar{v}))}$ is a phased graph state with permuted amplitudes, which cannot be a graph state unless the permutation has no affect.

We have shown that $A+D(\bar{u})$ and $\Gamma_s(B+D(\bar{v}))$ can differ only in the second bit of the diagonal.
Choosing $\omega_2(\bar u)\neq 0$ has no affect on $\omega_1(\Gamma_s(B+D(\bar{v})))$ and hence no affect on the graph state we transform to (although we must ensure $\omega_2(\bar u)$ and $\omega_2(\bar v)$ are $0$, so that we do transform to a graph state, which is always possible with appropriate local unitary transformations).
Hence, we have shown that all LC-equivalent graph states may be related by a partial pivot transform modulo 2, with appropriate change to the diagonal before and after.
Specifically, we have shown that given LC-equivalence of two graph states, we can directly derive the relation in the Lemma with $u=\omega_1(\bar u)$ and $v=\omega_1(\bar v)$.
%We can restrict our attention to Clifford unitaries, since the 1-qubit reduced density matrix is
%\[\rho_i(\ket{A}) = U_i\rho_i(\ket{B})U_i^*\]
\end{proof}
The set of $n$-qubit graph states can be viewed as a group with vertex complementation as the group operation. Then the orbits of this group are classes of LC-equivalent graph states and the diameter of an orbit gives the the smallest number of vertex complementations needed to transform one graph state in the orbit to another. It has been heuristically observed that the diameter of a graph state orbit is strongly correlated to entanglement measure (i.e. Schmidt measure) of graph states~\cite{adcock2020mapping}.
It has been proved that a general upper bound of diameter for all graph state orbits is $\floor{3n/2}$~\cite[Proposition 6]{claudet2025local}. The same result can be recovered as a consequence of \cref{thm:lc-eq}, as shown in the theorem below.
Note that we consider graph states with labeled vertices, which means isomorphic graphs are considered different.
\begin{theorem}
The LC-equivalence diameter of any $n$-qubit graph state orbit with labeled vertices is at most $\floor{3n/2}$.
\end{theorem}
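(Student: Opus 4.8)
The plan is to reduce the diameter bound to a short counting argument over the generalized Gauss-Jordan process $\Gamma_k$. First I would recall that LC-equivalence of graph states coincides with equivalence under finite sequences of vertex complementations~\cite{van2004graphical}, so that the orbit under vertex complementation is exactly an LC-equivalence class and its diameter is the worst-case minimal number of vertex complementations relating two members. Given any two LC-equivalent $n$-qubit graph states with adjacency matrices $A$ and $B$, \cref{thm:lc-eq} supplies $u,v\in\F_2^n$ and some $k\in\{0,\ldots,n\}$ (after a free relabelling of the unlabelled vertices) with $A+D(u)=\omega_1(\Gamma_k(B+D(v)))$. Since the diameter is a maximum over pairs of a minimum over connecting paths, it suffices to exhibit one connecting sequence of length at most $\floor{3n/2}$, and the relation above provides exactly such a connection once $\Gamma_k$ is expressed through vertex complementations.

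Next I would translate the $k$-step process $\Gamma_k$ of \cref{alg:gj} into vertex complementations using \cref{lem:gj-pgs}. Each iteration of the process is either a single-row elimination, which corresponds via \cref{thm:vertex-complementation} to one vertex complementation $\tau_i$, or a paired elimination, which corresponds via \cref{thm:edge-complementation} to one edge complementation $\epsilon_{ij}=\tau_i\circ\tau_j\circ\tau_i$, i.e.\ three vertex complementations. The diagonal shifts $D(u)$, $D(v)$ and the Pauli factors appearing in \cref{lem:gj-pgs} are local Clifford phases that leave the underlying graph unchanged, so they contribute no vertex complementations; thus the whole graph transformation from $B$ to $A$ is realized purely by the $\tau$-operations coming from the elimination steps.

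The final step is the arithmetic. If $\Gamma_k$ performs $p$ single eliminations and $q$ paired eliminations, then $p+2q=k$, while the number of vertex complementations used is $p+3q$. Writing $p+3q=(p+2q)+q=k+q$ and using $q\le\floor{k/2}$ (forced by $p\ge 0$), I get $k+q\le k+\floor{k/2}=\floor{3k/2}\le\floor{3n/2}$ since $k\le n$. Hence any pair of LC-equivalent graph states is joined by at most $\floor{3n/2}$ vertex complementations, which yields the claimed diameter bound.

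The main obstacle is the bookkeeping that justifies the per-step count: I must confirm that the if/else branches of \cref{alg:gj} are in exact correspondence with vertex and edge complementation on the underlying graph (the content of \cref{lem:gj-pgs}), and that no additional vertex complementations are hidden in the phase corrections or in the requirement that the leading block be full rank. A secondary point to verify is that the extremal configuration $k=n$ with $q=\floor{n/2}$ is consistent with $p+2q=k$, confirming that the constant $3/2$ is the best this counting argument can give.
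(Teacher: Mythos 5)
Your proposal is correct and follows essentially the same route as the paper's proof: invoke \cref{thm:lc-eq} to obtain $A+D(u)=\omega_1(\Gamma_k(B+D(v)))$, interpret the single and paired elimination steps of \cref{alg:gj} as vertex and edge complementations (the latter costing three vertex complementations by \cref{thm:edge-complementation}), and bound the total count by $\floor{3n/2}$. Your explicit arithmetic $p+3q=k+q\le\floor{3k/2}\le\floor{3n/2}$ merely spells out the counting that the paper asserts directly, and your observation that the diagonal shifts only contribute local phases (not graph changes) matches the paper's step of equating off-diagonals to get $A=O(\Gamma_k(B+D(v)))$.
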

\begin{proof}
Consider any two LC-equivalent graph states with adjacency matrices $A$ and $B$. By \cref{thm:lc-eq}, there exist some $u,v\in\F_2^n$ and $k\in\set{0,...,n}$ such that in some consistent ordering of vertices,
\begin{align}
\ket{A+D(u)}=\ket{\omega_1(\Gamma_k(B+D(v)))}.
\end{align}
The off-diagonals of $A+D(u)$ and $\omega_1(\Gamma_k(B+D(v)))$ must match, so we have
\[
A=O(\Gamma_k(B+D(v))).
\]
Further, by definition of Gauss-Jordan process, the RHS can be viewed as the resulting graph after applying a series of disjoint vertex or edge complementations on the graph represented by $B$. Since each edge complementation can be represented by three vertex complementations as stated in \cref{thm:edge-complementation}, the graph $A$ could be obtained by applying at most $\floor{3n/2}$ vertex complementations on $B$. 
\end{proof}

\subsection{Graph state learning}\label{sec:gs-learn}
Quantum state learning is of great interest in order to understand complexity-theoretic properties of quantum states and experimentally characterize real-world quantum states~\cite{anshu2024survey}.
In the setting of this problem, we are provided with a number of identical copies of some unknown quantum states $\ket{\psi}$, on which we can apply arbitrary circuits/measurements,
and the goal is to obtain a mathematical representation of $\ket{\psi}$ using as few copies as possible. In this section, we consider the problem of learning graph states and provides an algorithm to learn the graph state efficiently if its adjacency matrix is low rank.

% \begin{proof}
% \begin{align*}
% \ket{\psi}=2^{-r/2}\sum_{x\in\F_2^r}i^{x^TQx}\ket{Ax+b}
% \end{align*}
% Then 
% \end{proof}

We take advantage of the fact that the measurement results of an $n$-qubit stabilizer state forms an affine subspace of $\F_2^n$, so learning the subspace could reduce the original state to a stabilizer state with fewer qubits~\cite[Theorem 3]{anshu2024survey}. Although the measurement results of a graph state form the full space of $\F_2^n$ (i.e. the amplitudes vector of any graph state has no zeros), our results for strong simulation of phased graph states allows similar ideas to be applied on graph states. This is detailed in the following lemma.
\begin{lemma}\label{lem:learn-low-rank-gs-reduction}
Given an unknown $n$-qubit graph state $\ket{G}$, then one can compute a positive integer $r$ and a circuit $U$ composed of CNOT and X gates using $O(r\log\frac{1}{\delta})$ one-copy measurements of $\ket{G}$ such that the following is satisfied with probability at least $1-\delta$,
\begin{itemize}
\item $r$ is exactly the rank of adjacency matrix of $G$;
\item $UH^{\otimes n}\ket{\psi}=\ket{\phi}\otimes\ket{0^{n-r}}$ for some $r$-qubit stabilizer state $\ket{\phi}$.
\end{itemize}
\end{lemma}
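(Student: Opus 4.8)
The plan is to use \cref{thm:amp-formula} to show that measuring $H^{\otimes n}\ket{G}$ in the computational basis yields samples distributed \emph{uniformly} over an affine subspace of $\F_2^n$ whose dimension is exactly $r=\rank(A)$, where $A$ is the unknown adjacency matrix of $G$; learning this subspace from the samples then lets us synthesize a Clifford circuit of CNOT and X gates that trivializes the last $n-r$ qubits. First I would observe that $\ket{G}=\ket{A}$ is the phased graph state with $d(A)=0$, so $\omega_1(A)=A$ and $\rank(\omega_1(A))=\rank(A)=:r$. Applying \cref{thm:amp-formula}, the amplitude $\mel{x}{H^{\otimes n}}{A}$ is nonzero exactly when $x\oplus w\in\spn(A)$, and then has magnitude $2^{-r/2}$. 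Hence the outcome distribution is uniform over the coset $\mathcal{A}=w\oplus\spn(A)$, an affine subspace of dimension $r$ (the $2^r$ admissible outcomes each carry probability $2^{-r}$). This identification is the conceptual heart of the proof.

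Next I would recover $\mathcal{A}$ by sampling. Taking $T=O(r+\log\tfrac1\delta)$ one-copy measurements produces outcomes $x^{(1)},\dots,x^{(T)}$; conditioned on $x^{(1)}$, the differences $x^{(i)}\oplus x^{(1)}$ for $i\ge 2$ are i.i.d.\ uniform over $V:=\spn(A)$. Whenever the span of the differences collected so far is a proper subspace of $V$, a fresh difference escapes it with probability at least $1/2$, so a Chernoff bound on the number of rank-increasing samples shows that the differences span all of $V$ after $O(r+\log\tfrac1\delta)$ samples except with probability $\delta$. Since $r\ge1$ this is $O(r\log\tfrac1\delta)$, matching the stated count. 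This step determines $r$ exactly and yields a base point $w=x^{(1)}$ together with a basis $b_1,\dots,b_r$ of $V$.

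Finally I would construct $U$. Extend $b_1,\dots,b_r$ to a basis of $\F_2^n$ and let $M\in\F_2^{n\times n}$ be invertible with $Mb_i=e_i$. Because every invertible matrix over $\F_2$ is a product of elementary operations, $M$ is realized by a CNOT circuit, and a subsequent layer of X gates implements any fixed translation $c$, so that $U\ket{x}=\ket{Mx\oplus c}$ for matrix–vector products taken over $\F_2$. Since $M$ sends $V$ to $\spn(\set{e_1,\dots,e_r})$, the last $n-r$ coordinates of $Mx$ are constant over $x\in\mathcal{A}$, and choosing $c$ to cancel them forces $U$ to map $\mathcal{A}$ into $\set*{y\in\F_2^n}{y_{r+1}=\dots=y_n=0}$. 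As $U$ and $H^{\otimes n}$ are Clifford and $\ket{G}$ is a stabilizer state, $UH^{\otimes n}\ket{G}$ is a stabilizer state supported on this set, hence of the form $\ket{\phi}\otimes\ket{0^{n-r}}$ with $\ket{\phi}$ an $r$-qubit stabilizer state.

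The main obstacle is the first step: correctly extracting from \cref{thm:amp-formula} that the support of $H^{\otimes n}\ket{G}$ is \emph{precisely} an $r$-dimensional coset with flat amplitude magnitudes (in particular that $\rank(\omega_1(A))=\rank(A)$ because $A$ already lives over $\F_2$, and that $w$ accounts for the exact coset offset). Once that structure is in hand, both the subspace-learning argument and the CNOT/X circuit synthesis are standard linear algebra over $\F_2$.
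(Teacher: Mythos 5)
Your proposal is correct and follows essentially the same route as the paper's proof: use the amplitude formula to identify the outcomes of measuring $H^{\otimes n}\ket{G}$ as uniform samples from a coset $w\oplus\spn(A)$ of dimension $r=\rank(A)$, learn $\spn(A)$ from differences of samples against the first outcome, and synthesize a CNOT+X circuit zeroing the last $n-r$ coordinates (the paper builds the structured map $U\ket{y}\otimes\ket{z}=\ket{y}\otimes\ket{z\oplus\omega_1(A_2A_1^{\#}(y+b_1)+b_2)}$ rather than your general invertible $M$ plus a translation, but both are equivalent CNOT+X syntheses). One small fix: since $r$ is unknown you cannot fix $T=O(r+\log\frac{1}{\delta})$ measurements in advance, so replace this with the paper's adaptive stopping rule (stop once the rank of the collected differences has not increased for $\lceil\log_2\frac{1}{\delta}\rceil$ consecutive samples), to which your escape-probability-at-least-$\frac{1}{2}$ argument transfers directly and which yields the stated $O(r\log\frac{1}{\delta})$ sample count.
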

\begin{proof}
By \cref{thm:strong-sim-pgs}, there exist $Q\in\J_r$, $A\in\F_2^{n\times r}$, $b\in\F_2^n$ such that
\begin{align}
H^{\otimes n}\ket{G}=2^{-r/2}\sum_{x\in\F_2^r} i^{x^TQx}\ket{\omega_1(Ax+b)}.\label{eq:learn-low-rank-gs-1}
\end{align}
Measuring \cref{eq:learn-low-rank-gs-1} in computational basis outputs $\omega_1(Ax+b)$ for uniformly random $x\in\F_2^n$, so one can uniformly sample $k$ vectors in $\spn(A)$ with $k+1$ measurements.

Consider the algorithm that iteratively samples vectors $v^{(1)},\dots,v^{(k)}$  from $\spn(A)$ and constructs $A^{(k)}=\begin{bmatrix}
v^{(1)} & v^{(2)} & ... & v^{(k)}
\end{bmatrix}$
until $\rank(A^{(k)})=\rank(A^{(k+s)})$ for $s=\ceil{\log_2\frac{1}{\delta}}$.
Let $r=\rank(A^{(k)})$. Note that $k\le rs$, so we need at most $O(r\log\frac{1}{\delta})$ one-copy measurements. Then we bound the failure probability. Note that the probability of a random vector falling in $\spn(\mat{A}^{(k)})$ is $1/2^{r-\rank(A^{(k)})}$. 
Let $F$ denote the event when the algorithm fails -- that is,
$v^{(k+j)}\in\spn(A^{(k)})$ for all $j\in[s]$ and $\rank(A^{(k)})<r$.
Then
\begin{align}
p(F)=\left(\frac{1}{2^{r-\rank(A^{(k)})}}\right)^{s}\le\frac{1}{2^s}
% \le\frac{1}{2^{\log_2(1/\delta)}}
\le\delta.
\end{align}

Next, we construct the circuit $U$.
Assume WLOG that $A=\begin{bmatrix}A_1\\A_2\end{bmatrix}$ for invertible $A_1\in\F_2^{n\times n}$. Then
\begin{align}
\cref{eq:learn-low-rank-gs-1}
&=2^{-r/2}\sum_{x\in\F_2^r} i^{x^TQx}\ket{\omega_1(A_1x+b_1)}\otimes\ket{\omega_1(A_2x+b_2)}\\
&=2^{-r/2}\sum_{x\in\F_2^r} i^{x^TQx}U\ket{\omega_1(A_1x+b_1)}\otimes\ket{0^{n-r}},\label{eq:learn-low-rank-gs-2}
\end{align}
where $U$ is a circuit that computes $\omega_1(A_2x+b_2)$ in the last $n-r$ qubits using $\omega_1(A_1x+b_1)$ from the first $r$ qubits. In particular, for any $y\in\F_2^r$, $z\in\F_2^{n-r}$,
\begin{align}
U\ket{y}\otimes\ket{z}=\ket{y}\otimes\ket{z\oplus\omega_1(A_2A_1^\#(y+b_1)+b_2)}.
\end{align}
Note that $U$ can be expressed using only CNOT and X gates and $U=U^{-1}$.
Also, $U$ commutes with $i^{x^TQx}$, so \cref{eq:learn-low-rank-gs-2} gives $UH^{\otimes n}\ket{G}=\ket{\phi}\otimes\ket{0^{n-r}}$, where
\begin{align}
\ket{\phi}=2^{-r/2}\sum_{x\in\F_2^r} i^{x^TQx}\ket{\omega_1(A_1x+b_1)}.
\end{align}
Note that $\ket\phi$ must be a stabilizer state because $H$ and $U$ are Clifford gates.
\end{proof}

Then we have the following theorem to learn graph states with low-rank adjacency matrices efficiently.
\begin{theorem}
\label{thm:learn-low-rank-gs}
An unknown graph state whose adjacency matrix is of rank $r$ can be learned using $O(r^2+r\log\frac{1}{\delta})$ one-copy measurements, or $O(r\log\frac{1}{\delta})$ two-copy measurements, with probability at least $1-\delta$.
\end{theorem}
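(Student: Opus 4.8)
The plan is to reduce the problem to learning a stabilizer state on only $r$ qubits and then invoke stabilizer-state tomography. First I would apply \cref{lem:learn-low-rank-gs-reduction}, which with $O(r\log\frac{1}{\delta})$ one-copy measurements produces, with probability at least $1-\delta/2$ (rescaling the constant in the lemma), the rank $r$ together with an explicit CNOT/X circuit $U$ satisfying $UH^{\otimes n}\ket{G}=\ket{\phi}\otimes\ket{0^{n-r}}$ for an unknown $r$-qubit stabilizer state $\ket{\phi}$. Since $U$ and $H^{\otimes n}$ are known Clifford operations that can be run on the device before measurement, every fresh copy of $\ket{G}$ can be turned into a copy of $\ket{\phi}$: apply $UH^{\otimes n}$, then measure the first $r$ qubits in whatever basis the tomography routine dictates and discard the trailing qubits, which are in $\ket{0^{n-r}}$. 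Hence the remaining task is to learn the $r$-qubit stabilizer state $\ket{\phi}$, from which $\ket{G}$ is recovered by recomposing with the known Cliffords.

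For the single-copy bound I would learn $\ket{\phi}$ by single-copy stabilizer-state tomography using $O(r^2)$ measurements~\cite{anshu2024survey}; concretely, after undoing the known affine map $x\mapsto\omega_1(A_1x+b_1)$ with a further CNOT/X circuit, $\ket{\phi}$ becomes a phased graph state $\ket{Q}$ with $Q\in\J_r$, and its $O(r^2)$ unknowns (the $\binom{r}{2}$ off-diagonal edge bits and the $r$ diagonal $\Z_4$ phases) are identified from Pauli-basis measurements with deterministic stabilizer-eigenvalue outcomes. For the two-copy bound I would instead learn $\ket{\phi}$ by Bell difference sampling, which identifies an $r$-qubit stabilizer state from $O(r)$ two-copy measurements with constant success probability~\cite{anshu2024survey}; repeating $O(\log\frac{1}{\delta})$ times and keeping a consistent hypothesis boosts the success probability to $1-\delta/2$ using $O(r\log\frac{1}{\delta})$ two-copy measurements.

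Once a description (e.g.\ a tableau) of $\ket{\phi}$ is in hand, I would recompose: because $U=U^{-1}$ and $H^{\otimes n}$ is self-inverse, $\ket{G}=H^{\otimes n}U(\ket{\phi}\otimes\ket{0^{n-r}})$, so pushing the explicitly known Cliffords $U$ and $H^{\otimes n}$ through the tableau of $\ket{\phi}\otimes\ket{0^{n-r}}$ and converting the result to graph-state form (via \cref{app:stab-to-graph}) yields the adjacency matrix of $G$; this is classical post-processing and uses no further measurements. Summing the two stages gives $O(r\log\frac{1}{\delta})+O(r^2)=O(r^2+r\log\frac{1}{\delta})$ one-copy measurements, or $O(r\log\frac{1}{\delta})$ two-copy measurements overall, and a union bound over the reduction and the tomography stage yields total success probability at least $1-\delta$.

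I expect the main obstacle to be the learning subroutine for $\ket{\phi}$, and in particular securing the $O(r^2)$ single-copy count \emph{without} an extra $\log$ factor on the $r^2$ term. This forces the tomography to use measurements whose outcomes are deterministic for the stabilizer state $\ket{\phi}$, so that each of the $O(r^2)$ entries of $Q$ is pinned down exactly rather than merely estimated; this keeps the per-entry learning failure-free and confines all of the $\delta$-dependence to the reduction step of \cref{lem:learn-low-rank-gs-reduction}, matching the claimed complexities.
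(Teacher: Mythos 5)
Your proposal is correct and follows essentially the same route as the paper, whose proof is a one-line combination of \cref{lem:learn-low-rank-gs-reduction} with known stabilizer-state learning on the residual $r$-qubit state $\ket{\phi}$ --- $O(r^2)$ single-copy measurements via \cite{aaronson2004improved} and $O(r)$ two-copy measurements via \cite{montanaro2017learning}. Your additional details (applying the known Cliffords $UH^{\otimes n}$ to each fresh copy, the recomposition step, and the $\delta$-budget split between the reduction and the tomography stage) are correct elaborations of what the paper leaves implicit.
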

\begin{proof}
Combining \cref{lem:learn-low-rank-gs-reduction} and the known techniques to learn $r$-qubit stabilizer state using $O(r^2)$ one-copy measurements~\cite{aaronson2004improved} or $O(r)$ two-copy measurements~\cite{montanaro2017learning} gives the theorem.
\end{proof}

\section{Conclusion}
In this work, we present reductions from several quantum circuit simulation problems, including graph state simulation, Clifford circuit simulation, and Clifford+T circuit simulation, to the standard linear algebra problem known as LDL decomposition. In particular, we represent the problem as a modified adjacency matrix of some graph that inherits the structure of the original problem, such that the LDL decomposition of the matrix gives the simulation result. This relation is shown via an explicit formula for amplitudes of Hadamard-transformed phased graph state in \cref{thm:amp-formula}.
This result extends the application of standard linear algebra algorithms to the field of quantum circuit simulation. Based on this idea, we apply the tree-decomposition-based fast LDL decomposition~\cite{solomonik2025fast} and derive algorithms for various simulation problems. We analyze the complexities of our algorithms and show improvements over state of arts in certain cases. 
Moreover, we show that our amplitudes formula has wider applications beyond simulation. In particular, we give a new characterization of LC-equivalent graph state based on the Gauss-Jordan process and discuss its implications for bounding the diameter of each graph state orbit. Also, we design a protocol to efficiently learn graph states with low-rank adjacency matrices based on the amplitudes formula.
We believe that our work provides a new perspective on Clifford circuit simulation and opens up new possibilities for applying existing linear algebra techniques to quantum circuit simulation.

\section*{Acknowledgments}
We are grateful to Ryan Mann for helpful comments on earlier version of the manuscript.
Part of this work was conducted while one of the authors (E.S.) was visiting the Simons Institute for the Theory of Computing.
This work was supported by the National Science Foundation QLCI–HQAN program under Award No. 2016136.

\bibliographystyle{siamplain}
\bibliography{references}

\appendix
\section{Reducing stabilizer states to graph states}
\label{app:stab-to-graph}
It is widely known that stabilizer states can always be represented as a graph state with local Clifford gates~\cite{van2004graphical}. In this section, we list some useful operations on tableaux in \cref{alg:tableau-operations}. Based on these operations, we provide \cref{alg:stab-to-graph}, which converts a stabilizer state in tableau representation into an XY-basis graph state up to bit flips. It takes as input the lower $n$ rows of a tableau representation of a stabilizer state, and outputs a graph and three layers of Clifford gates, namely $\mat{S}^\dagger$, $\mat{H}$ and $\mat{X}$.

\begin{algorithm}[h]
\caption{Basic tableau operations}
\label{alg:tableau-operations}
\begin{algorithmic}
\Procedure{row\_add}{$\mat{\tilde{Z}},\mat{\tilde{X}},\mat{r},i,j$}\Comment{add j'th row to i'th row}
    \For{k=1,\dots,n}
        \State $\tilde{z}_{ik}:=(\tilde{z}_{ik}+\tilde{z}_{jk})\bmod2$
        \State $\tilde{x}_{ik}:=(\tilde{x}_{ik}+\tilde{x}_{jk})\bmod2$
    \EndFor
    \State $r_i:=\left(r_i+r_j+\floor{\left(\sum_{k=1}^{n}f(g(\tilde{z}_{ik},\tilde{x}_{ik}),g(\tilde{z}_{jk},\tilde{x}_{jk})\right)/2}\right)\bmod2$, where
    \begin{align*}
        g(z,x)&=x+3z-2zx\\
        f(p_1,p_2)
        &=\begin{cases}
        0 & p_1=0 \text{ or } p_2=0\\
        (p_2-p_1) \bmod 3& \otherwise
        \end{cases}
    \end{align*}
    \Comment{$g(z,x)$ represents $\mat{I},\mat{X},\mat{Y},\mat{Z}$ as $0,1,2,3$ respectively}
\EndProcedure

\Procedure{apply\_H}{$\mat{\tilde{Z}},\mat{\tilde{X}},\mat{r},i$}
    
    \State Swap the $i$'th column of $\mat{\tilde{Z}}$ and the $i$'th column of $\mat{\tilde{X}}$
    \For{j=1,\dots,n}
        \State $r_j:=(r_j+\tilde{z}_{ji}\tilde{x}_{ji})\bmod2$
    \EndFor
\EndProcedure

\Procedure{apply\_S}{$\mat{\tilde{Z}},\mat{\tilde{X}},\mat{r},i$}
    \For{j=1,\dots,n}
        \State $r_j:=(r_j+\tilde{z}_{ji}\tilde{x}_{ji})\bmod2$
        \State $\tilde{z}_{ji}:=(\tilde{z}_{ji}+\tilde{x}_{ji})\bmod2$
    \EndFor
\EndProcedure

\Procedure{apply\_CZ}{$\mat{\tilde{Z}},\mat{\tilde{X}},\mat{r},a,b$}
    \For{j=1,\dots,n}
        \State $r_j:=(r_j+\tilde{x}_{ja}\tilde{x}_{jb}(\tilde{z}_{ja}+\tilde{z}_{jb}))\bmod2$
        \State $\tilde{z}_{ja}:=(\tilde{z}_{ja}+\tilde{x}_{jb})\bmod2$
        \State $\tilde{z}_{jb}:=(\tilde{z}_{jb}+\tilde{x}_{ja})\bmod2$
    \EndFor
\EndProcedure
\end{algorithmic}
\end{algorithm}

\begin{algorithm}[h]
\caption{$[G,\mat{a},\mat{b}, \mat{c}]=$ stabilizers\_to\_graph($\mat{\tilde{Z}},\mat{\tilde{X}},\mat{r}$)}\label{alg:stab-to-graph}
\begin{algorithmic}
\Require $\mat{\tilde{Z}},\mat{\tilde{X}}\in\binaryset^{n\times n}$, $\mat{r}\in\binaryset^n$, $(\mat{\tilde{Z}},\mat{\tilde{X}},\mat{r})$ represents stabilizers of $\ket{\psi}$
\Ensure $\mat{a},\mat{b},\mat{c}\in\binaryset^n$ such that $\ket{\psi}
% =\left(\bigotimes_{j=1}^n \mat{H}^{a_j}(\mat{S}^\dagger)^{b_j}\mat{Z}^{c_j}\right)\ket{G}
=\left(\bigotimes_{j=1}^n \mat{X}^{c_j}\mat{H}^{a_j}(\mat{S}^\dagger)^{b_j}\right)\ket{G}
$

\State Initialize $\mat{a},\mat{b}$ as zero vectors
\State Initialize permutation $\mat{p}$ as $(1,\dots,n)$

\Procedure{column\_swap}{$i,j$}
    \State Swap the $i$'th and $j$'th entry of $\mat{a},\mat{b},\mat{r},\mat{p}$ respectively
    \State Swap the $i$'th and $j$'th column of $\mat{\tilde{Z}},\mat{\tilde{X}}$ respectively
\EndProcedure

\For{$i=1,\dots,n$}\Comment{forward elimination}
    \If{$\mat{\tilde{X}[i,i:]}$ are all zeros}
        \State Find the index $k$ of the first nonzero in $\mat{\tilde{Z}}[i,i:]$
        \State \Call{apply\_H}{$\mat{\tilde{Z}},\mat{\tilde{X}},\mat{r},i+k-1$}
        \State $a_i:=a_i+1$
    \Else
        \State Find the index $k$ of the first nonzero in $\mat{\tilde{X}}[i,i:]$
    \EndIf
    \State \Call{column\_swap}{$i,i+k-1$}
    \For{$j=i+1,\dots,n$}
        \State \Call{row\_add}{$\mat{\tilde{Z}},\mat{\tilde{X}},\mat{r},j,i$}
    \EndFor
\EndFor

\For{$i=n,\dots,1$}\Comment{backward substitution}
    \For{$j=1,\dots,i$}
        \If{$\tilde{x}_{ij}\neq0$}
            \State \Call{row\_add}{$\mat{\tilde{Z}},\mat{\tilde{X}},\mat{r},j,i$}
        \EndIf
    \EndFor
\EndFor

\For{$i=1,\dots,n$}\Comment{zero out diagonals of $\mat{\tilde{Z}}$}
    \If{$\tilde{z}_{ii}\neq0$}
        \State \Call{apply\_S}{$\mat{\tilde{Z}},\mat{\tilde{X}},\mat{r},i$}
        \State $b_i:=b_i+1$
    \EndIf
\EndFor

\State Compute the inverse permutation $\mat{p}^{-1}$ of $\mat{p}$
\State Permute $\mat{a},\mat{b},\mat{r}$ with $\mat{p}^{-1}$
\State Permute both rows and columns of $\mat{\tilde{Z}}$ with $\mat{p}^{-1}$
\State Let $\mat{c}:=\mat{r}$ and $G$ be the graph represented by adjacency matrix $\mat{\tilde{Z}}$
\State \Return $G$, $\mat{a},\mat{b},\mat{c}$
\end{algorithmic}
\end{algorithm}

\section{A canonical form of one-qubit Clifford gate}\label{sec:one-qubit-clifford}
One-qubit Clifford gates can be generated by $\set{\check{Z},\check{X}}$. Note that $\check{Z},\check{X}$ can be viewed as $-90$ degree rotations around z- or x-axis in Bloch sphere. Therefore, any such rotation can be represented in form $\check{Z}^a\check{X}^b\check{Z}^c$ for $a,b,c\in\Z_4$ using Euler angles of the rotation as presented in the following lemma.
\begin{lemma}\label{lem:one-qubit-clifford}
Any one-qubit Clifford gate can be represented as $\check{Z}^a\check{X}^b\check{Z}^c$ for some $a,b,c\in\Z_4$.
\end{lemma}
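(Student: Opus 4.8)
The plan is to exploit the standard correspondence between single-qubit Clifford gates (modulo global phase) and rotations of the Bloch sphere. Concretely, sending a Clifford unitary $U$ to the conjugation action $\mat{P}\mapsto U\mat{P}U^\dagger$ on the Pauli operators defines a homomorphism onto the group of rotations permuting the six signed coordinate axes $\{\pm x,\pm y,\pm z\}$; this is the chiral octahedral group, of order $24$, and its kernel consists of exactly the global phases (any $U$ commuting with all Paulis is a scalar). Under this map $\check{Z}$ and $\check{X}$ become the $-90^\circ$ rotations about the $z$- and $x$-axes, and, as already noted at the start of this appendix, they generate the whole group. Thus it suffices to show that every element $R$ of the octahedral rotation group can be written as $\check{Z}^a\check{X}^b\check{Z}^c$ with $a,b,c\in\Z_4$, which is exactly a $z$-$x$-$z$ Euler decomposition with angles restricted to multiples of $90^\circ$.

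First I would carry out the Euler decomposition by tracking the image of $+z$. Since the orbit of $+z$ is all six signed axes, its stabilizer in the octahedral group has order $24/6=4$ and is precisely the cyclic group $\langle\check{Z}\rangle$ of $z$-axis rotations. Next I would check that $\check{Z}^a\check{X}^b$ can send $+z$ to any of the six directions: the four powers of $\check{X}$ send $+z$ to the four points $\{+z,+y,-z,-y\}$, and a subsequent $\check{Z}^a$ fixes $\pm z$ while rotating $\pm y$ through all of $\{\pm x,\pm y\}$, so the union of reachable images is all six axes.

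Given an arbitrary $R$, I would then choose $a,b$ so that $\check{Z}^a\check{X}^b$ agrees with $R$ on $+z$. The residual rotation $(\check{Z}^a\check{X}^b)^{-1}R$ fixes $+z$ and therefore lies in the stabilizer $\langle\check{Z}\rangle$, hence equals $\check{Z}^c$ for some $c\in\Z_4$. This yields $R=\check{Z}^a\check{X}^b\check{Z}^c$, and pulling back through the correspondence (whose kernel is only the phases) gives the claimed form up to a global phase, which is all that is required by the paper's conventions.

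The main obstacle I anticipate is bookkeeping rather than conceptual: I must verify cleanly that the Bloch-sphere homomorphism has kernel equal to the global phases (so that ``up to phase'' is the correct notion), that the stabilizer of $+z$ is exactly $\langle\check{Z}\rangle$ and no larger, and that the sign conventions for the $90^\circ$ rotations stay consistent throughout the axis-chasing. As a fully elementary alternative that sidesteps the geometry, one could instead argue by finite enumeration: the $4^3=64$ products $\check{Z}^a\check{X}^b\check{Z}^c$ collapse, modulo phase, to at most the $24$ elements of the group, and a direct check that a suitable set of $24$ of them are pairwise inequivalent shows they exhaust it.
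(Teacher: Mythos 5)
Your proof is correct, but it takes a genuinely different route from the paper's. The paper argues purely algebraically: it writes an arbitrary Clifford as a word $\prod_i(\check{X}^{x_i}\check{Z}^{z_i})$ and gives a rewriting system of three relations (\cref{eq:xzx-zxz-1}--\cref{eq:xzx-zxz-3}) that converts any substring $\check{X}^a\check{Z}^b\check{X}^c$ into $\check{Z}^{a'}\check{X}^{b'}\check{Z}^{c'}$, inductively reducing the number of nonconsecutive blocks of $\check{X}$ until the word has the form $\check{Z}^a\check{X}^b\check{Z}^c$; this is constructive and yields an explicit procedure for computing $(a,b,c)$ from any given word. You instead pass to the Bloch-sphere representation and run an orbit--stabilizer (Euler-angle) argument: $\check{Z}^a\check{X}^b$ acts transitively on the six signed axes, the stabilizer of $+z$ in the order-$24$ chiral octahedral group is exactly $\langle\check{Z}\rangle$ by the counting $24/6=4$, and the kernel of the representation is the global phases. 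Your version is conceptually shorter and explains \emph{why} the normal form exists (it is the $z$-$x$-$z$ Euler decomposition restricted to the finite rotation group), at the cost of the kernel and sign bookkeeping you flag, and it gives existence rather than a word-rewriting algorithm. Two points of consistency worth noting: both proofs rest on the same unproved (standard) premise that $\check{Z},\check{X}$ generate the one-qubit Clifford group up to phase --- the paper assumes this implicitly when writing $U$ as such a word --- and both conclusions are inherently ``up to global phase,'' which matches the paper's convention; indeed the paper's relation $\check{X}\check{Z}\check{X}=\check{Z}\check{X}\check{Z}$ itself holds only up to the phase $e^{i\pi/4}$ (both sides are proportional to $H$), so your care in phrasing the result modulo phase is exactly right rather than a weakness.
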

\begin{proof}
Consider an arbitrary one-qubit Clifford gate $U=\prod_i(\check{X}^{x_i}\check{Z}^{z_i})$ for $x_i,z_i\in\set{0,1,2,3}$, which is a string of powers of $\check{Z}$ or $\check{X}$. For any sub-string of form $\check{X}^{a}\check{Z}^{b}\check{X}^{c}$, one can apply one of the equalities below to replace it with $\check{Z}^{a'}\check{X}^{b'}\check{Z}^{c'}$. 
\begin{align}
\check{X}\check{Z}^2\check{X}&=\check{Z}^2
\label{eq:xzx-zxz-1}\\
\check{X}^2\check{Z}^b\check{X}^2&=\check{Z}^{b-2},\;b\in\set{1,3}\label{eq:xzx-zxz-2}\\
% \check{X}\check{Z}^b\check{X}&=\check{Z}\check{X}^b\check{Z},\;b\in\set{1,3}\\
\check{X}^a\check{Z}^b\check{X}^c&=\check{Z}^c\check{X}^b\check{Z}^a,\;a,b,c\in\set{1,3}\label{eq:xzx-zxz-3}
\end{align}
More specifically,
\begin{itemize}
\item if any of $a,b,c$ is 0, then it is trivial;
\item if $b=2$, applying \cref{eq:xzx-zxz-1} for $a$ times gives $\check{Z}^{b}\check{X}^{c-a}$;
\item if $b\in\set{1,3}$ and either $a$ or $c$ is even, applying \cref{eq:xzx-zxz-2} for $a/2$ times (if $a$ is even) or $c/2$ times (if $c$ is even) gives $\check{Z}^{b-a}\check{X}^{c-a}$ or $\check{X}^{a-c}\check{Z}^{b-c}$ respectively;
\item if $a,b,c\in\set{1,3}$, applying \cref{eq:xzx-zxz-3} directly gives the desired form.
\end{itemize}
This process reduces the number of nonconsecutive $\check{X}$ in the string by one and we can repeat it until there is only one $\check{X}$ in the string, which has the form $\check{Z}^a\check{X}^b\check{Z}^c$ for $a,b,c\in\Z_4$.
\end{proof}

\end{document}